\documentclass[preprint,a4paper,12pt,number]{elsarticle}


\usepackage{latexsym}
\usepackage{amssymb}
\usepackage{amsmath}
\usepackage[normalem]{ulem}
\usepackage[all]{xy}

\newcommand{\ignore}[1]{} 

\newtheorem{theorem}{Theorem}
\newtheorem{lemma}[theorem]{Lemma}
\newtheorem{proposition}[theorem]{Proposition}
\newtheorem{corollary}[theorem]{Corollary}
\newdefinition{definition}[theorem]{Definition}
\newdefinition{example}[theorem]{Example}
\newproof{proof}{Proof}

\usepackage{color}

\newcommand{\blue}[1]{{\color{black} #1}}


\def\defemb#1#2{\expandafter\def\csname #1\endcsname
                              {\relax\ifmmode #2\else\hbox{$#2$}\fi}}
\defemb{cA}{{\cal A}}
\defemb{cB}{{\cal B}}
\defemb{cC}{{\cal C}}
\defemb{cD}{{\cal D}}
\defemb{cE}{{\cal E}}
\defemb{cF}{{\cal F}}
\defemb{cG}{{\cal G}}
\defemb{cH}{{\cal H}}
\defemb{cI}{{\cal I}}
\defemb{cJ}{{\cal J}}
\defemb{cL}{{\cal L}}
\defemb{cM}{{\cal M}}
\defemb{cN}{{\cal N}}
\defemb{cO}{{\cal O}}
\defemb{cP}{{\cal P}}
\defemb{cQ}{{\cal Q}}
\defemb{cR}{{\cal R}}
\defemb{cS}{{\cal S}}
\defemb{cT}{{\cal T}}
\defemb{cU}{{\cal U}}
\defemb{cV}{{\cal V}}
\defemb{cX}{{\cal X}}
\defemb{cZ}{{\cal Z}}


\newcommand{\pos}{\mathsf{{\cP}os}}

\newcommand{\var}{\mathsf{{\cV}ar}}

\newcommand{\dom}{\mathsf{{\cD}om}}

\newcommand{\sleq}{\leqslant}

\newcommand{\nil}{[\,]}

\newcommand{\Var}{\mathsf{{\cal V}ar}} 
\renewcommand{\emptyset}{\varnothing}
\renewcommand{\phi}{\varphi}

\newcommand{\toppos}{\epsilon} 
\newcommand{\ol}[1]{\overline{#1}}  

\newcommand{\id}{id}

\newcommand{\inn}{\stackrel{\mathsf{i}}{\rightarrow}}
\newcommand{\inns}{\stackrel{\mathsf{i}}{\rightarrow}\!\!{}^\ast}
\newcommand{\innm}{\stackrel{\mathsf{i}}{\rightarrow}\!\!{}^m}
\newcommand{\cinn}{\stackrel{\mathsf{c}}{\rightarrow}}
\newcommand{\cinns}{\stackrel{\mathsf{c}}{\rightarrow}\!\!{}^\ast}
\newcommand{\topr}{\stackrel{\mathsf{\epsilon}}{\rightarrow}}
\newcommand{\tops}{\stackrel{\mathsf{\epsilon}}{\rightarrow}\!\!{}^\ast}


\newcommand{\rh}{\rightharpoonup}
\newcommand{\lh}{\leftharpoondown}
\newcommand{\rlh}{\rightleftharpoons}

\newcommand{\hs}{\pi}

\newcommand{\range}{\mathsf{range}}


\def\res{\mathrel{\vert\grave{ }}}

\def \tuple#1{\langle #1 \rangle}

\long\def\comment#1{}

\def\cC{{\mathcal{C}}}
\def\cD{{\mathcal{D}}}
\def\cF{{\mathcal{F}}}

\def\cR{{\mathcal{R}}}
\def\cT{{\mathcal{T}}}
\def\cV{{\mathcal{V}}}
\def\tto{\twoheadrightarrow}

\def\Var{{\mathcal{V}ar}}

\newcommand{\pc}{\mbox{pcDCTRS}}
\newcommand{\pcs}{\mbox{pcDCTRSs}}

\def\interval#1#2{{#1,#2}}

\begin{document}

\begin{frontmatter}

\title{Reversible Computation in Term Rewriting\tnoteref{t1}}
\tnotetext[t1]{This work has been partially supported by the EU
  (FEDER) and the Spanish \emph{Ministerio de Econom\'{\i}a y
    Competitividad} (MINECO) under grants TIN2013-44742-C4-1-R and
  TIN2016-76843-C4-1-R, by the \emph{Generalitat Valenciana} under
  grant PROMETEO-II/2015/013 (SmartLogic), and by the COST Action
  IC1405 on Reversible Computation - extending
  horizons of computing.
  Adri\'an Palacios was partially supported by the EU (FEDER) and
  the Spanish \emph{Ayudas para contratos predoctorales para la
    formaci\'on de doctores} and \emph{Ayudas a la movilidad
    predoctoral para la realizaci\'on de estancias breves en centros
    de I+D}, MINECO (SEIDI), under FPI grants BES-2014-069749 and
  EEBB-I-16-11469.
  Part of this research was done while the second and third authors
  were visiting Nagoya University; they gratefully acknowledge their
  hospitality. \\
  \textcopyright ~2017. This manuscript version is made available under the
  CC-BY-NC-ND 4.0 license 
  \texttt{http://creativecommons.org/licenses/by-nc-nd/4.0/}
}
  \author[ngy]{Naoki Nishida}
  \ead{nishida@i.nagoya-u.ac.jp}
  \author[vlc]{Adri\'an Palacios} \ead{apalacios@dsic.upv.es}
  \author[vlc]{Germ\'an Vidal\corref{cor}} \ead{gvidal@dsic.upv.es}
  \address[ngy]{Graduate School of Informatics, Nagoya University\\
    Furo-cho, Chikusa-ku, 4648603 Nagoya, Japan}
  \address[vlc]{MiST, DSIC, Universitat Polit\`ecnica de Val\`encia\\
    Camino de Vera, s/n, 46022 Valencia, Spain}

\cortext[cor]{Corresponding author.}


\begin{abstract}
  Essentially, in a reversible programming language, for each forward
  computation from state $S$ to state $S'$, there exists a
  constructive method to go backwards from state
  $S'$ to state $S$.  Besides its theoretical interest, reversible
  computation is a fundamental concept which is relevant in many
  different areas like cellular automata, bidirectional program
  transformation, or quantum computing, to name a few.
  
  In this work, we focus on term rewriting, a computation model that
  underlies most rule-based programming languages. In general, term
  rewriting is not reversible, even for injective functions; namely,
  given a rewrite step $t_1 \to t_2$, we do not always have a
  decidable method to get $t_1$ from $t_2$.
  Here, we introduce a conservative extension of term rewriting that
  becomes reversible. 
  Furthermore, we also define two transformations, injectivization
  and inversion, to make a rewrite system reversible using standard
  term rewriting. We illustrate the usefulness of our transformations
  in the context of bidirectional program transformation.\\

  \noindent
  \emph{To appear in the Journal of Logical and Algebraic Methods in Programming}.
\end{abstract}

\begin{keyword}
term rewriting \sep reversible computation \sep program transformation
\end{keyword}

\end{frontmatter}

\section{Introduction}
\label{sec:intro}

The notion of reversible computation can be traced back to Landauer's
pioneering work \cite{Lan61}. Although Landauer was mainly concerned
with the energy consumption of erasing data in irreversible computing,
he also claimed that every computer can be made reversible by saving
the \emph{history} of the computation. However, as Landauer himself
pointed out, this would only postpone the problem of erasing the tape
of a reversible Turing machine before it could be reused.  Bennett
\cite{Ben73} improved the original proposal so that the computation
now ends with a tape that only contains the output of a computation
and the initial source, thus deleting all remaining ``garbage'' data,
though it performs twice the usual computation steps. More recently,
Bennett's result is extended in \cite{CP95} to nondeterministic Turing
machines, where it is also proved that transforming an irreversible
Turing machine into a reversible one can be done with a quadratic loss
of space.
We refer the interested reader to, e.g., \cite{Ben00,Fra05,Yok10} for
a high level account of the principles of reversible computation.

In the last decades, reversible computing and \emph{reversibilization}
(transforming an irreversible computation device into a reversible
one) have been the subject of intense research, giving rise to
successful applications in many different fields, e.g., cellular
automata \cite{Mor12}, where reversibility is an essential property,
bidirectional program transformation \cite{MHNHT07}, where
reversibility helps to automate the generation of inverse functions
(see Section~\ref{sec:applications}), reversible debugging
\cite{GLM14}, where one can go both forward and backward when seeking
the cause of an error, parallel discrete event simulation
\cite{SJBOQ15}, where reversible computation is used to undo the
effects of speculative computations made on a wrong assumption,
quantum computing \cite{Yam14}, where all computations should be
reversible, and so forth. 
The interested reader can find detailed surveys in the \emph{state of
  the art} reports of the different working groups of COST Action
IC1405 on Reversible Computation \cite{COST}. 

In this work, we introduce reversibility in the context of \emph{term
  rewriting} \cite{BN98,Terese03}, a computation model that underlies
most rule-based programming languages.
In contrast to other, more \emph{ad-hoc} approaches, we consider that
term rewriting is an excellent framework to rigorously define
reversible computation in a functional context and formally prove its
main properties. We expect our work to be useful in different
(sequential) contexts, like reversible debugging, parallel discrete
event simulation or bidirectional program transformation, to name a
few. In particular, Section~\ref{sec:applications} presents a first
approach to formalize bidirectional program transformation in our
setting.

To be more precise, we present a general and intuitive notion of
\emph{reversible} term rewriting by defining a Landauer
embedding. Given a rewrite system $\cR$ and its associated (standard)
rewrite relation $\to_\cR$, we define a reversible extension of
rewriting with two components: a forward relation $\mathrel{\rh_\cR}$
and a backward relation $\mathrel{\lh_\cR}$, such that
$\mathrel{\rh_\cR}$ is a conservative extension of $\to_\cR$ and,
moreover, $\mathrel{(\rh_\cR)^{-1}} \,=\, \mathrel{\lh_\cR}$.
We note that the inverse rewrite relation, $\mathrel{(\to_\cR)^{-1}}$,
is not an appropriate basis for ``reversible'' rewriting since we aim
at defining a technique to \emph{undo} a particular reduction.  In
other words, given a rewriting reduction $s \to^\ast_\cR t$, our
reversible relation aims at computing the term $s$ from $t$ and $\cR$
in a decidable and deterministic way, which is not possible using
$\mathrel{(\to_\cR)^{-1}}$ since it is generally non-deterministic,
non-confluent, and non-terminating, even for systems defining
injective functions (see Example~\ref{ex:rel}).  In contrast, our
backward relation $\lh_\cR$ is deterministic (thus confluent) and
terminating.
Moreover, our relation proceeds backwards step by step, i.e., the
number of reduction steps in $s \rh^\ast_\cR t$ and $t \lh^\ast_\cR s$
are the same.

In order to introduce a reversibilization transformation for rewrite
systems, we use a \emph{flattening} transformation so that the
reduction at top positions of terms suffices to get a normal form in
the transformed systems.
For instance, given the following rewrite system:
\[
\begin{array}{lrcll}
  & \mathsf{add}(\mathsf{0},y) & \to & y,~\\
  & \mathsf{add}(\mathsf{s}(x),y) &
  \to & \mathsf{s}(\mathsf{add}(x,y)) &  \\
\end{array}
\]
defining the addition on natural numbers built from constructors
$\mathsf{0}$ and $\mathsf{s}(~)$, we produce the following
\emph{flattened} (conditional) system: 
\[
\begin{array}{lrcll}
  \cR = \{ & \mathsf{add}(\mathsf{0},y) & \to & y,\\
  & \mathsf{add}(\mathsf{s}(x),y) & \to & \mathsf{s}(z) \Leftarrow
  \mathsf{add}(x,y)\tto z & \}\\
\end{array}
\]
(see Example~\ref{ex:basic} for more details).
This allows us to provide an improved notion of reversible rewriting
in which some information (namely, the positions where reduction
takes place) is not required anymore. This opens the door to
\emph{compile} the reversible extension of rewriting into the system
rules. Loosely speaking, given a system $\cR$, we produce new systems
$\cR_f$ and $\cR_b$ such that \emph{standard} rewriting in $\cR_f$,
i.e., $\to_{\cR_f}$, coincides with the forward reversible extension
$\rh_\cR$ in the original system, and analogously $\to_{\cR_b}$ is
equivalent to $\lh_\cR$. E.g., for the system $\cR$ above, we would
produce
\[
\begin{array}{l@{~}r@{~}c@{~}l@{~}l}
\cR_f = \{ & \mathsf{add}^\mathtt{i}(\mathsf{0},y) & \to &
\tuple{y,\beta_1},\\
& \mathsf{add}^\mathtt{i} (\mathsf{s}(x),y) & \to &
\tuple{\mathsf{s}(z),\beta_2(w)} \Leftarrow \mathsf{add}^\mathtt{i} (x,y)\tto
\tuple{z,w} & \} \\[2ex]
  \cR_b = \{ &\mathsf{add}^{-1}(y,\beta_1)& \to&
                 \tuple{\mathsf{0},y},\\
  & \mathsf{add}^{-1}(\mathsf{s}(z),\beta_2(w)) & \to &
  \tuple{\mathsf{s}(x),y} \Leftarrow \mathsf{add}^{-1}(z,w)\to
  \tuple{x,y} & \}
\end{array}
\]
where $\mathsf{add}^\mathtt{i}$ is an injective version of function
$\mathsf{add}$, $\mathsf{add}^{-1}$ is the inverse of
$\mathsf{add}^\mathtt{i}$, and $\beta_1,\beta_2$ are fresh symbols
introduced to label the rules of $\cR$.

In this work, we will mostly consider \emph{conditional} rewrite
systems, not only to have a more general notion of reversible
rewriting, but also to define a reversibilization technique for
unconditional rewrite systems, since the application of
\emph{flattening} (cf.\ Section~\ref{sec:irr}) may introduce
conditions in a system that is originally unconditional, as
illustrated above.

This paper is an extended version of \cite{NPV16}. In contrast to
\cite{NPV16}, our current paper includes the proofs of technical
results, the reversible extension of term rewriting is introduced
first in the unconditional case (which is simpler and more intuitive),
and presents an improved injectivization transformation when the
system includes injective functions. Furthermore, a prototype
implementation of the reversibilization technique is publicly
available from \texttt{http://kaz.dsic.upv.es/rev-rewriting.html}.
 
The paper is organized as follows. After introducing some
preliminaries in Section~\ref{sec:prelim}, we present our approach to
reversible term rewriting in Section~\ref{sec:rr}.
Section~\ref{sec:irr} introduces the class of \emph{pure constructor}
systems where all reductions take place at topmost positions, so that
storing this information in reversible rewrite steps becomes
unnecessary. Then, Section~\ref{sec:transf} presents injectivization
and inversion transformations in order to make a rewrite system
reversible with standard rewriting. Here, we also present an
improvement of the transformation for injective functions. The
usefulness of these transformations is illustrated in
Section~\ref{sec:applications}.  Finally, Section~\ref{sec:relwork}
discusses some related work and Section~\ref{sec:conclusion} concludes
and points out some ideas for future research.

\section{Preliminaries}
\label{sec:prelim}

We assume familiarity with basic concepts of term rewriting. We refer
the reader to, e.g., \cite{BN98} and \cite{Terese03} 
for further details.

\subsection{Terms and Substitutions}

A \emph{signature} $\cF$ is a set of ranked function symbols.  Given a set of
variables $\cV$ with $\cF\cap\cV=\emptyset$, we denote the domain of
\emph{terms} by $\cT(\cF,\cV)$.  We use $\mathsf{f},\mathsf{g},\ldots$
to denote functions and $x,y,\ldots$ to denote variables.
Positions are used to address the nodes of a term viewed as a tree. A
\emph{position} $p$ in a term $t$, in symbols $p\in\pos(t)$, is
represented by a finite sequence of natural numbers, where $\toppos$
denotes the root position.
We let $t|_p$ denote the \emph{subterm} of $t$ at position $p$ and
$t[s]_p$ the result of \emph{replacing the subterm} $t|_p$ by the term
$s$.
$\Var(t)$ denotes the set of variables appearing in $t$.  We also let
$\Var(t_1,\ldots,t_n)$ denote $\Var(t_1)\cup\cdots\cup\Var(t_n)$.  A
term $t$ is \emph{ground} if $\Var(t) = \emptyset$.

A \emph{substitution} $\sigma : \cV \mapsto \cT(\cF,\cV)$ is a mapping
from variables to terms such that $\dom(\sigma) = \{x \in \cV \mid x
\neq \sigma(x)\}$ is its domain.  
A substitution $\sigma$ is \emph{ground} if $x\sigma$ is ground for
all $x\in\dom(\sigma)$.
Substitutions are extended to morphisms from $\cT(\cF,\cV)$ to
$\cT(\cF,\cV)$ in the natural way.  We denote the application of a
substitution $\sigma$ to a term $t$ by $t\sigma$ rather than
$\sigma(t)$. The identity substitution is denoted by $\id$.
We let ``$\circ$'' denote the composition of substitutions, i.e.,
$\sigma\circ\theta(x) = (x\theta)\sigma = x\theta\sigma$.
The \emph{restriction} $\theta\!\res_V$ of a substitution $\theta$ to
a set of variables $V$ is defined as follows: $x\theta\!\res_{V} =
x\theta$ if $x\in V$ and $x\theta\!\res_V = x$ otherwise. 

\subsection{Term Rewriting Systems}

A set of rewrite rules $l \to r$ such that $l$ is a nonvariable term
and $r$ is a term whose variables appear in $l$ is called a \emph{term
  rewriting system} (TRS for short); terms $l$ and $r$ are called the
left-hand side and the right-hand side of the rule, respectively. We
restrict ourselves to finite signatures and TRSs.  Given a TRS $\cR$
over a signature $\cF$, the \emph{defined} symbols $\cD_\cR$ are the
root symbols of the left-hand sides of the rules and the
\emph{constructors} are $\cC_\cR = \cF \;\backslash\; \cD_\cR$.
\emph{Constructor terms} of $\cR$ are terms over $\cC_\cR$ and $\cV$,
denoted by $\cT(\cC_\cR,\cV)$.  We sometimes omit $\cR$ from $\cD_\cR$
and $\cC_\cR$ if it is clear from the context.  A substitution
$\sigma$ is a \emph{constructor substitution} (of $\cR$) if $x\sigma
\in \cT(\cC_\cR,\cV)$ for all variables $x$.
 
For a TRS $\cR$, we define the associated rewrite relation $\to_\cR$
as the smallest binary relation on terms satisfying the following: given terms
$s,t\in\cT(\cF,\cV)$, we have $s \to_\cR t$ iff there exist a position
$p$ in $s$, a rewrite rule $l \to r \in \cR$, and a substitution
$\sigma$ such that $s|_p = l\sigma$ and $t = s[r\sigma]_p$; the rewrite
step is sometimes denoted by $s \to_{p,l\to r} t$ to make explicit the
position and rule used in this step.  
The instantiated left-hand side $l\sigma$ is called a \emph{redex}.
A term $s$ is called \emph{irreducible} or in \emph{normal form}
with respect to a TRS $\cR$ if there is no term $t$ with $s \to_\cR t$.  
A substitution is called \emph{normalized} with respect to $\cR$ if every
variable in the domain is replaced by a normal form with respect to $\cR$.  We
sometimes omit ``with respect to $\cR$'' if it is clear from the context.  
A \emph{derivation} is a (possibly empty) sequence of rewrite steps.
Given a binary relation $\to$, we denote by $\rightarrow^{\ast}$ its
reflexive and transitive closure, i.e., $s \to^\ast_\cR t$ means that
$s$ can be reduced to $t$ in $\cR$ in zero or more steps; we also use
$s \to^n_\cR t$ to denote that $s$ can be reduced to $t$ in exactly
$n$ steps.

We further assume that rewrite rules are labelled, i.e., given a TRS
$\cR$, we denote by $\beta:l\to r$ a rewrite rule with label
$\beta$. Labels are unique in a TRS.  Also, to relate label $\beta$ to
fixed variables, we consider that the variables of the rewrite rules
are not renamed\footnote{This will become useful in the next section
  where the reversible extension of rewriting keeps a ``history'' of a
  computation in the form of a list of terms $\beta(p,\sigma)$, and we
  want the domain of $\sigma$ to be a subset of the left-hand side of
  the rule labelled with $\beta$. } and that the reduced terms are
always ground. Equivalently, one could require terms to be variable
disjoint with the variables of the rewrite system, but we require
groundness for simplicity.
We often write $s\to_{p,\beta} t$ instead of $s \to_{p,l\to r} t$ if
rule $l\to r$ is labeled with $\beta$.

\subsection{Conditional Term Rewrite Systems}

In this paper, we also consider \emph{conditional} term rewrite
systems (CTRSs); namely oriented 3-CTRSs, i.e., CTRSs where extra
variables are allowed as long as $\Var(r)\subseteq\Var(l)\cup\Var(C)$
for any rule $l\to r\Leftarrow C$ \cite{MH94}.
In \emph{oriented} CTRSs, a conditional rule $l\to r \Leftarrow C$ has the
form $l\to r \Leftarrow s_1\tto t_1,\ldots,s_n\tto t_n$, where each
oriented equation $s_i\tto t_i$ is interpreted as reachability
($\to_\cR^\ast$). 
In the following, we denote by $\ol{o_n}$ a sequence of elements
$o_1,\ldots,o_n$ for some $n$. We also write $\ol{o_\interval{i}{j}}$
for the sequence $o_i,\ldots,o_j$ when $i\leq j$ (and the empty
sequence otherwise).  We write $\ol{o}$ when the number of
elements is not relevant.  In addition, we denote a condition
$o_1\tto o'_1,\ldots,o_n\tto o'_n$ by $\ol{o_n\tto o'_n}$.

As in the unconditional case, we consider that rules are labelled and
that labels are unique in a CTRS. And, again, to relate label $\beta$
to fixed variables, we consider that the variables of the conditional
rewrite rules are not renamed and that the reduced terms are always
ground.

For a CTRS $\cR$, the associated rewrite relation $\to_\cR$ is defined
as the smallest binary relation satisfying the following: given
ground terms $s,t\in\cT(\cF)$, we have $s \to_\cR t$ iff there exist
a position $p$ in $s$, a rewrite rule $l \to r\Leftarrow \ol{s_n\tto
  t_n} \in \cR$, and a ground substitution $\sigma$ such that $s|_p
= l\sigma$, $s_i\sigma \to^\ast_\cR t_i\sigma$ for all $i=1,\ldots,n$,
and $t = s[r\sigma]_p$.

In order to simplify the presentation, we only consider
\emph{deterministic} CTRSs (DCTRSs), i.e., oriented 3-CTRSs where, for
each rule $l\to r \Leftarrow \ol{s_n\tto t_n}$, we have
$\var(s_i)\subseteq\var(l,\ol{t_{i-1}})$ for all $i=1,\ldots,n$
(see Section~\ref{sec:ctrs} for a justification of this requirement
and how it could be relaxed to arbitrary 3-CTRSs).
Intuitively speaking, the use of DCTRs allows us to compute the
bindings for the variables in the condition of a rule in a
deterministic way. E.g., given a ground term $s$ and a rule $\beta:
l\to r \Leftarrow \ol{s_n\tto t_n}$ with $s|_p = l\theta$, we have
that $s_1\theta$ is ground. Therefore, one can reduce $s_1\theta$ to
some term $s'_1$ such that $s'_1$ is an instance of $t_1\theta$ with
some ground substitution $\theta_1$. Now, we have that
$s_2\theta\theta_1$ is ground and we can reduce $s_2\theta\theta_1$ to
some term $s'_2$ such that $s'_2$ is an instance of
$t_2\theta\theta_1$ with some ground substitution $\theta_2$, and so
forth.  If all equations in the condition hold using
$\theta_1,\ldots,\theta_n$, we have that $s \to_{p,\beta}
s[r\sigma]_p$ with $\sigma=\theta\theta_1\ldots\theta_n$.

\begin{example} \label{ex:dctrs} 
  Consider the following DCTRS $\cR$ that defines the function
  $\mathsf{double}$ that doubles the value of its argument when it is
  an even natural number:
  \[
  \begin{array}{l@{~}r@{~}c@{~}l@{}l@{~}r@{~}c@{~}l}
   \beta_1: & \mathsf{add}(\mathsf{0},y) & \to & y  &
    \beta_4: & \mathsf{even}(\mathsf{0}) & \to & \mathsf{true}\\
    \beta_2: &\mathsf{add}(\mathsf{s}(x),y) & \to &
    \mathsf{s}(\mathsf{add}(x,y)) &
    \beta_5: & \mathsf{even}(\mathsf{s}(\mathsf{s}(x))) & \to & \mathsf{even}(x)\\
    \beta_3: & \mathsf{double}(x) & \to & \mathsf{add}(x,x)
    \Leftarrow \mathsf{even}(x)\tto\mathsf{true}\\
  \end{array}
  \]
  Given the term
  $\mathsf{double}(\mathsf{s}(\mathsf{s}(\mathsf{0})))$ we have, for
  instance, the following derivation:
  \[
  \begin{array}{l@{}l@{}lll}
    \mathsf{double}(\mathsf{s}(\mathsf{s}(\mathsf{0}))) &
    \to_{\toppos,\beta_3} &
    \mathsf{add}(\mathsf{s}(\mathsf{s}(\mathsf{0})),\mathsf{s}(\mathsf{s}(\mathsf{0})))
    & \mbox{since}~\mathsf{even}(\mathsf{s}(\mathsf{s}(\mathsf{0})))\to_\cR^\ast
    \mathsf{true} \\
    &&& \mbox{with}~\sigma =
    \{x\mapsto\mathsf{s}(\mathsf{s}(\mathsf{0}))\} \\
    & \to_{\toppos,\beta_2} &
    \mathsf{s}(\mathsf{add}(\mathsf{s}(\mathsf{0}),\mathsf{s}(\mathsf{s}(\mathsf{0}))))
    & \mbox{with}~\sigma =     \{x\mapsto\mathsf{s}(\mathsf{0}),~y\mapsto\mathsf{s}(\mathsf{s}(\mathsf{0}))\}\\
    & \to_{1,\beta_2} &
    \mathsf{s}(\mathsf{s}(\mathsf{add}(\mathsf{0},\mathsf{s}(\mathsf{s}(\mathsf{0})))))
    & \mbox{with}~
    \sigma=\{x\mapsto\mathsf{0},y\mapsto\mathsf{s}(\mathsf{s}(\mathsf{0}))\}
    \\
    & \to_{1.1,\beta_1} &
    \mathsf{s}(\mathsf{s}(\mathsf{s}(\mathsf{s}(\mathsf{0})))) &
    \mbox{with}~ \sigma = \{y\mapsto \mathsf{s}(\mathsf{s}(\mathsf{0}))\}
  \end{array}
  \]
\end{example}

\section{Reversible Term Rewriting} \label{sec:rr}

In this section, we present a conservative extension of the rewrite
relation which becomes reversible. 
In the following, we use $\rh_\cR$ to denote our \emph{reversible}
(forward) term rewrite relation, and $\lh_\cR$ to denote its
application in the reverse (backward) direction. Note that, in
principle, we do not require $\lh_\cR \:=\: \rh_\cR^{-1}$, i.e., we
provide independent (constructive) definitions for each
relation. Nonetheless, we will prove that $\lh_\cR \:=\: \rh_\cR^{-1}$
indeed holds (cf.\
Theorems~\ref{th:unconditional}~and~\ref{th:conditional}).
In some approaches to reversible computing, both forward and backward
relations should be deterministic. Here, we will only require
deterministic \emph{backward} steps, while forward steps might be
non-deterministic, as it is often the case in term rewriting.

\subsection{Unconditional Term Rewrite Systems} \label{sec:trs}

We start with unconditional TRSs since it is conceptually simpler and
thus will help the reader to better understand the key ingredients of
our approach. In the next section, we will consider the more general
case of DCTRSs.

Given a TRS $\cR$, reversible rewriting is defined on pairs
$\tuple{t,\hs}$, where $t$ is a ground term and $\hs$ is a trace (the
``history'' of the computation so far). Here, a \emph{trace} in $\cR$
is a list of \emph{trace terms} of the form $\beta(p,\sigma)$ such
that $\beta$ is a label for some rule $l\to r\in\cR$, $p$ is a
position, and $\sigma$ is a substitution with
$\dom(\sigma)=\var(l)\backslash\var(r)$ which will record the bindings
of erased variables when $\var(l)\backslash\var(r)\neq\emptyset$ (and
$\sigma=\id$ if $\var(l)\backslash\var(r)=\emptyset$).\footnote{Note
  that if a rule $l\to r$ is non-erasing, i.e., $\var(l) = \var(r)$,
  then $\sigma=\id$.} Our trace terms have some similarities with
\emph{proof terms} \cite{Terese03}. However, proof terms do not store
the bindings of erased variables (and, to the best of our knowledge,
they are only defined for unconditional TRSs, while we use trace terms
both for unconditional and conditional TRSs).

Our reversible term rewriting relation is only defined on \emph{safe}
pairs:

\begin{definition} 
  Let $\cR$ be a TRS. The pair $\tuple{s,\hs}$ is \emph{safe} in $\cR$
  iff, for all $\beta(p,\sigma)$ in $\hs$, $\sigma$ is a ground
  substitution with $\dom(\sigma)=\var(l)\backslash\var(r)$ and
  $\beta:l\to r\in\cR$.
\end{definition}
In the following, we often omit $\cR$ when referring to traces and
safe pairs if the underlying TRS is clear from the context.

Safety is not necessary when applying a forward reduction step, but
will become essential for the backward relation $\lh_\cR$ to be
correct.  E.g., all traces that come from the forward reduction of
some initial pair with an empty trace will be safe (see below).
Reversible rewriting is then introduced as follows:

\begin{definition} 
  Let $\cR$ be a TRS. A reversible rewrite relation $\rh_\cR$ is
  defined on safe pairs $\tuple{t,\hs}$, where $t$ is a ground term and
  $\hs$ is a trace in $\cR$. The reversible rewrite relation extends
  standard rewriting as follows:\footnote{In the following, we
    consider the usual infix notation for lists where $\nil$ is the
    empty list and $x:xs$ is a list with head $x$ and tail $xs$.}
  \[
  \tuple{s,\hs} \rh_\cR \tuple{t,\beta(p,\sigma'):\hs}
  \]
  iff there exist a position $p\in\pos(s)$, a rewrite rule $\beta: l
  \to r \in \cR$, and a ground substitution $\sigma$ such that $s|_p =
  l\sigma$, $t = s[r\sigma]_p$, and $\sigma' =
  \sigma\!\!\res_{\var(l)\backslash\var(r)}$.
  The reverse relation, $\lh_\cR$, is then defined as follows:
  \[
  \tuple{t,\beta(p,\sigma'):\hs} \lh_\cR \tuple{s,\hs} 
  \]
  iff $\tuple{t,\beta(p,\sigma'):\hs}$ is a safe pair in $\cR$ and
  there exist a ground substitution $\theta$ and a rule $\beta:l\to
  r\in\cR$ such that $\dom(\theta)=\var(r)$, $t|_p = r\theta$ and $s =
  t[l\theta\sigma']_p$. Note that $\theta\sigma' = \sigma'\theta =
  \theta\cup\sigma'$, where $\cup$ is the union of substitutions,
  since $\dom(\theta)=\var(r)$,
  $\dom(\sigma')=(\var(l)\backslash\var(r))$ and both substitutions
  are ground, so $\dom(\theta)\cap\dom(\sigma')=\emptyset$.
\end{definition}
We denote the union of both relations $\rh_\cR\cup\lh_\cR$ by
$\rlh_\cR$.

\begin{example} \label{ex:add} Let us consider the following TRS $\cR$
  defining the addition on natural numbers built from $\mathsf{0}$ and
  $\mathsf{s}(~)$, and the function $\mathsf{fst}$ that returns its
  first argument:
  \[
  \begin{array}{lrcl@{~~~~~~~~~~}lrcl}
   \beta_1: & \mathsf{add}(\mathsf{0},y) & \to & y  &
   \beta_3: &\mathsf{fst}(x,y) & \to & x \\
    \beta_2: &\mathsf{add}(\mathsf{s}(x),y) & \to &
    \mathsf{s}(\mathsf{add}(x,y))\\
  \end{array}
  \]
  Given the term $\mathsf{fst}(
  \mathsf{add}(\mathsf{s}(\mathsf{0}),\mathsf{0}), \mathsf{0})$, we
  have, for instance, the following reversible (forward) derivation:
  \[
  \begin{array}{lll}
  \tuple{\mathsf{fst}( \mathsf{add}(\mathsf{s}(\mathsf{0}),\mathsf{0}), \mathsf{0}),\blue{\nil}}
  & \rh_\cR & 
  \tuple{\mathsf{fst}(
    \mathsf{s}(\mathsf{add}(\mathsf{0},\mathsf{0})), \mathsf{0}),\blue{[\beta_2(1,\id)]}} \\
  & \rh_\cR & 
  \tuple{\mathsf{s}(\mathsf{add}(\mathsf{0},\mathsf{0})),\blue{[\beta_3(\epsilon,\{y\mapsto\mathsf{0}\}),\beta_2(1,\id)]}} \\
  & \rh_\cR & 
  \tuple{\mathsf{s}(\mathsf{0}),\blue{[\beta_1(1,\id),
      \beta_3(\epsilon,\{y\mapsto\mathsf{0}\}),\beta_2(1,\id)]}} \\
  \end{array}
  \]
  The reader can easily check that
  $\tuple{\mathsf{s}(\mathsf{0}),\blue{[\beta_1(1,\id),
      \beta_3(\epsilon,\{y\mapsto\mathsf{0}\}),\beta_2(1,\id)]}}$ is
  reducible to $\tuple{\mathsf{fst}(
    \mathsf{add}(\mathsf{s}(\mathsf{0}),\mathsf{0}),
    \mathsf{0}),\blue{\nil}}$ using the backward relation $\lh_\cR$ by
  performing exactly the same steps but in the backward direction.
\end{example}
An easy but essential property of $\rh_\cR$ is that it is a
conservative extension of standard rewriting in the following sense
(we omit its proof since it is straightforward):

\begin{theorem}
  Let $\cR$ be a TRS. Given terms $s,t$, if $s\to_\cR^\ast t$, then
  for any trace $\hs$ there exists a trace $\hs'$ such that 
  $\tuple{s,\hs} \rh_\cR^\ast \tuple{t,\hs'}$.
\end{theorem}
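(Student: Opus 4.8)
The plan is to proceed by induction on the number $n$ of steps in the standard derivation $s \to_\cR^n t$, exploiting the fact that every single rewrite step can be mirrored by exactly one forward reversible step which merely appends a trace term to the history while leaving the term component to evolve just as in ordinary rewriting.

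For the base case $n = 0$ we have $s = t$, and the empty reversible derivation gives $\tuple{s,\hs} \rh_\cR^\ast \tuple{s,\hs}$, so we take $\hs' = \hs$. For the inductive step I would decompose the derivation as $s \to_{p,\beta} u \to_\cR^n t$, where the first step uses a rule $\beta: l \to r \in \cR$ at position $p$ with a ground substitution $\sigma$ satisfying $s|_p = l\sigma$ and $u = s[r\sigma]_p$. These are precisely the side conditions demanded by the definition of $\rh_\cR$, so setting $\sigma' = \sigma\!\!\res_{\var(l)\backslash\var(r)}$ we immediately obtain the forward reversible step $\tuple{s,\hs} \rh_\cR \tuple{u, \beta(p,\sigma'):\hs}$. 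Applying the induction hypothesis to the remaining $n$-step derivation $u \to_\cR^n t$ with the trace $\beta(p,\sigma'):\hs$ yields a trace $\hs'$ with $\tuple{u,\beta(p,\sigma'):\hs} \rh_\cR^\ast \tuple{t,\hs'}$. Concatenating the two derivations then gives $\tuple{s,\hs} \rh_\cR^\ast \tuple{t,\hs'}$, which is the desired conclusion.

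The argument carries no genuine obstacle — this is why the statement is essentially immediate and its proof can be omitted. The only point that demands any care is to verify that the conditions governing a $\rh_\cR$ step coincide with those of a standard step: a forward reversible step imposes exactly the same requirements on $p$, on the rule $\beta$, and on the ground substitution $\sigma$ as an ordinary rewrite step $s \to_{p,\beta} u$, and additionally only records the restricted substitution $\sigma'$ as bookkeeping in the history. Since this recording never constrains whether the step may be taken, the first component of the pairs progresses identically to standard rewriting. Note also that safety of the pairs need not be invoked anywhere in the forward direction, in accordance with the earlier remark that safety becomes essential only for the backward relation $\lh_\cR$; hence the hypothesis ``for any trace $\hs$'' may be used without restriction.
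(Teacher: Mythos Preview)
Your proof is correct and is exactly the straightforward induction the paper has in mind; indeed, the paper omits the proof entirely, noting only that it is immediate. Your observation that the forward relation imposes no additional constraints beyond those of an ordinary rewrite step, and that safety plays no role in this direction, is precisely the point.
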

Here, and in the following, we assume that $\mathrel{\leftarrow_\cR}
\,=\, \mathrel{(\to_\cR)^{-1}}$, i.e., $s \to_\cR^{-1} t$ is denoted
by $s \mathrel{\leftarrow_\cR} t$.
Observe that the backward relation is not a conservative extension of
$\mathrel{\leftarrow_\cR}$: in general, $t \leftarrow_\cR s$ does not
imply $\tuple{t,\hs'} \lh_\cR \tuple{s,\hs}$ for any arbitrary trace
$\hs'$.
This is actually the purpose of our notion
of reversible rewriting: $\lh_\cR$ should not extend $\leftarrow_\cR$
but is only aimed at performing \emph{exactly the same steps} of the
forward computation whose trace was stored, but in the reverse order.
Nevertheless, one can still ensure that for all steps
$t \leftarrow_\cR s$, there exists some trace $\pi'$ such that
$\tuple{t,\hs'} \lh_\cR \tuple{s,\hs}$ (which is an easy consequence of the
above result and Theorem~\ref{th:unconditional} below).

\begin{example} \label{ex:rel} Consider again the following TRS $\cR =
  \{ \beta:~\mathsf{snd}(x,y) \to y\}$. Given the reduction
  $\mathsf{snd}(\mathsf{1},\mathsf{2}) \to_\cR \mathsf{2}$, there are
  infinitely many reductions for $\mathsf{\mathsf{2}}$ using
  $\leftarrow_\cR$, e.g., $\mathsf{2} \leftarrow_\cR
  \mathsf{snd}(\mathsf{1},\mathsf{2})$, $\mathsf{2} \leftarrow_\cR
  \mathsf{snd}(\mathsf{2},\mathsf{2})$, $\mathsf{2} \leftarrow_\cR
  \mathsf{snd}(\mathsf{3},\mathsf{2})$, etc. The relation is also
  non-terminating: $\mathsf{2} \leftarrow_\cR
  \mathsf{snd}(\mathsf{1},\mathsf{2}) \leftarrow_\cR
  \mathsf{snd}(\mathsf{1},\mathsf{snd}(\mathsf{1},\mathsf{2}))
  \leftarrow_\cR \cdots$. In contrast, given a pair
  $\tuple{\mathsf{2},\hs}$, we can only perform a single deterministic
  and finite reduction (as proved below). For instance, if
  $\hs=[\beta(\epsilon,\{x\mapsto\mathsf{\mathsf{1}}\}),\beta({2},\{x\mapsto\mathsf{\mathsf{1}}\})]$,
  then the only possible reduction is $\tuple{\mathsf{\mathsf{2}},\hs}
  \lh_\cR
  \tuple{\mathsf{snd}(\mathsf{\mathsf{1}},\mathsf{\mathsf{2}}),[\beta({2},\{x\mapsto\mathsf{\mathsf{1}}\})]}
  \lh_\cR
  \tuple{\mathsf{snd}(\mathsf{\mathsf{1}},\mathsf{snd}(\mathsf{\mathsf{1}},\mathsf{\mathsf{2}})),\nil}
  \not\lh_\cR$.
\end{example}
Now, we state a lemma which shows that safe pairs are preserved
through reversible term rewriting (both in the forward and backward
directions):

\begin{lemma} \label{lemma:safe} Let $\cR$ be a TRS. Let
  $\tuple{s,\hs}$ be a safe pair. If $\tuple{s,\hs} \rlh^\ast_\cR
  \tuple{t,\hs'}$, then $\tuple{t,\hs'}$ is also safe.
\end{lemma}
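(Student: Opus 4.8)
The plan is to proceed by a straightforward induction on the number of steps in the derivation $\tuple{s,\hs} \rlh^\ast_\cR \tuple{t,\hs'}$. The base case is zero steps, where $\tuple{t,\hs'}=\tuple{s,\hs}$ is safe by hypothesis. For the inductive step I would split off the last step, writing the derivation as $\tuple{s,\hs} \rlh^\ast_\cR \tuple{u,\hs''} \rlh_\cR \tuple{t,\hs'}$; by the induction hypothesis $\tuple{u,\hs''}$ is safe, so it suffices to prove that a single application of $\rh_\cR$ or of $\lh_\cR$ preserves safety. Thus the whole argument reduces to the two one-step cases.

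For a forward step $\tuple{u,\hs''} \rh_\cR \tuple{t,\hs'}$ we have $\hs' = \beta(p,\sigma'):\hs''$, so I would check the two parts of the safety condition separately. The tail $\hs''$ is unchanged and already safe by the induction hypothesis, so every trace term occurring in it still meets the requirement. For the new head $\beta(p,\sigma')$, the definition of $\rh_\cR$ gives $\beta:l\to r\in\cR$, a ground substitution $\sigma$ with $u|_p=l\sigma$, and $\sigma'=\sigma\res_{\var(l)\backslash\var(r)}$; hence $\sigma'$ is ground with $\dom(\sigma')\subseteq\var(l)\backslash\var(r)$. For a backward step $\tuple{u,\hs''} \lh_\cR \tuple{t,\hs'}$ we have $\hs''=\beta(p,\sigma'):\hs'$, so the resulting trace $\hs'$ is exactly the tail of $\hs''$; since $\tuple{u,\hs''}$ is safe, every trace term in $\hs''$ satisfies the safety condition, and in particular so does every trace term in $\hs'$, so $\tuple{t,\hs'}$ is safe with nothing further to verify.

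The only genuinely non-routine point is in the forward case, where safety demands the \emph{exact} equality $\dom(\sigma')=\var(l)\backslash\var(r)$ rather than the containment $\subseteq$ that the restriction operator immediately yields. Here I would invoke the standing convention that all reduced terms are ground: since $u|_p=l\sigma$ is ground, $\sigma$ must map \emph{every} variable of $l$ to a ground term (none can be left as a variable), so restricting $\sigma$ to $\var(l)\backslash\var(r)$ leaves every variable of that set genuinely in the domain. This upgrades the containment to the required equality and also confirms groundness of $\sigma'$, completing the forward case. This domain-equality check, relying on groundness, is the step I expect to be the only subtle part; everything else is bookkeeping about prepending a safe trace term or dropping the head of a safe trace.
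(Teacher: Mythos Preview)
Your proposal is correct and follows essentially the same approach as the paper: induction on the length of the derivation, splitting off the last step, and handling the forward and backward cases separately. Your treatment is in fact slightly more careful than the paper's, since you explicitly justify the equality $\dom(\sigma')=\var(l)\backslash\var(r)$ via groundness of $u|_p=l\sigma$, whereas the paper simply asserts this holds ``by construction.''
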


\begin{proof}
  We prove the claim by induction on the length $k$ of the
  derivation. Since the base case $k=0$ is trivial, consider the
  inductive case $k>0$. Assume a derivation of the form $\tuple{s,\hs}
  \rlh_\cR^\ast \tuple{s_0,\hs_0} \rlh_\cR \tuple{t,\hs'}$. By the
  induction hypothesis, we have that $\tuple{s_0,\hs_0}$ is a safe
  pair. Now, we distinguish two cases depending on the last step. If
  we have $\tuple{s_0,\hs_0} \rh_\cR \tuple{t,\hs'}$, then there exist
  a position $p\in\pos(s_0)$, a rewrite rule $\beta: l \to r \in \cR$,
  and a ground substitution $\sigma$ such that $s_0|_p = l\sigma$, $t
  = s_0[r\sigma]_p$, $\sigma' =
  \sigma\!\!\res_{\var(l)\backslash\var(r)}$, and
  $\hs'=\beta(p,\sigma'):\hs_0$. Then, since $\sigma'$ is ground and
  $\dom(\sigma')=\var(l)\backslash\var(r)$ by construction, the claim
  follows straightforwardly.
  If the last step has the form $\tuple{s_0,\hs_0} \lh_\cR
  \tuple{t,\hs'}$, then the claim follows trivially since
  each step with $\lh_\cR$ only removes trace terms from $\hs_0$.
  \qed
\end{proof}
Hence, since any pair with an empty trace is safe the following
result, which states that every pair that is \emph{reachable} from
an initial pair with an empty trace is safe, straightforwardly
follows from Lemma~\ref{lemma:safe}:

\begin{proposition} \label{th:safe}
  Let $\cR$ be a TRS. If $\tuple{s,\nil} \rlh_\cR^\ast \tuple{t,\pi}$,
  then $\tuple{t,\pi}$ is safe.
\end{proposition}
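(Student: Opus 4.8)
The plan is to derive Proposition~\ref{th:safe} as an immediate corollary of Lemma~\ref{lemma:safe}. The key observation is that the proposition is just the instantiation of the lemma to the special case where the starting trace is empty, once we verify that the empty trace yields a safe pair.

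First I would establish the trivial base fact that the pair $\tuple{s,\nil}$ is safe for any ground term $s$. This follows directly from the definition of safety: a pair $\tuple{s,\hs}$ is safe iff every trace term $\beta(p,\sigma)$ occurring in $\hs$ satisfies the stated conditions on $\sigma$ and $\beta$. Since $\nil$ contains no trace terms, the condition holds vacuously, so $\tuple{s,\nil}$ is safe.

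Next, I would invoke Lemma~\ref{lemma:safe} with $\hs = \nil$. The lemma states that if $\tuple{s,\hs}$ is safe and $\tuple{s,\hs} \rlh^\ast_\cR \tuple{t,\hs'}$, then $\tuple{t,\hs'}$ is safe. Applying this with the safe pair $\tuple{s,\nil}$ and the given derivation $\tuple{s,\nil} \rlh_\cR^\ast \tuple{t,\pi}$ immediately yields that $\tuple{t,\pi}$ is safe, which is exactly the claim.

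Because the proposition is obtained by specializing a previously proved lemma to a vacuously safe starting pair, there is no genuine obstacle here; the only point requiring care is the verification of the base case, namely that a pair with the empty trace is safe. This is routine, as emphasized in the paragraph preceding the statement, and the entire argument is a one-line consequence of Lemma~\ref{lemma:safe}.
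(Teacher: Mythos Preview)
Your proposal is correct and follows essentially the same approach as the paper: observe that $\tuple{s,\nil}$ is vacuously safe since the empty trace contains no trace terms, and then apply Lemma~\ref{lemma:safe} directly. The paper states this in one line immediately before the proposition, so there is nothing to add.
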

Now, we state the reversibility of $\rh_\cR$, i.e., the fact that
$(\rh_\cR)^{-1} = \:\lh_\cR$ (and thus the reversibility of
$\lh_\cR$ and $\rlh_\cR$, too).

\begin{theorem} \label{th:unconditional} Let $\cR$ be a TRS. Given the
  safe pairs $\tuple{s,\hs}$ and $\tuple{t,\hs'}$, for all $n\geq 0$,
  $\tuple{s,\hs} \rh_\cR^n \tuple{t,\hs'}$ iff $\tuple{t,\hs'}
  \lh_\cR^n \tuple{s,\hs}$. 
\end{theorem}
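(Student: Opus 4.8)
Theorem \ref{th:unconditional} states: for safe pairs, $\tuple{s,\hs} \rh_\cR^n \tuple{t,\hs'}$ iff $\tuple{t,\hs'} \lh_\cR^n \tuple{s,\hs}$.

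Let me think about how to prove this.

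**Structure of the proof:**

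The natural approach is induction on $n$. The key insight is that since both relations are defined step-by-step and the number of steps is preserved, I should reduce the multi-step equivalence to a single-step equivalence, then chain.

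**The single-step lemma (the heart of the matter):**

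The core claim I need is: for safe pairs, $\tuple{s,\hs} \rh_\cR \tuple{t,\beta(p,\sigma'):\hs}$ if and only if $\tuple{t,\beta(p,\sigma'):\hs} \lh_\cR \tuple{s,\hs}$.

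Let me verify both directions of this single-step claim.

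**Forward to backward ($\rh \Rightarrow \lh$):**

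Suppose $\tuple{s,\hs} \rh_\cR \tuple{t,\beta(p,\sigma'):\hs}$. By definition, there exist position $p \in \pos(s)$, rule $\beta: l \to r \in \cR$, and ground substitution $\sigma$ such that:
- $s|_p = l\sigma$
- $t = s[r\sigma]_p$
- $\sigma' = \sigma\res_{\var(l)\backslash\var(r)}$

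I want to show $\tuple{t,\beta(p,\sigma'):\hs} \lh_\cR \tuple{s,\hs}$.

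For the backward step, I need: the pair is safe (follows from Proposition/Lemma since $\tuple{s,\hs}$ is safe and one forward step preserves safety — actually, I need $\tuple{t, \beta(p,\sigma'):\hs}$ to be safe, which follows from Lemma \ref{lemma:safe}). Then I need a ground substitution $\theta$ with $\dom(\theta)=\var(r)$ such that $t|_p = r\theta$ and $s = t[l\theta\sigma']_p$.

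Define $\theta = \sigma\res_{\var(r)}$. Since $r$'s variables all appear in $l$ (TRS condition), $\sigma$ is defined on them.
- $t|_p = (s[r\sigma]_p)|_p = r\sigma = r\theta$ ✓ (since $\theta$ agrees with $\sigma$ on $\var(r)$)
- Need $s = t[l\theta\sigma']_p$. Now $t[\cdot]_p$ replaces position $p$, and $t[X]_p = s[X]_p$ (same context outside $p$). So I need $l\theta\sigma' = s|_p = l\sigma$. Since $\theta\sigma' = \theta \cup \sigma'$ (disjoint domains, as noted in the definition) and $\theta = \sigma\res_{\var(r)}$, $\sigma' = \sigma\res_{\var(l)\backslash\var(r)}$, their union restricted to $\var(l) = \var(r) \cup (\var(l)\backslash\var(r))$ agrees with $\sigma$. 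Hence $l(\theta\sigma') = l\sigma$. ✓

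**Backward to forward ($\lh \Rightarrow \rh$):** This is the subtle direction. Here safety is essential — it guarantees $\sigma'$ recorded in the trace term is ground with the correct domain $\var(l)\backslash\var(r)$. Given the backward step produces $s = t[l\theta\sigma']_p$ with $s|_p = l\theta\sigma'$, I set $\sigma = \theta \cup \sigma'$ (ground, since $\theta$ is ground and $\sigma'$ is ground by safety). Then $s|_p = l\sigma$, and $t = s[r\sigma]_p$ because $r\sigma = r\theta = t|_p$ (as $\var(r)\cap\dom(\sigma')=\emptyset$). Finally $\sigma\res_{\var(l)\backslash\var(r)} = \sigma'$, recovering the same trace term. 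Crucially, safety ensures $\dom(\sigma')=\var(l)\backslash\var(r)$, without which the recovered $\sigma$ would not restrict correctly.

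**Plan:**

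\begin{proof}
The plan is to reduce the multi-step equivalence to a single-step equivalence and then proceed by induction on $n$. Since both $\rh_\cR$ and $\lh_\cR$ preserve safety (Lemma~\ref{lemma:safe}), every intermediate pair in either derivation is safe, so the single-step claim applies uniformly along the chain.

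First I would establish the single-step lemma: for any safe pairs, $\tuple{s,\hs} \rh_\cR \tuple{t,\beta(p,\sigma'):\hs}$ if and only if $\tuple{t,\beta(p,\sigma'):\hs} \lh_\cR \tuple{s,\hs}$. For the forward-to-backward direction, given the forward step with position $p$, rule $\beta:l\to r$, and ground $\sigma$ satisfying $s|_p=l\sigma$, $t=s[r\sigma]_p$, and $\sigma'=\sigma\res_{\var(l)\backslash\var(r)}$, I would define $\theta=\sigma\res_{\var(r)}$; this $\theta$ is ground with $\dom(\theta)=\var(r)$, and witnesses the backward step because $t|_p=r\sigma=r\theta$ and $l\theta\sigma'=l\sigma=s|_p$, using the disjointness $\theta\sigma'=\theta\cup\sigma'$ noted in the definition together with $\var(l)=\var(r)\cup(\var(l)\backslash\var(r))$. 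The safety of the target pair, needed to license the backward step at all, follows from Lemma~\ref{lemma:safe}.

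For the backward-to-forward direction, I would run the same computation in reverse. Given the backward step with witness $\theta$ (ground, $\dom(\theta)=\var(r)$, $t|_p=r\theta$, $s=t[l\theta\sigma']_p$), I set $\sigma=\theta\cup\sigma'$. Here is where safety is indispensable: it guarantees that the recorded $\sigma'$ is ground with $\dom(\sigma')=\var(l)\backslash\var(r)$, so that $\sigma$ is a well-defined ground substitution and $\sigma\res_{\var(l)\backslash\var(r)}$ recovers exactly $\sigma'$. One then checks $s|_p=l\sigma$ and $t=s[r\sigma]_p$, so the forward step produces precisely $\tuple{t,\beta(p,\sigma'):\hs}$. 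I expect this direction to be the main obstacle, as it is the only place where safety genuinely does work and where one must confirm that no information was lost in recording only the erased bindings $\sigma'$.

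Finally I would lift the single-step lemma to $n$ steps by induction. For $n=0$ the statement is the trivial equivalence of two identical pairs. For the inductive step, a forward derivation $\tuple{s,\hs}\rh_\cR^n\tuple{t,\hs'}$ splits as $\tuple{s,\hs}\rh_\cR\tuple{s_1,\hs_1}\rh_\cR^{n-1}\tuple{t,\hs'}$; by safety preservation all pairs are safe, the single-step lemma converts the first step into its backward counterpart, and the induction hypothesis converts the remaining $n-1$ steps, yielding $\tuple{t,\hs'}\lh_\cR^{n-1}\tuple{s_1,\hs_1}\lh_\cR\tuple{s,\hs}$. The converse direction is symmetric. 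Since each conversion is length-preserving, the step count $n$ is maintained throughout.
\end{proof}
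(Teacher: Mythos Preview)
Your proposal is correct and follows essentially the same argument as the paper: both directions use induction on $n$, invoke Lemma~\ref{lemma:safe} to ensure intermediate pairs are safe, and handle the single-step case by taking $\theta=\sigma\!\res_{\var(r)}$ in one direction and $\sigma=\theta\cup\sigma'$ in the other, exploiting the disjointness of their domains. The only cosmetic differences are that you factor out the single-step equivalence as an explicit lemma (the paper inlines it inside the induction) and that you peel off the \emph{first} step of the derivation while the paper peels off the \emph{last}; neither affects the substance of the proof.
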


\begin{proof}
  $(\Rightarrow)$ We prove the claim by induction on the length $n$ of
  the derivation $\tuple{s,\hs} \rh_\cR^n \tuple{t,\hs'}$. Since the
  base case $n=0$ is trivial, let us consider the inductive case
  $n>0$. Consider a derivation $\tuple{s,\hs} \rh_\cR^{n-1}
  \tuple{s_0,\hs_0} \rh_\cR \tuple{t,\hs'}$.  By
  Lemma~\ref{lemma:safe}, both $\tuple{s_0,\hs_0}$ and
  $\tuple{t,\hs'}$ are safe.  By the induction hypothesis, we have
  $\tuple{s_0,\hs_0} \lh_\cR^{n-1} \tuple{s,\hs}$. Consider now the
  step $\tuple{s_0,\hs_0} \rh_\cR \tuple{t,\hs'}$. Then, there is a
  position $p\in\pos(s_0)$, a rule $\beta:l \to r\in\cR$ and a ground
  substitution $\sigma$ such that $s_0|_p = l\sigma$,
  $t=s_0[r\sigma]_p$, $\sigma' =
  \sigma\!\!\res_{\var(l)\backslash\var(r)}$, and
  $\hs'=\beta(p,\sigma'):\hs_0$. Let $\theta =
  \sigma\!\!\res_{\var(r)}$. Then, we have $\tuple{t,\hs'} \lh_\cR
  \tuple{s'_0,\pi_0}$ with $t|_p=r\theta$, $\beta:l\to r\in\cR$ and
  $s'_0=t[l\theta\sigma']_p$. Moreover, since $\sigma =
  \theta\sigma'$, we have $s'_0 = t[l\theta\sigma']_p =
  t[l\sigma]_p = s_0$, and the claim follows.

  $(\Leftarrow)$ This direction proceeds in a similar way. We prove
  the claim by induction on the length $n$ of the derivation
  $\tuple{t,\hs'} \lh_\cR^n \tuple{s,\hs}$. As before, we only
  consider the inductive case $n>0$. Let us consider a derivation
  $\tuple{t,\hs'} \lh_\cR^{n-1} \tuple{s_0,\hs_0} \lh_\cR
  \tuple{s,\hs}$.  By Lemma~\ref{lemma:safe}, both $\tuple{s_0,\hs_0}$
  and $\tuple{s,\hs}$ are safe. By the induction hypothesis, we have
  $\tuple{s_0,\hs_0} \rh_\cR^{n-1} \tuple{t,\hs'}$.  Consider now the
  reduction step $\tuple{s_0,\hs_0} \lh_\cR \tuple{s,\hs}$. Then, we
  have $\hs_0 = \beta(p,\sigma'):\hs$, $\beta:l \to r\in\cR$, and
  there exists a ground substitution $\theta$ with
  $\dom(\theta)=\var(r)$ such that $s_0|_p= r\theta$ and
  $s=s_0[l\theta\sigma']_p$. Moreover, since $\tuple{s_0,\hs_0}$ is
  safe, we have that $\dom(\sigma')=\var(l)\backslash\var(r)$ and,
  thus, $\dom(\theta)\cap\dom(\sigma')=\emptyset$. Let $\sigma =
  \theta\sigma'$. Then, since $s|_p = l\sigma$ and
  $\dom(\sigma')=\var(l)\backslash\var(r)$, we can perform the step
  $\tuple{s,\hs} \rh_\cR \tuple{s'_0,\beta(p,\sigma'):\hs}$ with $s'_0
  = s[r\sigma]_p = s[r\theta\sigma']_p = s[r\theta]_p = s_0[r\theta]_p
  = s_0$, and the claim follows. \qed
\end{proof}
The next corollary is then immediate:

\begin{corollary} \label{th:rev2} Let $\cR$ be a TRS. Given the safe
  pairs $\tuple{s,\hs}$ and $\tuple{t,\hs'}$, for all $n\geq 0$, $\tuple{s,\hs}
  \rlh_\cR^n \tuple{t,\hs'}$ iff $\tuple{t,\hs'} \rlh_\cR^n
  \tuple{s,\hs}$.
\end{corollary}
A key issue of our notion of reversible rewriting is that the backward
rewrite relation $\lh_\cR$ is deterministic (thus confluent),
terminating, and has a constructive definition:

\begin{theorem} \label{th:deterministic}
  Let $\cR$ be a TRS. Given a safe pair $\tuple{t,\hs'}$, there exists
  at most one pair $\tuple{s,\hs}$ such that $\tuple{t,\hs'} \lh_\cR
  \tuple{s,\hs}$.
\end{theorem}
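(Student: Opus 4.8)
The plan is to show that once the safe pair $\tuple{t,\hs'}$ is fixed, every ingredient of a backward step is forced, so the resulting pair $\tuple{s,\hs}$ is unique whenever it exists at all. First I would dispose of the trivial case: if the trace is empty, i.e.\ $\hs' = \nil$, then by the definition of $\lh_\cR$ no backward step applies (the relation requires a head trace term $\beta(p,\sigma')$), so there are zero successors and the claim holds vacuously.

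For the main case, write $\hs' = \beta(p,\sigma'):\hs$. The head trace term and the tail of a list are uniquely determined, so $\beta$, $p$, $\sigma'$, and the resulting trace $\hs$ are all fixed by $\tuple{t,\hs'}$. Since labels are unique in $\cR$, the label $\beta$ picks out exactly one rule $\beta:l\to r\in\cR$. Thus the only freedom left in the definition of the step $\tuple{t,\beta(p,\sigma'):\hs} \lh_\cR \tuple{s,\hs}$ lies in the ground substitution $\theta$ used to form $s = t[l\theta\sigma']_p$; everything else ($t$, $p$, $l$, $\sigma'$, and the output trace $\hs$) is already determined. Hence it suffices to show that $s$ is uniquely determined.

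The key step is that $\theta$ is forced by the matching condition $t|_p = r\theta$ together with $\dom(\theta) = \var(r)$. Since $t$ is ground, the subterm $t|_p$ is a ground term, and matching a fixed pattern $r$ against a ground term admits at most one solution: if $\theta_1,\theta_2$ are ground substitutions with $\dom(\theta_i)=\var(r)$ and $r\theta_1 = t|_p = r\theta_2$, then for every $x\in\var(r)$, choosing any position $q\in\pos(r)$ with $r|_q = x$ gives $x\theta_1 = (r\theta_1)|_q = (t|_p)|_q = (r\theta_2)|_q = x\theta_2$, so $\theta_1 = \theta_2$ on their common domain $\var(r)$. Therefore $\theta$ is unique. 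This uniqueness of the ground matcher is the only step that requires genuine care; the rest is bookkeeping about the trace and the fixed data of the rule.

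Finally, since $\var(r)\subseteq\var(l)$ for a TRS rule, the variables of $l$ split into $\var(r)$, bound by $\theta$, and $\var(l)\backslash\var(r) = \dom(\sigma')$, bound by $\sigma'$; as these two substitutions are ground with disjoint domains, $l\theta\sigma'$ is a uniquely determined ground term, whence $s = t[l\theta\sigma']_p$ is unique. Together with the fact that $\hs$ is the (unique) tail of $\hs'$, this shows there is at most one pair $\tuple{s,\hs}$ with $\tuple{t,\hs'} \lh_\cR \tuple{s,\hs}$, as claimed.
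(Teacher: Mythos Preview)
Your proof is correct and follows essentially the same approach as the paper: identify that the trace head fixes $\beta$, $p$, $\sigma'$, and the output trace $\hs$, use uniqueness of rule labels to fix the rule, and then argue that the ground matcher $\theta$ with $t|_p = r\theta$ and $\dom(\theta)=\var(r)$ is unique. You spell out the uniqueness of the matcher more explicitly than the paper does (which simply asserts it), but the structure and key ideas are the same.
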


\begin{proof}
  First, if there is no step using $\lh_\cR$ from $\tuple{t,\hs'}$,
  the claim follows trivially. Now, assume there is at least one step
  $\tuple{t,\hs'} \lh_\cR \tuple{s,\hs}$. We prove that this is the
  only possible step. By definition, we have
  $\hs'=\beta(p,\sigma'):\hs$, $p\in\pos(t)$, $\beta:l\to r\in\cR$,
  and there exists a ground substitution $\theta$ with
  $\dom(\theta)=\var(r)$ such that $t|_p=r\theta$ and
  $s=t[l\theta\sigma']_p$. The only source of nondeterminism may come
  from choosing a rule labeled with $\beta$ and from the computation
  of the substitution $\theta$.  The claim follows trivially from the
  fact that labels are unique in $\cR$ and that, if there is some
  ground substitution $\theta'$ with $\theta'=\var(r)$ and
  $t|_p=r\theta'$, then $\theta=\theta'$. \qed
\end{proof} 
Therefore, $\lh_\cR$ is clearly deterministic and confluent.
Termination holds straightforwardly for pairs with finite traces since
its length strictly decreases with every backward step. Note however
that even when $\rh_\cR$ and $\lh_\cR$ are terminating, the relation
$\rlh_\cR$ is always non-terminating since one can keep going back and
forth.

\subsection{Conditional Term Rewrite Systems} \label{sec:ctrs}

In this section, we extend the previous notions and results to DCTRSs.
We note that considering DCTRSs is not enough to make conditional
rewriting deterministic. In general, given a rewrite step $s
\to_{p,\beta} t$ with $p$ a position of $s$, $\beta: l\to r \Leftarrow
\ol{s_n\to t_n}$ a rule, and $\sigma$ a substitution such that $s|_p =
l\sigma$ and $s_i\sigma \to^\ast_\cR t_i\sigma$ for all
$i=1,\ldots,n$, there are three potential sources of non-determinism:
the selected position $p$, the selected rule $\beta$, and the
substitution $\sigma$. The use of DCTRSs can only make deterministic
the last one, but the choice of a position and the selection of a rule
may still be non-deterministic.

For DCTRSs, the notion of a trace term used for TRSs is not sufficient
since we also need to store the traces of the subderivations
associated to the condition of the applied rule (if any). Therefore,
we generalize the notion of a trace as follows:

\begin{definition}[trace]
  Given a CTRS $\cR$, a \emph{trace} in $\cR$ is recursively defined
  as follows:
  \begin{itemize}
  \item the empty list is a trace;
  \item if $\hs,\hs_1,\ldots,\hs_n$ are traces in $\cR$, $n\geq 0$,
    $\beta:l\to r \Leftarrow \ol{s_n\tto t_n}\in\cR$ is a rule,
    $p$ is a position, and $\sigma$ is a ground substitution, then
    $\beta(p,\sigma,\hs_1,\ldots,\hs_n):\hs$ is a trace in $\cR$.
  \end{itemize}
  We refer to each component $\beta(p,\sigma,\hs_1,\ldots,\hs_n)$ in a
  trace as a \emph{trace term}.
\end{definition}
Intuitively speaking, a trace term
$\beta(p,\sigma,\hs_1,\ldots,\hs_n)$ stores the position of a
reduction step, a substitution with the bindings that are required for
the step to be reversible (e.g., the bindings for the erased
variables, but not only; see below) and the traces associated to the
subcomputations in the condition.

The notion of a safe pair is now more involved in order to deal with
conditional rules. The motivation for this definition will be
explained below, after introducing reversible rewriting for DCTRSs.

\begin{definition}[safe pair] 
  Let $\cR$ be a DCTRS. A trace $\hs$ is \emph{safe} in $\cR$ iff, for
  all trace terms $\beta(p,\sigma,\ol{\hs_n})$ in $\hs$, $\sigma$ is a
  ground substitution with $\dom(\sigma)=
  (\var(l)\backslash\var(r,\ol{s_n},\ol{t_n}))\cup \bigcup_{i=1}^n
  \var(t_i)\backslash\var(r,\ol{s_\interval{i+1}{n}})$, $\beta:l\to
  r\Leftarrow \ol{s_n\tto t_n}\in\cR$, and $\ol{\hs_n}$ are safe too.
  The pair $\tuple{s,\hs}$ is \emph{safe} in $\cR$ iff $\hs$ is safe.
\end{definition}
Reversible (conditional) rewriting can now be introduced as follows:

\begin{definition}[reversible rewriting]
  Let $\cR$ be a DCTRS. The reversible rewrite relation $\rh_\cR$ is
  defined on safe pairs $\tuple{t,\hs}$, where $t$ is a ground term and
  $\hs$ is a trace in $\cR$. The reversible rewrite relation extends
  standard conditional rewriting as follows:
  \[
  \tuple{s,\hs} \rh_\cR
  \tuple{t,\beta(p,\sigma',\hs_1,\ldots,\hs_n):\hs}
  \]
  iff there exist a position $p\in\pos(s)$, a rewrite rule $\beta: l
  \to r \Leftarrow \ol{s_n\tto t_n} \in \cR$, and a ground
  substitution $\sigma$ such that $s|_p = l\sigma$,
  $\tuple{s_i\sigma,\nil}\rh_\cR^\ast \tuple{t_i\sigma,\hs_i}$ for all
  $i=1,\ldots,n$, $t = s[r\sigma]_p$, and $\sigma' =
  \sigma\!\!\res_{(\var(l)\backslash\var(r,\ol{s_n},\ol{t_n}))\cup
    \bigcup_{i=1}^n
    \var(t_i)\backslash\var(r,\ol{s_\interval{i+1}{n}})}$.
  The reverse relation, $\lh_\cR$, is then defined as
  follows: 
  \[
  \tuple{t,\beta(p,\sigma',\hs_1,\ldots,\hs_n):\hs} \lh_\cR
  \tuple{s,\hs}
  \]
  iff $\tuple{t,\beta(p,\sigma',\ol{\hs_n}):\hs}$ is a safe pair in
  $\cR$, $\beta:l\to r\Leftarrow \ol{s_n\tto t_n}\in\cR$ and there is
  a ground substitution $\theta$ such that
  $\dom(\theta)=\var(r,\ol{s_n})\backslash\dom(\sigma')$, $t|_p =
  r\theta$, $\tuple{t_i\theta\sigma',\hs_i} \lh_\cR^\ast
  \tuple{s_i\theta\sigma',\nil}$ for all $i=1,\ldots,n$, and $s =
  t[l\theta\sigma']_p$.
  Note that $\theta\sigma' = \sigma'\theta = \theta\cup\sigma'$ since
  $\dom(\theta)\cap\dom(\sigma')=\emptyset$ and both substitutions are
  ground.

  As in the unconditional case, we denote the union of both relations
  $\rh_\cR\cup\lh_\cR$ by $\rlh_\cR$.
\end{definition}

\begin{example} 
  Consider again the DCTRS $\cR$ from Example~\ref{ex:dctrs}:
  \[
  \begin{array}{l@{~}r@{~}c@{~}l@{}l@{~}r@{~}c@{~}l}
   \beta_1: & \mathsf{add}(\mathsf{0},y) & \to & y  &
    \beta_4: & \mathsf{even}(\mathsf{0}) & \to & \mathsf{true}\\
    \beta_2: &\mathsf{add}(\mathsf{s}(x),y) & \to &
    \mathsf{s}(\mathsf{add}(x,y)) &
    \beta_5: & \mathsf{even}(\mathsf{s}(\mathsf{s}(x))) & \to & \mathsf{even}(x)\\
    \beta_3: & \mathsf{double}(x) & \to & \mathsf{add}(x,x)
    \Leftarrow \mathsf{even}(x)\tto\mathsf{true}\\
  \end{array}
  \]
  Given the term
  $\mathsf{double}(\mathsf{s}(\mathsf{s}(\mathsf{0})))$, we have, for
  instance, the following forward derivation:
  \[
  \begin{array}{lll}
    \tuple{\mathsf{double}(\mathsf{s}(\mathsf{s}(\mathsf{0}))),{\nil}} \\
    \mbox{}\hspace{10ex} \rh_\cR ~ 
    \tuple{\mathsf{add}(\mathsf{s}(\mathsf{s}(\mathsf{0})),\mathsf{s}(\mathsf{s}(\mathsf{0}))),{[\beta_3(\epsilon,\id,{\hs})]}} \\   
    \mbox{}\hspace{10ex} \rh_\cR ~ \cdots \\
    \mbox{}\hspace{10ex} \rh_\cR ~
    \tuple{\mathsf{s}(\mathsf{s}(\mathsf{s}(\mathsf{s}(\mathsf{0})))),
      {[\beta_1(1.1,\id),\beta_2(1,\id),\beta_2(\epsilon,\id),\beta_3(\epsilon,\id,{\hs})]}} \\   
  \end{array}
  \]
  where ${\hs} = {[\beta_4(\epsilon,\id),\beta_5(\epsilon,\id)]}$ since we
  have the following reduction:
  \[
    \tuple{\mathsf{even}(\mathsf{s}(\mathsf{s}(\mathsf{0}))),{\nil}}
    \rh_\cR 
    \tuple{\mathsf{even}(\mathsf{0}),{[\beta_5(\epsilon,\id)]}} 
     \rh_\cR 
    \tuple{\mathsf{true},{[\beta_4(\epsilon,\id),\beta_5(\epsilon,\id)]}}
  \]  
  The reader can easily construct the associated backward derivation:
  \[
  \tuple{\mathsf{add}(\mathsf{s}(\mathsf{s}(\mathsf{0})),\mathsf{s}(\mathsf{s}(\mathsf{0}))),
    {[\beta_1(1.1,\id),\beta_2(1,\id),\ldots]}}    
  \lh_\cR^\ast
  \tuple{\mathsf{double}(\mathsf{s}(\mathsf{s}(\mathsf{0}))),{\nil}}
\]
\end{example}
Let us now explain why we need to store $\sigma'$ in a step of the
form $\tuple{s,\hs} \rh_\cR
\tuple{t,\beta(p,\sigma',\ol{\hs_n}):\hs}$.
Given a DCTRS, for each rule $l\to r \Leftarrow \ol{s_n\tto t_n}$, the
following conditions hold:
\begin{itemize}
\item 3-CTRS: $\var(r) \subseteq
  \var(l,\ol{s_n},\ol{t_n})$.
\item Determinism: for all $i=1,\ldots,n$, we have
  $\var(s_i)\subseteq\var(l,\ol{t_{i-1}})$. 
\end{itemize}
Intuitively, the backward relation $\lh_\cR$ can be seen as equivalent
to the forward relation $\rh_\cR$ but using a reverse rule of the form
$r\to l \Leftarrow t_n\tto s_n ,\ldots,t_1\tto s_1$.  Therefore, in
order to ensure that backward reduction is deterministic, we need the
same conditions as above but on the reverse rule:\footnote{We note
  that the notion of a non-erasing rule is extended to the DCTRSs in
  \cite{NSS12lmcs}, which results in a similar condition.}
\begin{itemize}
\item 3-CTRS: $\var(l) \subseteq \var(r,\ol{s_n},\ol{t_n})$.
\item Determinism: for all $i=1,\ldots,n$,
$\var(t_i)\subseteq\var(r,\ol{s_\interval{i+1}{n}})$.
\end{itemize}
Since these conditions cannot be guaranteed in general, we store
\[
\sigma' =
\sigma\!\!\res_{(\var(l)\backslash\var(r,\ol{s_n},\ol{t_n}))\cup
  \bigcup_{i=1}^n
  \var(t_i)\backslash\var(r,\ol{s_\interval{i+1}{n}})}
\]
in the trace term so that $(r\to l \Leftarrow t_n\tto s_n
,\ldots,t_1\tto s_1)\sigma'$ is deterministic and fulfills the
conditions of a 3-CTRS by construction, i.e., $\var(l\sigma')
\subseteq \var(r\sigma',\ol{s_n\sigma'},\ol{t_n\sigma'})$ and for all
$i=1,\ldots,n$,
$\var(t_i\sigma')\subseteq\var(r\sigma',\ol{s_\interval{i+1}{n}\sigma'})$;
see the proof of
Theorem~\ref{th:deterministic-cond} 
for more details.

\begin{example} \label{ex:needforvbles}
  Consider the following DCTRS:
  \[
  \begin{array}{l}
  \beta_1: ~ \mathsf{f}(x,y,m) ~ \to ~ \mathsf{s}(w) \Leftarrow \mathsf{h}(x)\tto
  x,\mathsf{g}(y,\mathsf{4})\tto w\\
  \beta_2: ~\mathsf{h}(\mathsf{0}) ~ \to ~ \mathsf{0} ~~~~~~~~~
  \beta_3: ~\mathsf{h}(\mathsf{1}) ~ \to ~ \mathsf{1} ~~~~~~~~~
  \beta_4: ~ \mathsf{g}(x,y) ~ \to ~ x \\
  \end{array}
  \]
  and the step
  $\tuple{\mathsf{f}(\mathsf{0},\mathsf{2},\mathsf{4}),\nil} \rh_\cR
  \tuple{\mathsf{s}(\mathsf{2}),[\beta_1(\epsilon,\sigma',\hs_1,\hs_2)]}$
  with $\sigma' = \{m\mapsto\mathsf{4},x\mapsto\mathsf{0}\}$, $\hs_1 =
  [\beta_2(\epsilon,\id)]$ and $\hs_2 =
  [\beta_4(\epsilon,\{y\mapsto\mathsf{4}\})]$.  The binding of
  variable $m$ is required to recover the value of the \emph{erased}
  variable $m$, but the binding of variable $x$ is also needed to
  perform the subderivation $\tuple{x,\pi_1} \lh_\cR
  \tuple{\mathsf{h}(x),\nil}$ when applying a backward step from
  $\tuple{\mathsf{s}(\mathsf{2}),[\beta_1(\epsilon,\sigma',\hs_1,\hs_2)]}$. If
  the binding for $x$ were unknown, this step would not be
  deterministic. As mentioned above, an instantiated reverse rule
  $(\mathsf{s}(w) \to \mathsf{f}(x,y,m) \Leftarrow w \tto
  \mathsf{g}(y,\mathsf{4}), x \tto\mathsf{h}(x))\sigma' ~=~
  \mathsf{s}(w) \to \mathsf{f}(\mathsf{0},y,\mathsf{4}) \Leftarrow w
  \tto \mathsf{g}(y,\mathsf{4}),\mathsf{0} \tto\mathsf{h}(\mathsf{0})$
  would be a legal DCTRS rule thanks to $\sigma'$.
\end{example}
We note that similar conditions could be defined for arbitrary
3-CTRSs. However, the conditions would be much more involved; e.g.,
one had to compute first the \emph{variable dependencies} between the
equations in the conditions. Therefore, we prefer to keep the simpler
conditions for DCTRSs (where these dependencies are fixed), which is
still quite a general class of CTRSs.

Reversible rewriting is also a conservative extension of rewriting for
DCTRSs (we omit the proof since it is straightforward):

\begin{theorem} \label{th:conservative} Let $\cR$ be a DCTRS. Given
  ground terms $s,t$, if $s\to_\cR^\ast t$, then for any trace $\hs$
  there exists a trace $\hs'$ such that $\tuple{s,\hs} \rh_\cR^\ast
  \tuple{t,\hs'}$.
\end{theorem}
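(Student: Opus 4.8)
The plan is to prove Theorem~\ref{th:conservative} by induction on the length of the derivation $s\to_\cR^\ast t$, mirroring the structure of the (unconditional) conservative extension result stated earlier. The base case is trivial: if $s=t$ via a zero-step derivation, then $\tuple{s,\hs}\rh_\cR^\ast\tuple{s,\hs}$ with $\hs'=\hs$. For the inductive step I would decompose the derivation as $s\to_\cR^\ast s_0\to_{p,\beta}t$, apply the induction hypothesis to obtain a trace $\hs_0$ with $\tuple{s,\hs}\rh_\cR^\ast\tuple{s_0,\hs_0}$, and then show that the single standard step $s_0\to_{p,\beta}t$ can be simulated by one reversible step $\tuple{s_0,\hs_0}\rh_\cR\tuple{t,\hs'}$.

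The key work lies in simulating that last step. Suppose it uses rule $\beta:l\to r\Leftarrow\ol{s_n\tto t_n}$ at position $p$ with ground substitution $\sigma$, so that $s_0|_p=l\sigma$, $s_i\sigma\to_\cR^\ast t_i\sigma$ for all $i=1,\dots,n$, and $t=s_0[r\sigma]_p$. To fire the reversible rule I must supply the traces $\hs_1,\dots,\hs_n$ for the condition subderivations. Here I would apply the induction hypothesis recursively to each condition subderivation $s_i\sigma\to_\cR^\ast t_i\sigma$, starting from the empty trace, to obtain traces $\hs_i$ with $\tuple{s_i\sigma,\nil}\rh_\cR^\ast\tuple{t_i\sigma,\hs_i}$. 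This matches exactly the premise of the forward reversible rewrite rule. Taking $\sigma'=\sigma\!\!\res_{(\var(l)\backslash\var(r,\ol{s_n},\ol{t_n}))\cup\bigcup_{i=1}^n\var(t_i)\backslash\var(r,\ol{s_\interval{i+1}{n}})}$ as prescribed by the definition, all the side conditions of the reversible step are met, yielding $\tuple{s_0,\hs_0}\rh_\cR\tuple{t,\beta(p,\sigma',\ol{\hs_n}):\hs_0}$, and we set $\hs'=\beta(p,\sigma',\ol{\hs_n}):\hs_0$.

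I expect the main subtlety to be the nested nature of the induction: because conditional rewriting invokes reachability $s_i\sigma\to_\cR^\ast t_i\sigma$ in its side conditions, the induction must be set up carefully so that the recursive appeal to the condition subderivations is well-founded. The cleanest approach is an induction on a measure that accounts for both the length of the main derivation and the total size of the subderivations proving the conditions — for instance, the total number of rewrite steps including those performed inside conditions (the usual depth/size measure for conditional rewriting). With such a measure the condition subderivations $s_i\sigma\to_\cR^\ast t_i\sigma$ are strictly smaller than the step $s_0\to_{p,\beta}t$ they justify, so the recursive application of the induction hypothesis to produce each $\hs_i$ is legitimate. Given that the authors describe this proof as straightforward and omit it, the only real obstacle is bookkeeping the well-founded measure; once that is fixed, each step simply transcribes the standard conditional rewrite step into the matching reversible rewrite step, and the result follows.
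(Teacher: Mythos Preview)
Your proposal is correct. The paper omits this proof entirely, calling it straightforward, so there is nothing to compare against directly; however, your argument is exactly the natural one, and the well-foundedness subtlety you flag is genuine and handled appropriately---the paper itself uses the lexicographic $(\text{depth},\text{length})$ measure you allude to in the proofs of the surrounding results (Lemma~\ref{lemma:safe-cond}, Theorem~\ref{th:conditional}), so your choice aligns with the paper's conventions.
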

For the following result, we need some preliminary notions (see, e.g.,
\cite{Terese03}).  For every oriented CTRS $\cR$, we inductively
define the TRSs $\cR_k$, $k\geq 0$, as follows:
\[
\begin{array}{l@{~}l@{~}l}
  \cR_0 & = & \emptyset \\
  \cR_{k+1} & = & \{ l\sigma \to r\sigma\mid l\to r\Leftarrow
  \ol{s_n\tto t_n}\in\cR, ~
  s_i\sigma
  \to^\ast_{\cR_k} t_i\sigma~\mbox{for all}~i=1,\ldots,n\}
\end{array}
\]
Observe that $\cR_k\subseteq\cR_{k+1}$ for all $k\geq 0$. We have
$\to_\cR = \bigcup_{i\geq 0} \to_{\cR_i}$.  We also have $s\to_\cR t$
iff $s\to_{\cR_k} t$ for some $k\geq 0$. The minimum such $k$ is
called the \emph{depth} of $s\to_\cR t$, and the maximum depth $k$ of
$s = s_0 \to_{\cR_{k_1}} 
\cdots \to_{\cR_{k_m}} s_m = t$ (i.e., $k$ is the maximum of depths
$k_1,\ldots,k_m$) is called the \emph{depth} of the derivation.
If a derivation has depth $k$ and length $m$, we write $s \to_{\cR_k}^m t$.
Analogous notions can naturally be defined for $\rh_\cR$, $\lh_\cR$,
and $\rlh_\cR$.

The next result shows that safe pairs are also preserved through
reversible rewriting with DCTRSs:

\begin{lemma} \label{lemma:safe-cond} Let $\cR$ be a DCTRS and
  $\tuple{s,\hs}$ a safe pair. If $\tuple{s,\hs} \rlh_\cR^\ast
  \tuple{t,\hs'}$, then $\tuple{t,\hs'}$ is also safe.
\end{lemma}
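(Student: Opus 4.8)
The plan is to mirror the proof of Lemma~\ref{lemma:safe}, arguing by induction on the derivation and splitting on the direction of the last step, but with an additional, \emph{nested} induction to handle the subtraces recorded for the conditions. Concretely, I would induct on the pair $(k,m)$, ordered lexicographically, where $k$ is the depth of the derivation $\tuple{s,\hs} \rlh_\cR^\ast \tuple{t,\hs'}$ (in the sense of the $\cR_k$ approximations recalled just before the lemma) and $m$ is its length. The base cases (empty derivation, or $k=0$) are immediate, since $\tuple{s,\hs}$ is safe by assumption and $\cR_0=\emptyset$.

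For the inductive step I would write the derivation as $\tuple{s,\hs} \rlh_\cR^\ast \tuple{s_0,\hs_0} \rlh_\cR \tuple{t,\hs'}$ and, by the induction hypothesis applied to the strictly shorter prefix, assume $\tuple{s_0,\hs_0}$ is safe. The backward case $\tuple{s_0,\hs_0} \lh_\cR \tuple{t,\hs'}$ is as easy as in the unconditional setting: such a step only strips the head trace term off $\hs_0$, so $\hs'$ is a tail of the safe trace $\hs_0$; every trace term surviving in $\hs'$ already satisfied its domain condition and carried safe subtraces, so $\hs'$ is safe by definition.

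The forward case $\tuple{s_0,\hs_0} \rh_\cR \tuple{t,\hs'}$ is where the real work lies. Here $\hs' = \beta(p,\sigma',\hs_1,\ldots,\hs_n):\hs_0$ for some rule $\beta:l\to r\Leftarrow\ol{s_n\tto t_n}\in\cR$, a ground substitution $\sigma$ with $s_0|_p=l\sigma$, and condition subderivations $\tuple{s_i\sigma,\nil}\rh_\cR^\ast\tuple{t_i\sigma,\hs_i}$ for $i=1,\ldots,n$. By definition of a safe trace, and since $\hs_0$ is already known to be safe, it suffices to check that $\dom(\sigma')$ equals the prescribed set and that each $\hs_i$ is safe. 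The first holds by construction, as $\sigma'$ is defined to be exactly the restriction of $\sigma$ to that set. For the second I appeal to the depth component of the induction hypothesis: each condition subderivation is carried out at a strictly lower level (its steps lie in $\cR_{k-1}$ at most), hence has depth strictly smaller than $k$, and it starts from the safe pair $\tuple{s_i\sigma,\nil}$ (the empty trace is trivially safe). Thus each $\hs_i$ is safe, so $\beta(p,\sigma',\hs_1,\ldots,\hs_n)$ is a safe trace term and $\hs'$ is safe.

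I expect the only genuine obstacle to be organising the induction so that the condition subderivations are covered by a \emph{strictly smaller} instance: one has to confirm that the depth of each subderivation $\tuple{s_i\sigma,\nil}\rh_\cR^\ast\tuple{t_i\sigma,\hs_i}$ really is below the depth of the step that consumes it, which is precisely what the $\cR_k$ stratification guarantees. Once this is in place, the domain bookkeeping for $\sigma'$ and the backward case are routine, exactly as in the unconditional Lemma~\ref{lemma:safe}.
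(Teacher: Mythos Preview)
Your proposal is correct and follows essentially the same approach as the paper: both argue by induction on the lexicographic product $(k,m)$ of depth and length, split on the direction of the last step, dispatch the backward case trivially, and handle the forward case by noting that $\sigma'$ has the required domain by construction while the subtraces $\hs_i$ are safe by the induction hypothesis on strictly smaller depth. Your write-up is in fact slightly more explicit than the paper's in spelling out why the subderivations start from safe pairs (namely, $\nil$ is trivially safe).
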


\begin{proof} 
  We prove the claim by induction on the lexicographic product $(k,m)$
  of the depth $k$ and the length $m$ of the derivation $\tuple{s,\hs}
  \rlh_{\cR_k}^m \tuple{t,\hs'}$.  Since the base case is trivial, we
  consider the inductive case $(k,m)>(0,0)$. Consider a derivation
  $\tuple{s,\hs} \rlh_{\cR_k}^{m-1} \tuple{s_0,\hs_0} \rlh_{\cR_k}
  \tuple{t,\hs'}$.  By the induction hypothesis, we have that
  $\tuple{s_0,\hs_0}$ is safe. Now, we distinguish two cases depending
  on the last step. If the last step is $\tuple{s_0,\hs_0} \rh_{\cR_k}
  \tuple{t,\hs'}$, then there exist a position $p\in\pos(s_0)$, a
  rewrite rule $\beta: l \to r \Leftarrow \ol{s_n\tto t_n} \in \cR$,
  and a ground substitution $\sigma$ such that $s_0|_p = l\sigma$,
  $\tuple{s_i\sigma,\nil}\rh_{\cR_{k_i}}^\ast \tuple{t_i\sigma,\hs_i}$
  for all $i=1,\ldots,n$, $t = s_0[r\sigma]_p$, $\sigma' =
  \sigma\!\!\res_{(\var(l)\backslash\var(r,\ol{s_n},\ol{t_n}))\cup
    \bigcup_{i=1}^n
    \var(t_i)\backslash\var(r,\ol{s_\interval{i+1}{n}})}$, and $\hs' =
  \beta(p,\sigma',\hs_1,\ldots,\hs_n)$. Then, since $k_i<k$,
  $i=1,\ldots,n$, $\sigma'$ is ground and $\dom(\sigma')=
  (\var(l)\backslash\var(r,\ol{s_n},\ol{t_n}))\cup \bigcup_{i=1}^n
  \var(t_i)\backslash\var(r,\ol{s_\interval{i+1}{n}})$ by
  construction, the claim follows by induction.
  Finally, if the last step has the form $\tuple{s_0,\hs_0}
  \lh_{\cR_k} \tuple{t,\hs'}$, then the claim follows trivially since
  a step with $\lh_\cR$ only removes trace terms from $\hs_0$. \qed
\end{proof}
As in the unconditional case, the following proposition follows
straightforwardly from the previous lemma since any pair with an empty
trace is safe.

\begin{proposition} \label{th:safe2} Let $\cR$ be a DCTRS. If
  $\tuple{s,\nil} \rlh_\cR^\ast \tuple{t,\pi}$, then $\tuple{t,\pi}$
  is safe in $\cR$.
\end{proposition}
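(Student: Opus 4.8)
The plan is to mirror exactly the argument used for the unconditional case in Proposition~\ref{th:safe}, now invoking the conditional analogue Lemma~\ref{lemma:safe-cond} in place of Lemma~\ref{lemma:safe}. The statement asserts that any pair reachable from $\tuple{s,\nil}$ via $\rlh_\cR^\ast$ is safe, so the whole result should follow by a single application of the preservation lemma once we establish the base case.

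First I would observe that the pair $\tuple{s,\nil}$ is trivially safe: by the definition of a safe pair for DCTRSs, safety is a condition quantified over all trace terms $\beta(p,\sigma,\ol{\hs_n})$ occurring in the trace $\hs$, and the empty list $\nil$ contains no trace terms, so the condition holds vacuously. This gives us a safe starting pair.

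Next I would apply Lemma~\ref{lemma:safe-cond} directly. That lemma states that if $\tuple{s,\hs}$ is safe and $\tuple{s,\hs} \rlh_\cR^\ast \tuple{t,\hs'}$, then $\tuple{t,\hs'}$ is safe. Instantiating $\hs$ with $\nil$ and using the base case above, we conclude that $\tuple{t,\pi}$ is safe whenever $\tuple{s,\nil} \rlh_\cR^\ast \tuple{t,\pi}$, which is precisely the claim.

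Since this is a one-line consequence of the preceding lemma, I do not anticipate any genuine obstacle here; the real technical content (the induction on the lexicographic product of depth and length, and the careful verification of the domain of $\sigma'$ in the forward case) has already been discharged inside the proof of Lemma~\ref{lemma:safe-cond}. The only point requiring a moment's care is the vacuous-safety observation for $\nil$, which one should state explicitly so that the application of the lemma is properly grounded.
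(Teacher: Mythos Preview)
Your proposal is correct and matches the paper's own argument exactly: the paper also notes that any pair with an empty trace is trivially safe and then invokes Lemma~\ref{lemma:safe-cond} to conclude. There is nothing to add.
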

Now, we can already state the reversibility of $\rh_\cR$ for DCTRSs:

\begin{theorem} \label{th:conditional} Let $\cR$ be a DCTRS. Given the
  safe pairs $\tuple{s,\hs}$ and $\tuple{t,\hs'}$, for all $k,m\geq
  0$, $\tuple{s,\hs} \rh_{\cR_k}^m \tuple{t,\hs'}$ iff $\tuple{t,\hs'}
  \lh_{\cR_k}^m \tuple{s,\hs}$.
\end{theorem}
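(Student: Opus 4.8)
The plan is to mirror the structure of the unconditional proof (Theorem~\ref{th:unconditional}), but replace the single induction on the number of steps by a nested induction on the lexicographic product $(k,m)$ of the depth $k$ and the length $m$ of the derivation, exactly as in Lemma~\ref{lemma:safe-cond}. This double measure is needed because each reversible step $\tuple{s,\hs}\rh_{\cR_k}\tuple{t,\hs'}$ spawns subderivations $\tuple{s_i\sigma,\nil}\rh_{\cR_{k_i}}^\ast\tuple{t_i\sigma,\hs_i}$ of strictly smaller depth $k_i<k$, so I can invoke the induction hypothesis on those subderivations (smaller first component) in order to reverse them, while the outermost step decreases $m$ at fixed $k$. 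I would prove both directions $(\Rightarrow)$ and $(\Leftarrow)$, appealing to Lemma~\ref{lemma:safe-cond} at the start of each to guarantee that every intermediate pair is safe, which is what legitimizes applying the backward relation at all.

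For the $(\Rightarrow)$ direction, I would take a derivation $\tuple{s,\hs}\rh_{\cR_k}^{m-1}\tuple{s_0,\hs_0}\rh_{\cR_k}\tuple{t,\hs'}$ and apply the induction hypothesis (on $m-1$ at the same depth $k$) to obtain $\tuple{s_0,\hs_0}\lh_{\cR_k}^{m-1}\tuple{s,\hs}$. The real work is to reverse the last step. From the definition of $\rh_\cR$ I extract the position $p$, the rule $\beta:l\to r\Leftarrow\ol{s_n\tto t_n}$, the ground substitution $\sigma$ with $s_0|_p=l\sigma$, the subderivations $\tuple{s_i\sigma,\nil}\rh_{\cR_{k_i}}^\ast\tuple{t_i\sigma,\hs_i}$, and the stored restriction $\sigma'$. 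I then set $\theta=\sigma\!\!\res_{\var(r,\ol{s_n})\backslash\dom(\sigma')}$ and check the three obligations in the definition of $\lh_\cR$: that $t|_p=r\theta$, that the conditions reverse as $\tuple{t_i\theta\sigma',\hs_i}\lh_{\cR_{k_i}}^\ast\tuple{s_i\theta\sigma',\nil}$, and that $t[l\theta\sigma']_p=s_0$. The condition-reversals follow from the induction hypothesis applied to each subderivation, which is permissible precisely because $k_i<k$; the reconstruction $t[l\theta\sigma']_p=s_0$ follows from $\theta\sigma'=\sigma\!\!\res_{\var(l)}$ together with $s_0|_p=l\sigma$, exactly as in the unconditional proof.

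The $(\Leftarrow)$ direction is symmetric: I decompose a backward derivation as $\tuple{t,\hs'}\lh_{\cR_k}^{m-1}\tuple{s_0,\hs_0}\lh_{\cR_k}\tuple{s,\hs}$, use the induction hypothesis to obtain $\tuple{s_0,\hs_0}\rh_{\cR_k}^{m-1}\tuple{t,\hs'}$, and reconstruct the forward step, recovering $\sigma=\theta\sigma'$ and re-deriving the conditions forward via the induction hypothesis on the (smaller-depth) subderivations. The main obstacle, and the place I would spend the most care, is the bookkeeping of the substitutions in the conditional setting: I must verify that $\theta\cup\sigma'$ genuinely reconstructs the full substitution $\sigma$ over $\var(l,\ol{s_n},\ol{t_n})$, using the two DCTRS properties (the 3-CTRS property $\var(r)\subseteq\var(l,\ol{s_n},\ol{t_n})$ and determinism) together with the definition of $\dom(\sigma')$ to show that $\dom(\theta)\cap\dom(\sigma')=\emptyset$ and that $\theta$ is uniquely determined by $t|_p=r\theta$. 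This is the conditional analogue of the single line $\sigma=\theta\sigma'$ in Theorem~\ref{th:unconditional}, but here it requires justifying that the reverse rule $(r\to l\Leftarrow t_n\tto s_n,\ldots,t_1\tto s_1)\sigma'$ is a well-formed deterministic rule, which is the very point established in the discussion preceding Example~\ref{ex:needforvbles} and in Theorem~\ref{th:deterministic-cond}.
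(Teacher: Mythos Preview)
Your proposal is correct and follows essentially the same route as the paper: induction on the lexicographic product $(k,m)$, decomposing the derivation at its last step, reversing the outer step by setting $\theta=\sigma\!\!\res_{\var(r,\ol{s_n})\backslash\dom(\sigma')}$, and invoking the induction hypothesis on the condition subderivations at depth $k_i<k$. Your appeal to Lemma~\ref{lemma:safe-cond} for the safety of intermediate pairs is in fact the right choice (the paper cites Proposition~\ref{th:safe2}, which only covers pairs reachable from an empty trace, whereas the theorem's hypothesis merely assumes the endpoints are safe).
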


\begin{proof}
  $(\Rightarrow)$ We prove the claim by induction on the lexicographic
  product $(k,m)$ of the depth $k$ and the length $m$ of the
  derivation $\tuple{s,\hs} \rh_{\cR_k}^m \tuple{t,\hs'}$. Since the base
  case is trivial, we consider the inductive case
  $(k,m)>(0,0)$. Consider a derivation
  $\tuple{s,\hs} \rh_{\cR_k}^{m-1} \tuple{s_0,\hs_0} \rh_{\cR_k}
  \tuple{t,\hs'}$
  whose associated product is $(k,m)$. By Proposition~\ref{th:safe2},
  both $\tuple{s_0,\hs_0}$ and $\tuple{t,\hs'}$ are safe. By the
  induction hypothesis, since $(k,m-1)<(k,m)$, we have
  $\tuple{s_0,\hs_0} \lh_{\cR_k}^{m-1} \tuple{s,\hs}$. Consider now
  the step $\tuple{s_0,\hs_0} \rh_{\cR_k} \tuple{t,\hs'}$. Thus, there
  exist a position $p\in\pos(s_0)$, a rule
  $\beta:l\to r \Leftarrow \ol{s_n\tto t_n}\in\cR$, and a ground substitution
  $\sigma$ such that $s_0|_p=l\sigma$,
  $\tuple{s_i\sigma,\nil}\rh_{\cR_{k_i}}^\ast
  \tuple{t_i\sigma,\hs_i}$
  for all $i=1,\ldots,n$, $t=s_0[r\sigma]_p$,
  $\sigma'=\sigma\!\!\res_{(\var(l)\backslash\var(r,\ol{s_n},\ol{t_n}))\cup
    \bigcup_{i=1}^n
    \var(t_i)\backslash\var(r,\ol{s_\interval{i+1}{n}})}$,
  and $\hs'=\beta(p,\sigma',\hs_1,\ldots,\hs_n):\hs_0$.  By definition
  of $\rh_{\cR_k}$, we have that $k_i<k$ and, thus,
  $(k_i,m_1)<(k,m_2)$ for all $i=1,\ldots,n$ and for all
  $m_1,m_2$. Hence, by the induction hypothesis, we have
  $\tuple{t_i\sigma,\hs_i}\lh_{\cR_{k_i}}^\ast
  \tuple{s_i\sigma,\nil}$ for all $i=1,\ldots,n$.
  Let $\theta =
  \sigma\!\res_{\var(r,\ol{s_n})\backslash\dom(\sigma')}$, so that
  $\sigma=\theta\sigma'$ and
  $\dom(\theta)\cap\dom(\sigma')=\emptyset$. Therefore, we have
  $\tuple{t,\hs'} \lh_{\cR_k} \tuple{s'_0,\hs_0}$ with $t|_p=r\theta$,
  $\beta:l\to r\Leftarrow \ol{s_n\tto t_n}\in\cR$ and
  $s'_0=t[l\theta\sigma']_p = t[l\sigma]_p = s_0$, and the claim
  follows.

  $(\Leftarrow)$ This direction proceeds in a similar way. We prove
  the claim by induction on the lexicographic product $(k,m)$ of the
  depth $k$ and the length $m$ of the considered derivation. Since the
  base case is trivial, let us consider the inductive case
  $(k,m)>(0,0)$.
  Consider a derivation $\tuple{t,\hs'} \lh_{\cR_k}^{m-1}
  \tuple{s_0,\hs_0} \lh_{\cR_k} \tuple{s,\hs}$ whose associated
  product is $(k,m)$. By Proposition~\ref{th:safe2}, both
  $\tuple{s_0,\hs_0}$ and $\tuple{s,\hs}$ are safe. By the induction
  hypothesis, since $(k,m-1)<(k,m)$, we have $\tuple{s_0,\hs_0}
  \rh_{\cR_k}^{m-1} \tuple{t,\hs'}$. Consider now the step
  $\tuple{s_0,\hs_0} \lh_{\cR_k} \tuple{s,\hs}$. Then, we have $\hs_0
  = \beta(p,\sigma',\hs_1,\ldots,\hs_n):\hs$, $\beta:l \to r\Leftarrow
  \ol{s_n\tto t_n}\in\cR$, and there exists a ground substitution
  $\theta$ with $\dom(\theta)=\var(r,\ol{s_n})\backslash\dom(\sigma')$
  such that $s_0|_p= r\theta$, $\tuple{t_i\theta\sigma',\hs_i}
  \lh_{\cR_{k_i}}^\ast \tuple{s_i\theta\sigma',\nil} $ for all
  $i=1,\ldots,n$, and $s=s_0[l\theta\sigma']_p$. Moreover, since
  $\tuple{s_0,\hs_0}$ is safe, we have that
  $\dom(\sigma')=(\var(l)\backslash\var(r,\ol{s_n},\ol{t_n}))\cup
  \bigcup_{i=1}^n \var(t_i)\backslash\var(r,\ol{s_\interval{i+1}{n}})$.
  By definition of $\lh_{\cR_k}$, we have that $k_i<k$ and, thus,
  $(k_i,m_1)<(k,m_2)$ for all $i=1,\ldots,n$ and for all
  $m_1,m_2$. By the induction hypothesis, we have
  $\tuple{s_i\theta\sigma',\nil}\rh_{\cR_{k_i}}^\ast
  \tuple{t_i\theta\sigma',\hs_i}$ for all $i=1,\ldots,n$.  Let
  $\sigma = \theta\sigma'$, with
  $\dom(\theta)\cap\dom(\sigma')=\emptyset$. Then, since
  $s|_p = l\sigma$, we can perform the step
  $\tuple{s,\hs} \rh_{\cR_k}
  \tuple{s'_0,\beta(p,\sigma',\hs_1,\ldots,\hs_n):\hs}$ with
  $s'_0 = s[r\sigma]_p = s[r\theta\sigma']_p$; moreover,
  $s[r\theta\sigma']_p= s[r\theta]_p = s_0[r\theta]_p = s_0$ since
  $\dom(\sigma')\cap\var(r)=\emptyset$, which concludes the
  proof. \qed
\end{proof}
In the following, we say that $\tuple{t,\hs'} \lh_\cR \tuple{s,\hs}$
is a \emph{deterministic} step if there is no other, different pair
$\tuple{s'',\hs''}$ with $\tuple{t,\hs'} \lh_\cR \tuple{s'',\hs''}$
and, moreover, the subderivations for the equations in the condition
of the applied rule (if any) are deterministic, too. We say that a
derivation $\tuple{t,\hs'} \lh_\cR^\ast \tuple{s,\hs}$ is
deterministic if each reduction step in the derivation is
deterministic.

Now, we can already prove that backward reversible rewriting is also
deterministic, as in the unconditional case:

\begin{theorem} \label{th:deterministic-cond} Let $\cR$ be a
  DCTRS. Let $\tuple{t,\hs'}$ be a safe pair with $\tuple{t,\hs'}
  \lh^\ast_\cR \tuple{s,\hs}$ for some term $s$ and trace $\hs$. Then
  $\tuple{t,\hs'} \lh^\ast_\cR \tuple{s,\hs}$ is deterministic.
\end{theorem}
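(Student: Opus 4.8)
The plan is to prove the slightly stronger statement that \emph{every} backward derivation from a safe pair is deterministic, by induction on the lexicographic product $(k,m)$ of the depth $k$ and the length $m$ of the derivation $\tuple{t,\hs'}\lh^\ast_{\cR_k}\tuple{s,\hs}$, in the same style as Theorem~\ref{th:conditional}. By Lemma~\ref{lemma:safe-cond} every pair occurring in the derivation is safe, so the side condition in the definition of $\lh_\cR$ is always satisfied. The base case is trivial. For the inductive case I would write the derivation as $\tuple{t,\hs'}\lh_{\cR_k}\tuple{s_0,\hs_0}\lh^\ast_{\cR_k}\tuple{s,\hs}$ and argue two things: (i) the first step is the \emph{unique} possible backward step and its condition subderivations are deterministic; and (ii) the remaining tail is deterministic by the induction hypothesis, since its length is $m-1$ and its depth is at most $k$, so $(k,m-1)<(k,m)$. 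Combining (i) and (ii) gives the claim.

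The core is (i). The head trace term $\beta(p,\sigma',\hs_1,\ldots,\hs_n)$ fixes both the position $p$ and, since labels are unique in $\cR$, the rule $\beta:l\to r\Leftarrow\ol{s_n\tto t_n}$; so the only freedom is in the ground substitution $\theta$ with $\dom(\theta)=\var(r,\ol{s_n})\backslash\dom(\sigma')$. Matching $r$ against the ground term $t|_p$ fixes $\theta$ on $\var(r)\backslash\dom(\sigma')$ uniquely, because a ground term admits at most one matcher. The remaining variables, those in $\var(\ol{s_n})\backslash\var(r)\backslash\dom(\sigma')$, I would pin down by processing the reverse condition $t_n\tto s_n,\ldots,t_1\tto s_1$ in that order. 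By the determinism condition for the \emph{instantiated} reverse rule (see the discussion preceding Example~\ref{ex:needforvbles}), $\var(t_i\sigma')\subseteq\var(r\sigma',\ol{s_\interval{i+1}{n}\sigma'})$, so once $\theta$ is known on $\var(r)$ and on $\var(\ol{s_\interval{i+1}{n}})$ the term $t_i\theta\sigma'$ is ground. The subderivation $\tuple{t_i\theta\sigma',\hs_i}\lh^\ast_{\cR_{k_i}}\tuple{s_i\theta\sigma',\nil}$ then starts from a fully determined pair, and since $k_i<k$ the induction hypothesis applies: it is deterministic and reaches a unique ground term, which must equal $s_i\theta\sigma'$. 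Matching $s_i$ (already instantiated on its variables lying in $\var(r,\ol{s_\interval{i+1}{n}})$) against that term fixes $\theta$ on the new variables of $s_i$ uniquely. Iterating down to $i=1$ determines $\theta$ on all of $\var(r,\ol{s_n})\backslash\dom(\sigma')$, and by the $3$-CTRS condition for the reverse rule $l\theta\sigma'$ is then ground, so $s_0=t[l\theta\sigma']_p$ is uniquely determined and all condition subderivations are deterministic.

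This establishes that the first step is deterministic in the sense of the definition preceding the theorem, and together with (ii) proves the statement. The main obstacle is precisely the uniqueness of $\theta$ beyond the part fixed by matching $r$: the extra variables introduced by the conditions do not appear in the equation $r\theta=t|_p$, so recovering them forces us to use both the order-sensitive determinism condition of the instantiated reverse rule and the inductive determinism of the condition subderivations. This is exactly why the induction must be organized on the depth $k$ (so that each $\tuple{t_i\theta\sigma',\hs_i}\lh^\ast_{\cR_{k_i}}\tuple{s_i\theta\sigma',\nil}$ with $k_i<k$ is covered) rather than on length alone; everything else---uniqueness of the position, of the rule, and of matchers against ground terms---is routine and parallels the unconditional Theorem~\ref{th:deterministic}.
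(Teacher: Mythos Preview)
Your proposal is correct and follows essentially the same approach as the paper: induction on the lexicographic product $(k,m)$ of depth and length, with the key step being that the instantiated reverse rule $r\sigma'\to l\sigma'\Leftarrow t_n\sigma'\tto s_n\sigma',\ldots,t_1\sigma'\tto s_1\sigma'$ satisfies the DCTRS conditions, so $\theta$ is uniquely determined. Two minor organizational differences are worth noting: you split off the \emph{first} step of the derivation and push the tail through the induction hypothesis on length, whereas the paper splits off the \emph{last} step; and you spell out explicitly how $\theta$ is pinned down incrementally by processing the conditions from $i=n$ down to $i=1$, while the paper compresses this into the single observation that the instantiated reverse rule is a DCTRS rule and hence admits a deterministic evaluation procedure. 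Both organizations are sound and amount to the same argument.
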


\begin{proof}
  We prove the claim by induction on the lexicographic product $(k,m)$
  of the depth $k$ and the length $m$ of the steps.  The case $m=0$ is
  trivial, and thus we let $m > 0$.  Assume $\tuple{t,\hs'}
  \lh^{m-1}_{\cR_k} \tuple{u,\hs''} \lh_{\cR_k} \tuple{s,\hs}$.  For
  the base case $k=1$, the applied rule is unconditional and the proof
  is analogous to that of Theorem~\ref{th:deterministic}.

  Let us now consider $k>1$. By definition, if $\tuple{u,\hs''}
  \lh_{\cR_k} \tuple{s,\hs}$, we have
  $\hs''=\beta(p,\sigma',\hs_1,\ldots,\hs_n):\hs$, $\beta:l\to
  r\Leftarrow \ol{s_n\tto t_n}\in\cR$ and there exists a ground
  substitution $\theta$ with $\dom(\theta)=\var(r)$ such that $u|_p =
  r\theta$, $\tuple{t_i\theta\sigma',\hs_i} \lh_{\cR_j}^\ast
  \tuple{s_i\theta\sigma',\nil}$, $j<k$, for all $i=1,\ldots,n$, and
  $s = t[l\theta\sigma']_p$.  By the induction hypothesis, the
  subderivations $\tuple{t_i\theta\sigma',\hs_i} \lh_{\cR_j}^\ast
  \tuple{s_i\theta\sigma',\nil}$ are deterministic, i.e.,
  $\tuple{s_i\theta\sigma',\nil}$ is a unique resulting term obtained
  by reducing $\tuple{t_i\theta\sigma',\hs_i}$. Therefore, the only
  remaining source of nondeterminism can come from choosing a rule
  labeled with $\beta$ and from the computed substitution $\theta$. On
  the one hand, the labels are unique in $\cR$.  As for $\theta$, we
  prove that this is indeed the only possible substitution for the
  reduction step. Consider the instance of rule $l\to r\Leftarrow
  \ol{s_n\tto t_n}$ with $\sigma'$: $l\sigma'\to r\sigma'\Leftarrow
  \ol{s_n\sigma'\tto t_n\sigma'}$. Since $\tuple{u,\hs''}$ is safe, we
  have that $\sigma'$ is a ground substitution and $\dom(\sigma')=
  (\var(l)\backslash\var(r,\ol{s_n},\ol{t_n}))\cup \bigcup_{i=1}^n
  \var(t_i)\backslash\var(r,\ol{s_\interval{i+1}{n}})$. Then, the
  following properties hold:
  \begin{itemize}
  \item $\var(l\sigma') \subseteq
    \var(r\sigma',\ol{s_n\sigma'},\ol{t_n\sigma'})$, since $\sigma'$
    is ground and it covers all the variables in
    $\var(l)\backslash\var(r,\ol{s_n},\ol{t_n})$.

  \item
    $\var(t_i\sigma')\subseteq\var(r\sigma',\ol{s_\interval{i+1}{n}\sigma'})$
    for all $i=1,\ldots,n$, since $\sigma'$ is ground and it covers
    all variables in $\bigcup_{i=1}^n
    \var(t_i)\backslash\var(r,\ol{s_\interval{i+1}{n}})$.
  \end{itemize}
  The above properties guarantee that a rule of the form $r\sigma'\to
  l\sigma'\Leftarrow t_n\sigma'\tto s_n\sigma',\ldots,t_1\sigma'\tto
  s_1\sigma'$ can be seen as a rule of a DCTRS and, thus, there exists
  a deterministic procedure to compute $\theta$, which completes the
  proof. \qed
\end{proof}
Therefore, $\lh_\cR$ is deterministic and confluent. Termination is
trivially guaranteed for pairs with a finite trace since the trace's
length strictly decreases with every backward step.

\section{Removing Positions from Traces} \label{sec:irr}

Once we have a feasible definition of reversible rewriting, there are
two refinements that can be considered: i) reducing the size of the
traces and ii) defining a \emph{reversibilization} transformation so
that standard rewriting becomes reversible in the transformed
system. In this section, we consider the first problem, leaving the
second one for the next section. 

In principle, one could remove information from the traces by
requiring certain conditions on the considered systems. For instance,
requiring injective functions may help to remove rule labels from
trace terms. Also, requiring \emph{non-erasing} rules may help to
remove the second component of trace terms (i.e., the
substitutions). In this section, however, we deal with a more
challenging topic: removing positions from traces. This is useful not
only to reduce the size of the traces but it is also essential to
define a reversibilization technique for DCTRSs in the next
section.\footnote{We note that defining a transformation with traces
  that include positions would be a rather difficult task because
  positions are \emph{dynamic} (i.e., they depend on the term being
  reduced) and thus would require a complex (and inefficient) system
  instrumentation.}
In particular, we aim at transforming a given DCTRS into one that
fulfills some conditions that make storing positions unnecessary.

In the following, given a CTRS $\cR$, we say that a term $t$ is
\emph{basic} \cite{HM08} if it has the form $f(\ol{t_n})$ with
$f\in\cD_\cR$ a defined function symbol and
$\ol{t_n}\in\cT(\cC_\cR,\cV)$ constructor terms.
Furthermore, in the remainder of this paper, we assume that the
right-hand sides of the equations in the conditions of the rules of a
DCTRS are constructor terms. This is not a significant restriction
since these terms cannot be reduced anyway (since we consider oriented
equations in this paper), and still covers most practical examples.

Now, we introduce the following subclass of DCTRSs:

\begin{definition}[\pc\ \cite{NSS12}]
  We say that a DCTRS $\cR$ is a \pc\ (``pc'' stands for \emph{pure
    constructor}) if, for each rule
  $l\to r \Leftarrow \ol{s_n\tto t_n}\in \cR$, we have that $l$ and
  $\ol{s_n}$ are basic terms and $r$ and $\ol{t_n}$ are constructor
  terms.
\end{definition}
Pure constructor systems are called \emph{normalized} systems in
\cite{AV06}. Also, they are mostly equivalent to the class III$_n$ of
conditional systems in \cite{BK86}, where $t_1,\ldots,t_n$ are
required to be ground unconditional normal forms
instead.\footnote{Given a CTRS $\cR$, we define $\cR_u = \{ l\to r
  \mid l\to r \Leftarrow \ol{s_n\tto t_n}\in\cR\}$. A term is an
  \emph{unconditional} normal form in $\cR$, if it is a normal form in
  $\cR_u$.}

In principle, any DCTRS with basic terms in the left-hand sides (i.e.,
a \emph{constructor} DCTRS) and constructor terms in the right-hand
sides of the equations of the rules can be transformed into a \pc\ by
applying a few simple transformations: flattening and simplification
of constructor conditions.
Let us now consider each of these transformations separately.  Roughly
speaking, flattening involves transforming a term (occurring, e.g., in
the right-hand side of a DCTRS or in the condition) with nested
defined functions like $\mathsf{f}(\mathsf{g}(x))$ into a term
$\mathsf{f}(y)$ and an (oriented) equation $\mathsf{g}(x)\tto y$,
where $y$ is a fresh variable. Formally,

\begin{definition}[flattening]
  Let $\cR$ be a CTRS, $R = (l\to r\Leftarrow\ol{s_n\tto t_n})\in\cR$
  be a rule and $R'$ be a new rule either of the form $l\to r
  \Leftarrow s_1\tto t_1,\ldots,s_i|_p \tto w,s_i[w]_p\tto
  t_i,\ldots,s_n\tto t_n$, for some $p\in\pos(s_i)$, $1\sleq i\sleq
  n$, or $l\to r[w]_q \Leftarrow \ol{s_n\tto t_n},r|_q\tto w$, for
  some $q\in\pos(r)$, where $w$ is a fresh variable.\footnote{The
    positions $p,q$ can be required to be different from $\toppos$,
    but this is not strictly necessary.}
  Then, a CTRS $\cR'$ is obtained from $\cR$ by a \emph{flattening}
  step if $\cR' = (\cR\backslash\{R\})\cup\{R'\}$.
\end{definition}
Note that, if an unconditional rule is non-erasing (i.e.,
$\var(l)\subseteq\var(r)$ for a rule $l\to r$), any conditional rule
obtained by flattening is trivially non-erasing too, according to the
notion of non-erasingness for DCTRSs in
\cite{NSS12lmcs}.\footnote{Roughly, a DCTRS is considered non-erasing
  in \cite{NSS12lmcs} if its transformation into an unconditional TRS
  by an unraveling transformation gives rise to a non-erasing TRS.}

Flattening is trivially \emph{complete} since any flattening step can
be undone by binding the fresh variable again to the selected subterm
and, then, proceeding as in the original system. Soundness is more
subtle though. In this work, we prove the correctness of flattening
for arbitrary DCTRSs with respect to \emph{innermost} rewriting.  As usual,
the innermost rewrite relation, in symbols, $\inn_\cR$, is defined as
the smallest binary relation satisfying the following: given ground
terms $s,t\in\cT(\cF)$, we have $s \inn_\cR t$ iff there exist a
position $p$ in $s$ such that no proper subterms of $s|_p$ are
reducible, a rewrite rule $l \to r\Leftarrow \ol{s_n\tto t_n} \in
\cR$, and a normalized ground substitution $\sigma$ such that $s|_p =
l\sigma$, $s_i\sigma \inns_\cR \: t_i\sigma$,
for all $i=1,\ldots,n$, and $t = s[r\sigma]_p$.

In order to prove the correctness of flattening, we state the
following auxiliary lemma:

\begin{lemma} \label{lemma:flatening} Let $\cR$ be a DCTRS. Given
  terms $s$ and $t$, with $t$ a normal form, and a position $p\in\pos(s)$, we have
  $s \inns_\cR \: t$ iff $s|_p \inns_\cR \: w\sigma$ and
  $s[w\sigma]_p\inns_\cR \:t$, for some fresh variable $w$ and
  normalized substitution $\sigma$.
\end{lemma}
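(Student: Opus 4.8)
The plan is to prove the two implications separately. For the easy direction $(\Leftarrow)$, I would rely on the fact that innermost rewriting is closed under contexts: the innermostness condition constrains only the proper subterms of the contracted redex, and these are unaffected by the surrounding context. Hence from $s|_p \inns_\cR w\sigma$ I obtain $s = s[s|_p]_p \inns_\cR s[w\sigma]_p$ by lifting each step into the context $s[\,\cdot\,]_p$, and concatenating with the assumed derivation $s[w\sigma]_p \inns_\cR t$ yields $s \inns_\cR t$. The only thing to verify here is that lifting a step through a fixed context preserves innermostness, which is immediate since the proper subterms of the redex are identical in both views.

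For the main direction $(\Rightarrow)$, I would argue by induction on the length $m$ of the derivation $s \inns_\cR t$. If $m=0$ then $s=t$ is a normal form, hence so is its subterm $s|_p$; taking $\sigma = \{w\mapsto s|_p\}$ (a normalized substitution) gives $s|_p \inns_\cR w\sigma$ in zero steps and $s[w\sigma]_p = s \inns_\cR t$. For $m>0$, write $s \inn_\cR s_1 \inns_\cR t$ and let $q$ be the position contracted in the first step; the argument splits according to the relative position of $q$ and $p$. If $p$ is a prefix of $q$ (the step occurs at $p$ or strictly below it), then $s|_p \inn_\cR s_1|_p$ is itself an innermost step and $s,s_1$ agree outside $p$; applying the induction hypothesis to $s_1 \inns_\cR t$ at position $p$ yields $s_1|_p \inns_\cR w\sigma$ and $s_1[w\sigma]_p \inns_\cR t$, so prepending the first step gives $s|_p \inns_\cR w\sigma$ while $s[w\sigma]_p = s_1[w\sigma]_p \inns_\cR t$. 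If $q$ and $p$ are incomparable, then the step leaves $s|_p$ untouched, so $s_1|_p = s|_p$ and the same step still applies innermost in the modified context $s[w\sigma]_p$; the claim then follows from the induction hypothesis applied to $s_1 \inns_\cR t$. The remaining case, where $q$ is a \emph{proper} prefix of $p$, is the interesting one: here $s|_p$ is a proper subterm of the contracted redex $s|_q$, and since the step is innermost, every proper subterm of $s|_q$ is irreducible, so $s|_p$ is already a normal form; taking $\sigma=\{w\mapsto s|_p\}$ again closes the case with $s[w\sigma]_p = s \inns_\cR t$.

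The induction is well founded because in the two cases that invoke the hypothesis the residual derivation $s_1 \inns_\cR t$ is strictly shorter, while the two remaining cases conclude directly. The step I expect to require the most care is the proper-prefix case, which at first sight threatens the reorganization; it is precisely the innermost strategy that rescues it, forcing $s|_p$ to be normalized \emph{before} any redex above it is contracted. A secondary obstacle, in the conditional setting, is the bookkeeping needed to confirm that $s|_p \inn_\cR s_1|_p$ (respectively the incomparable step in the modified context) is genuinely a single innermost \emph{conditional} step: the condition-solving subderivations are internal to the step and do not disturb the position-based case analysis, but this should be stated explicitly to justify that innermostness is inherited under both projection to and reconstruction from the subterm at $p$.
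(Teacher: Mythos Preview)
Your argument is correct. Both directions go through as you describe: the $(\Leftarrow)$ direction is immediate from closure of innermost steps under contexts, and the $(\Rightarrow)$ direction is a clean induction on length with the three-way position split; the proper-prefix case is indeed handled exactly by the innermost hypothesis forcing $s|_p$ to be a normal form.

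Your route differs from the paper's. The paper does not induct on the derivation; instead, for $(\Rightarrow)$ it fixes a \emph{leftmost} innermost strategy and observes that such a derivation necessarily factors as
\[
s[s|_p]_p \;\inns_\cR\; s'[s|_p]_p \;\inns_\cR\; s'[s'']_p \;\inns_\cR\; t,
\]
first normalizing everything to the left of $p$, then normalizing $s|_p$ to $s''$, then the rest; one takes $\sigma=\{w\mapsto s''\}$ and rearranges the three segments. The $(\Leftarrow)$ direction is argued symmetrically. The paper's approach is shorter but leans on the tacit assumption that one may pass without loss of generality to a leftmost innermost derivation reaching the same normal form $t$; your induction avoids that assumption entirely and works uniformly for any innermost derivation, at the cost of the explicit three-case analysis. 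In the conditional setting your approach is arguably the more robust one, since the ``leftmost innermost suffices'' step would otherwise need its own justification.
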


\begin{proof}
  $(\Rightarrow)$ Let us consider an arbitrary position
  $p\in\pos(s)$. If $s|_p$ is normalized, the proof is
  straightforward. Otherwise, since we use innermost reduction
  (\emph{leftmost} innermost, for simplicity), we can represent the
  derivation $s \inns_\cR \: t$ as follows:
  \[
    s[s|_p]_p \inns_\cR \: s'[s|_p]_p \inns_\cR \: s'[s'']_p \inns_\cR
    \: t
  \]
  where $s''$ is a normal form and the subderivation $s[s|_p]_p \inns_\cR \: s'[s|_p]_p$ reduces
  the leftmost innermost subterms that are to the left of $s|_p$ (if
  any). Then, by choosing $\sigma = \{w\mapsto s''\}$ we have
  $s|_p \inns_\cR \: w\sigma$ (by mimicking the steps of
  $s'[s|_p]_p \inns_\cR \: s'[s'']_p$),
  $s[w\sigma]_p {\inns_\cR} \: s'[w\sigma]_p$ (by mimicking the steps of
  $s[s|_p]_p {\inns_\cR} \: s'[s|_p]_p$), and
  $s'[w\sigma]_p {\inns_\cR} \: t$ (by mimicking the steps of
  $s'[s'']_p \inns_\cR \: t$), which concludes the proof.

  $(\Leftarrow)$ This direction is perfectly analogous to the previous
  case. We consider an arbitrary position $p\in\pos(s)$ such that
  $s|_p$ is not normalized (otherwise, the proof is trivial). Now,
  since derivations are innermost, we can consider that
  $s[w\sigma]_p\inns_\cR \:t$ is as follows:
  $s[w\sigma]_p\inns_\cR \: s'[w\sigma]_p \inns_\cR \:t$, where
  $s[w\sigma]_p\inns_\cR \: s'[w\sigma]_p$ reduces the innermost
  subterms to the left of $s|_p$. Therefore, we have
  $s[s|_p]_p \inns_\cR \: s'[s|_p]_p$ (by mimicking the steps of
  $s[w\sigma]_p\inns_\cR \: s'[w\sigma]_p$),
  $s'[s|_p]_p \inns_\cR \: s'[s'']_p$ (by mimicking the steps of
  $s|_p \inns_\cR \: w\sigma$, with $\sigma=\{w\mapsto s''\}$), and
  $s'[s'']_p \inns_\cR \: t$ (by mimicking the steps of
  $s'[w\sigma]_p \inns_\cR \:t$). 
\qed
\end{proof}
The following theorem is an easy consequence of the previous
lemma:

\begin{theorem} \label{th:flattening} Let $\cR$ be a DCTRS. If $\cR'$
  is obtained from $\cR$ by a flattening step, then $\cR'$ is a DCTRS
  and, for all ground terms $s,t$, with $t$ a normal form,
  we have $s \inns_\cR \:t$ iff $s \inns_{\cR'} \:t$.
\end{theorem}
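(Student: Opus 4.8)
The plan is to treat the two claims separately. The DCTRS claim is a direct re-check of the syntactic conditions on the single new rule $R'$, split into the two shapes of flattening. The crucial preliminary observation is that in any DCTRS the determinism condition $\var(s_j)\subseteq\var(l,\ol{t_{j-1}})$ yields $\var(\ol{s_n})\subseteq\var(l,\ol{t_n})$, so the 3-CTRS inclusion simplifies to $\var(r)\subseteq\var(l,\ol{t_n})$. In the condition-flattening case the two new equations $s_i|_p\tto w$ and $s_i[w]_p\tto t_i$ introduce the fresh $w$ on the right of the first one; since $\var(s_i|_p),\var(s_i[w]_p)\subseteq\var(s_i)\cup\{w\}\subseteq\var(l,\ol{t_{i-1}},w)$, determinism is preserved, and the 3-CTRS condition survives because the variable set of the condition only grows. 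In the right-hand-side case the new last equation $r|_q\tto w$ requires $\var(r|_q)\subseteq\var(l,\ol{t_n})$, which is exactly the simplified inclusion above, while the new right-hand side satisfies $\var(r[w]_q)\subseteq\var(r)\cup\{w\}$, again covered.

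For the equivalence I would argue by induction on the depth $k$ of the derivation, with an inner induction on its length, since the systems share every left-hand side and every rule other than $R$/$R'$, and the only genuine content is the translation of one application of the modified rule. A step not using the modified rule is matched verbatim once its condition subderivations, which have depth $<k$, are translated by the induction hypothesis. For a step using the modified rule I would apply Lemma~\ref{lemma:flatening} at the distinguished position: in the condition-flattening case to the subderivation $s_i\sigma\inns_\cR t_i\sigma$ at position $p$, splitting it into $s_i|_p\sigma\inns_\cR w\tau$ and $s_i[w]_p\sigma\tau\inns_\cR t_i\sigma$ for a fresh $w$ and normalized $\tau$, which are precisely the two new conditions of $R'$ under $\sigma'=\sigma\tau$; in the right-hand-side case to the continuation $u[r\sigma]_p\inns_\cR t$ at the position of the subterm $r|_q\sigma$, factoring out its reduction to a normal form $w\tau$ so that the remaining derivation starts from $u[r[w]_q\sigma']_p$, exactly the contractum of the $R'$-step. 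Each piece produced by the lemma is a reduction to a normal form of depth $<k$, so the induction hypothesis ports it between $\cR$ and $\cR'$; reassembling yields the corresponding step in the other system, and the backward direction is symmetric, using the lemma in its merging direction.

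The main obstacle I anticipate is bookkeeping around the innermost strategy rather than any deep idea: one must check that the matching substitution and the arguments of every contracted redex remain normalized after switching systems, i.e.\ that $\cR$- and $\cR'$-normal forms coincide. This is itself an instance of the same equivalence at strictly smaller depth, because reducibility by the modified rule at depth $k$ is, via Lemma~\ref{lemma:flatening}, equivalent to reducibility facts of depth $<k$, so it fits inside the depth induction; but it must be stated and invoked carefully to avoid circularity. One also has to be careful that the positions fed to the lemma ($p$ inside $s_i$, and the position of $r|_q$ inside the contractum) are handled with the correct surrounding contexts, and that the targets $t_i\sigma$ and the intermediate terms are genuinely normal forms, which holds because right-hand sides of conditions are constructor terms and $\sigma$ is normalized.
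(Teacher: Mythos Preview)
Your proposal is correct and follows essentially the same approach as the paper: induction on the lexicographic product of depth and length of the derivation, matching steps with unchanged rules verbatim via the induction hypothesis on their condition subderivations, and invoking Lemma~\ref{lemma:flatening} at the distinguished position to split or merge the relevant subderivation in each of the two flattening shapes. Your explicit verification of the DCTRS syntactic conditions and your discussion of the normal-form coincidence issue are in fact more careful than the paper's own proof, which leaves both points implicit.
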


\begin{proof}
  $(\Rightarrow)$ We prove the claim by induction on the lexicographic
  product $(k,m)$ of the depth $k$ and the length $m$ of the
  derivation $s \inns_{\cR_k} \:t$.  Since the base case is trivial,
  we consider the inductive case $(k,m)>(0,0)$.  Assume that $s
  \inns_{\cR_k} \:t$ has the form $s[l\sigma]_u \inn_{\cR_k} \:
  s[r\sigma]_u \inns_{\cR_k}\: t$ with $l\to r \Leftarrow \ol{s_n\tto
    t_n} \in \cR$ and $s_i\sigma \inns_{\cR_{k_i}} \;t_i\sigma$,
  $k_i<k$, $i=1,\ldots,n$. If $l\to r \Leftarrow \ol{s_n\tto t_n} \in
  \cR'$, the claim follows directly by induction. Otherwise, we have
  that either $l\to r \Leftarrow s_1\tto t_1,\ldots,s_i|_p \tto
  w,s_i[w]_p\tto t_i,\ldots,s_n\tto t_n \in \cR'$, for some
  $p\in\pos(s_i)$, $1\sleq i\sleq n$, or $l\to r[w]_q \Leftarrow
  \ol{s_n\tto t_n},r|_q\tto w\in\cR'$, for some $q\in\pos(r)$, where
  $w$ is a fresh variable.
  Consider first the case
  $l\to r \Leftarrow s_1\tto t_1,\ldots,s_i|_p \tto w,s_i[w]_p\tto
  t_i,\ldots,s_n\tto t_n \in \cR'$, for some $p\in\pos(s_i)$,
  $1\sleq i\sleq n$. Since $s_i\sigma \inns_{\cR_{k_i}} \;t_i\sigma$,
  $k_i<k$, $i=1,\ldots,n$, by the induction hypothesis, we have
  $s_i\sigma \inns_{\cR'} \;t_i\sigma$, $i=1,\ldots,n$. By
  Lemma~\ref{lemma:flatening}, there exists
  $\sigma' = \{w\mapsto s'\}$ for some normal form $s'$ such that
  $s_i|_p\sigma = s_i|_p\sigma\sigma' \inns_{\cR_{k_i}} \:
  w\sigma\sigma' = w\sigma'$ and
  $s_i[w]_p\sigma\sigma' = s_i\sigma[w\sigma']_p \inns_{\cR_{k_i}}
  t_i$. Moreover, since $w$ is an extra variable, we also have
  $s_j\sigma\sigma' = s_j\sigma \inns_{\cR'} \:t_j\sigma =
  t_j\sigma\sigma'$ for $j=1,\ldots,i-1,i+1,\ldots,n$.  Therefore,
  since $l\sigma\sigma' = l\sigma$ and $r\sigma\sigma' = r\sigma$, we
  have $s[l\sigma]_u \inn_{\cR} s[r\sigma]_u$, and the claim follows
  by induction.
  Consider the second case.  By the induction hypothesis, we have
  $s[r\sigma]_u \inns_{\cR'} \: t$ and
  $s_i\sigma\inns_{\cR'}\:t_i\sigma$ for all $i=1,\ldots,n$. By
  Lemma~\ref{lemma:flatening}, there exists a substitution $\sigma' =
  \{ w \mapsto s'\}$ such that $s'$ is the normal form of $r|_q\sigma$
  and we have $r|_q\sigma \inns_{\cR'} \:w\sigma'$ and $s[r\sigma[w\sigma']_q]_u
  \inns_{\cR'} \: t$. Moreover, since $w$ is a fresh variable, we have
  $s_i\sigma\sigma'\inns_{\cR'}\:t_i\sigma\sigma'$ for all
  $i=1,\ldots,n$. Therefore, we have $s[l\sigma\sigma']_u = s[l\sigma]_u
  \inn_{\cR'} s[r\sigma[w\sigma']_q]_u$,
  which concludes the proof.

  $(\Leftarrow)$ This direction is perfectly analogous to the previous
  one, and follows easily by Lemma~\ref{lemma:flatening} too. \qed
\end{proof}
Let us now consider the second kind of transformations: the
simplification of constructor conditions. Basically, we can drop an
equation $s \tto t$ when the terms $s$ and $t$ are constructor, called
a \emph{constructor condition}, by either applying the \emph{most
  general unifier} (mgu) of $s$ and $t$ (if it exists) to the
remaining part of the rule, or by deleting entirely the rule if they
do not unify because (under innermost rewriting) the equation will
never be satisfied by any normalized substitution. Similar
transformations can be found in \cite{NV11rta}.

In order to justify these transformations, we state and prove the
following results. In the following, we let $mgu(s,t)$ denote the most
general unifier of terms $s$ and $t$ if it exists, and $\mathit{fail}$
otherwise.

\begin{theorem}[removal of unifiable constructor conditions]
  \label{thm:remove_constructor-conditions} Let $\cR$ be a DCTRS and
  let $R = (l\to r\Leftarrow\ol{s_n\tto t_n})\in\cR$ be a rule with
  $mgu(s_i,t_i) = \theta$, for some $i\in\{1,\ldots,n\}$, where $s_i$
  and $t_i$ are constructor terms.
  Let $R'$ be a new rule of the form $l\theta \to r\theta \Leftarrow
  s_1\theta \tto t_1\theta,\ldots, s_{i-1}\theta \tto t_{i-1}\theta,
  s_{i+1}\theta \tto t_{i+1}\theta,\ldots,s_n\theta \tto t_n\theta$.%
  \footnote{In~\cite{NV11rta}, the condition $\dom(\theta) \cap
    \Var(l,r,s_1,t_1,\ldots,s_n,t_n)=\emptyset$ is required, but this
    condition is not really necessary. }
  Then $\cR' = (\cR\backslash\{R\})\cup\{R'\}$ is a DCTRS and, for all
  ground terms $s$ and $t$, 
  we have $s \inns_\cR \:t$ iff $s \inns_{\cR'} \:t$.
\end{theorem}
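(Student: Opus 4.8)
The plan is to prove the two claims separately: first that $\cR'$ is a DCTRS, and then the equivalence $s \inns_\cR t \Leftrightarrow s \inns_{\cR'} t$ for ground $s,t$. The whole argument rests on one observation about \emph{constructor conditions} under innermost reduction: if $\sigma$ is a normalized ground substitution and $s_i$ is a constructor term, then $s_i\sigma$ is a ground normal form, since its skeleton consists only of constructor symbols while the substituted subterms are normal forms by normalizedness of $\sigma$; the same holds for $t_i\sigma$. Hence the condition $s_i\sigma \inns_\cR t_i\sigma$ can only hold in zero steps, i.e.\ it is equivalent to the syntactic equality $s_i\sigma = t_i\sigma$, which in turn says that $\sigma$ unifies $s_i$ and $t_i$. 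Since $\theta = mgu(s_i,t_i)$ is (idempotent and) most general, this is equivalent to $\sigma = \theta\sigma$ on $\var(s_i,t_i)$, so that the redex $l\sigma = l\theta\sigma$ and the contractum $r\sigma = r\theta\sigma$ are left unchanged when we pass from $R$ to $R'$.

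For the DCTRS property I would verify the 3-CTRS and determinism conditions for $R'$. The only delicate point is that deleting the $i$-th equation might deprive a later equation $s_j\tto t_j$ ($j>i$) of variables it imported from $t_i$. This cannot happen: since $\theta$ unifies $s_i$ and $t_i$ we have $t_i\theta = s_i\theta$, and the original determinism condition $\var(s_i)\subseteq\var(l,\ol{t_{i-1}})$ gives $\var(t_i\theta) = \var(s_i\theta) \subseteq \var(l\theta,\ol{t_{i-1}\theta})$. Thus every variable that $t_i$ could contribute is, after applying $\theta$, already available from $l\theta$ and the preceding right-hand sides; substituting this fact into the original determinism and 3-CTRS inclusions and applying $\theta$ throughout yields the required inclusions for $R'$. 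The defined symbols are unchanged, $\cD_{\cR'}=\cD_\cR$, since $\theta$ does not touch the root of $l$.

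For the equivalence I would proceed by induction on the lexicographic product $(k,m)$ of the depth and length of the derivation, exactly as in Theorem~\ref{th:flattening}. A step that uses a rule other than $R$ (resp.\ $R'$) is literally the same term rewrite in both systems, and its conditions have strictly smaller depth, so they transfer by the induction hypothesis. For a step $u[l\sigma]_p \inn_\cR u[r\sigma]_p$ using $R$, the observation above forces $s_i\sigma = t_i\sigma$, so $\sigma$ factors through $\theta$; the remaining conditions $s_j\sigma \inns_\cR t_j\sigma$ ($j\neq i$) transfer to $\cR'$ by the induction hypothesis, and since $l\sigma=l\theta\sigma$ and $r\sigma=r\theta\sigma$ the very same term rewrite is an $R'$-step with matcher $\sigma$. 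The converse uses that $R'$ is an instance of $R$ whose deleted condition is trivially satisfied ($s_i\theta = t_i\theta$), so that any $R'$-step is the same $R$-step.

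The main obstacle I anticipate is the bookkeeping around \emph{innermost} normal forms: the side conditions ``no proper subterm of the redex is reducible'' refer to $\cR$- versus $\cR'$-normal forms, which are a priori different sets. I would discharge this by recalling that a term is reducible iff it contains an innermost redex, and an innermost redex is matched by a normalized substitution; the constructor-condition observation then applies and makes every innermost $R$-redex coincide with an innermost $R'$-redex, and conversely since $R'$ is an instance of $R$. Together with the induction hypothesis on the conditions of all other rules, this shows that innermost reducibility, and hence the set of normal forms, coincides up to the current depth, so that normalizedness of matchers is preserved in both directions. Note that normal-formness of $t$ is \emph{not} needed here, unlike in Lemma~\ref{lemma:flatening}, because the correspondence is step by step. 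Finally, one should observe that $\theta$, as an mgu of constructor terms, is a constructor substitution, which is exactly what keeps $R'$ within the shape required of a DCTRS rule.
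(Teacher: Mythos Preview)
Your proposal is correct and follows essentially the same approach as the paper: induction on the lexicographic product $(k,m)$ of depth and length, with the key observation that a constructor condition $s_i\sigma \inns t_i\sigma$ under a normalized $\sigma$ forces the syntactic equality $s_i\sigma = t_i\sigma$, so that $\sigma$ factors through the mgu $\theta$ and the $R$-step coincides with the $R'$-step. You are in fact more thorough than the paper on two points it leaves implicit---the verification that $R'$ satisfies the DCTRS conditions (in particular that deleting the $i$-th equation does not break determinism for later $s_j$), and the argument that innermost positions and normalized substitutions coincide between $\cR$ and $\cR'$---both of which are genuine obligations the paper's proof glosses over.
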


\begin{proof} $(\Rightarrow)$ First, we prove the following claim by
  induction on the lexicographic product $(k,m)$ of the depth $k$ and
  the length $m$ of the steps: if $s \innm_{\cR_k} \:t$, then $s
  \inns_{\cR'} \:t$.  It suffices to consider the case where $R$ is
  applied, i.e., $s = s[l\sigma]_p \inn_{\{R\}} \: s[r\sigma]_p$ with
  $s_j\sigma \inns_{\cR_{k_j}} \: t_j\sigma$ for all $j \in
  \{1,\ldots,n\}$.  By definition, $\sigma$ is normalized. Hence,
  since $s_i$ and $t_i$ are constructor terms, we have that
  $s_i\sigma$ and $t_i\sigma$ are trivially normal forms since the
  normalized subterms introduced by $\sigma$ cannot become reducible
  in a constructor context. Therefore, we have $s_i\sigma =
  t_i\sigma$.  Thus, $\sigma$ is a unifier of $s_i$ and $t_i$ and,
  hence, $\theta$ is more general than $\sigma$.  Let $\delta$ be a
  substitution such that $\sigma = \theta\delta$.  Since $\sigma$ is
  normalized, so is $\delta$.  Since $k_j<k$ for all $j=1,\ldots,n$,
  by the induction hypothesis, we have that $s_j\sigma =
  s_j\theta\delta \inns_{\cR'} \: t_j\theta\delta = t_j\sigma$ for $j
  \in \{1,\ldots,i-1,i+1,\ldots,n\}$.  Therefore, we have that
  $s[l\sigma]_p = s[l\theta\delta]_p \inn_{\{R'\}} \:
  s[r\theta\delta]_p = s[r\sigma]_p$.
 
  $(\Leftarrow)$ Now, we prove the following claim by induction on the
  lexicographic product $(k,m)$ of the depth $k$ and the length $m$ of
  the steps: if $s \innm_{\cR'_{k}} \:t$, then $s \inns_\cR \:t$.  It
  suffices to consider the case where $R'$ is applied, i.e., $s =
  s[l\theta\delta]_p \inn_{\{R\}} \: s[r\theta\delta]_p$ with
  $s_j\theta\delta \inns_{\cR'_{k_j}} \: t_j\theta\delta$ for all $j
  \in \{1,\ldots,i-1,i+1,\ldots,n\}$.  By the assumption and the
  definition, $\theta$ and $\delta$ are normalized, and thus,
  $s_i\theta\delta$ and $t_i\theta\delta$ are normal forms (as in the
  previous case, because the normalized subterms introduced by
  $\theta\delta$ cannot become reducible in a constructor context),
  i.e., $s_i\theta\delta = t_i\theta\delta$.  Since $k_j<k$ for all $j
  \in \{1,\ldots,i-1,i+1,\ldots,n\}$, by the induction hypothesis, we
  have that $s_j\theta\delta \inns_{\cR} \: t_j\theta\delta$ for $j
  \in \{1,\ldots,i-1,i+1,\ldots,n\}$.  Therefore, we have that
  $s[l\sigma]_p = s[l\theta\delta]_p \inn_{\{R\}} \:
  s[r\theta\delta]_p = s[r\sigma]$ with $\sigma=\theta\delta$.  \qed
\end{proof}
Now we consider the case when the terms in the constructor condition
do not unify:

\begin{theorem}[removal of infeasible rules]
  \label{thm:remove_infeasible-rules}
  Let $\cR$ be a DCTRS and let $R = (l\to r\Leftarrow\ol{s_n\tto
    t_n})\in\cR$ be a rule with $mgu(s_i,t_i) = \mathit{fail}$,
  for some $i\in\{1,\ldots,n\}$.
  Then $\cR' = \cR\backslash\{R\}$ is a DCTRS and, for all ground
  terms $s$ and $t$, 
  we have $s \inns_\cR \:t$ iff $s \inns_{\cR'} \:t$.
\end{theorem}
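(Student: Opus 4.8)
The plan is to show that the dropped rule $R$ can never be used in any innermost derivation, so that removing it leaves both directions of the innermost reachability relation unchanged. First I would dispose of the easy claims: $\cR' = \cR\backslash\{R\}$ is trivially a DCTRS, since the 3-CTRS and determinism conditions are properties of individual rules and all rules of $\cR'$ already satisfy them in $\cR$; and the direction $s \inns_{\cR'} \:t \Rightarrow s \inns_\cR \:t$ follows by monotonicity of innermost rewriting with respect to rule-set inclusion. Concretely, since $\cR'\subseteq\cR$, a routine induction on the lexicographic product $(k,m)$ of the depth and length of $s \innm_{\cR'_k} \:t$ shows that every applied rule of $\cR'$ is also a rule of $\cR$ whose conditions, by the induction hypothesis, hold using $\inns_\cR$ as well.

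The interesting direction is $s \inns_\cR \:t \Rightarrow s \inns_{\cR'} \:t$. Here I would prove, by induction on the lexicographic product $(k,m)$ of the depth $k$ and length $m$ of an innermost derivation $s \innm_{\cR_k} \:t$, that no step of the derivation applies $R$; the statement $s \inns_{\cR'} \:t$ then follows. Suppose, towards a contradiction, that some step uses $R$, say $u[l\sigma]_p \inn_{\cR} u[r\sigma]_p$ with $\sigma$ normalized and $s_j\sigma \inns_{\cR_{k_j}} \:t_j\sigma$, $k_j<k$, for all $j$. The crux is the $i$-th condition $s_i\sigma \inns_\cR \:t_i\sigma$: because $s_i$ and $t_i$ are constructor terms and $\sigma$ is normalized, both $s_i\sigma$ and $t_i\sigma$ are normal forms, since the subterms introduced by $\sigma$ are already irreducible and, occurring inside a purely constructor context, cannot create a redex (this is exactly the normal-form argument used in the proof of Theorem~\ref{thm:remove_constructor-conditions}). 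As an innermost derivation between two normal forms must be empty, we get $s_i\sigma = t_i\sigma$, so $\sigma$ is a unifier of $s_i$ and $t_i$, contradicting $mgu(s_i,t_i) = \mathit{fail}$. Hence every step applies a rule of $\cR'$; its conditions are derivations of strictly smaller depth and so lift to $\inns_{\cR'}$ by the induction hypothesis, and the remaining $m-1$ steps lift by the induction hypothesis on $(k,m-1)<(k,m)$.

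The hard part is precisely the observation that a constructor term instantiated by a normalized substitution is irreducible; everything else is the same inductive bookkeeping over $(k,m)$ already used for flattening and for Theorem~\ref{thm:remove_constructor-conditions}. Once that observation is in place, the non-unifiability of $s_i$ and $t_i$ makes the $i$-th equation of $R$'s condition unsatisfiable under any normalized $\sigma$, so $R$ is effectively dead code and can be deleted without affecting innermost reachability. I would therefore present the proof as two inductions of the form above, citing the constructor-context normal-form fact from Theorem~\ref{thm:remove_constructor-conditions} to avoid repeating it.
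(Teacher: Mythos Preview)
Your proposal is correct and follows essentially the same approach as the paper: both directions are handled the same way, with the ``if'' part dismissed by rule-set inclusion and the ``only if'' part reduced to showing that $R$ is never applicable because $s_i\sigma$ and $t_i\sigma$ are normal forms under a normalized $\sigma$, forcing $s_i\sigma = t_i\sigma$ and contradicting non-unifiability. The paper's version is more terse---it omits the explicit $(k,m)$-induction bookkeeping since once $R$ is shown to be dead the equivalence is immediate---but the argument is the same.
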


\begin{proof}
  Since $\cR \supseteq \cR'$, the \emph{if} part is trivial, and thus,
  we consider the \emph{only-if} part.  To apply $R$ to a term, there
  must exist a normalized substitution $\sigma$ such that $s_i\sigma
  \inns_\cR \: t_i\sigma$.  Since $s_i,t_i$ are constructor terms and
  $\sigma$ is normalized, $s_i\sigma$ is a normal form (because the
  normalized subterms introduced by $\sigma$ cannot become reducible
  in a constructor context). If $s_i\sigma \inns_\cR \: t_i\sigma$ is
  satisfied (i.e., $s_i\sigma = t_i\sigma$), then $s_i$ and $t_i$ are
  unifiable, and thus, this contradicts the assumption.  Therefore,
  $R$ is never applied to any term, and hence, $s \inns_\cR \:t$ iff
  $s \inns_{\cR'} \:t$.  \qed
\end{proof}
Using flattening and the simplification of constructor conditions, any
constructor DCTRS with constructor terms in the right-hand sides of
the equations of the rules can be transformed into a \pc. One can use,
for instance, the following simple algorithm. Let $\cR$ be such a
constructor DCTRS. We apply the following transformations as much as
possible:
\begin{description}
\item[\sf (flattening-rhs)] Assume that $\cR$ contains a rule of the
  form $R = (l\to r\Leftarrow\ol{s_n\tto t_n})$ where $r$ is not a
  constructor term. Let $r|_q$, $q\in\pos(r)$, be a basic subterm of
  $r$. Then, we replace rule $R$ by a new rule of the form
  $l\to r[w]_q \Leftarrow \ol{s_n\tto t_n},r|_q\tto w$, where $w$ is a
  fresh variable.

\item[\sf (flattening-condition)] Assume that $\cR$ contains a rule of
  the form $R = (l\to r\Leftarrow\ol{s_n\tto t_n})$ where $s_i$ is
  neither a constructor term nor a basic term,
  $i\in\{1,\ldots,n\}$. Let $s_i|_q$, $q\in\pos(s_1)$, be a basic
  subterm of $s_i$. Then, we replace rule $R$ by a new rule of the
  form
  $l\to r \Leftarrow s_1\tto t_1,\ldots,s_i|_q \tto w,s_i[w]_q\tto
  t_i,\ldots,s_n\tto t_n$, where $w$ is a fresh variable.

\item[\sf (removal-unify)] Assume that $\cR$ contains a rule of the
  form $R = (l\to r\Leftarrow\ol{s_n\tto t_n})$ where $s_i$ is a
  constructor term, $i\in\{1,\ldots,n\}$. If
  $mgu(s_i,t_i) = \theta\neq \mathit{fail}$, then we replace rule $R$
  by a new rule of the form
  $l\theta \to r\theta \Leftarrow s_1\theta \tto t_1\theta,\ldots,
  s_{i-1}\theta \tto t_{i-1}\theta, s_{i+1}\theta \tto
  t_{i+1}\theta,\ldots,s_n\theta \tto t_n\theta$.

\item[\sf (removal-fail)] Assume that $\cR$ contains a rule of the
  form $R = (l\to r\Leftarrow\ol{s_n\tto t_n})$ where $s_i$ is a
  constructor term, $i\in\{1,\ldots,n\}$. If
  $mgu(s_i,t_i) = \mathit{fail}$, then we remove rule $R$ from $\cR$.
\end{description}
Trivially, by applying rule {\sf flattening-rhs} as much as possible,
we end up with a DCTRS where all the right-hand sides are constructor
terms; analogously, the exhaustive application of rule {\sf
  flattening-condition} allows us to ensure that the left-hand sides
of all equations in the conditions of the rules are either constructor
or basic; finally, the application of rules {\sf removal-unify} and
{\sf removal-fail} produces a \pc\ by removing those equations in
which the left-hand side is a constructor term.
Therefore, in the remainder of this paper, we only consider \pcs.

A nice property of \pcs\ is that one can consider reductions only
at \emph{topmost} positions.  Formally, given a \pc\ $\cR$, we say
that $s \to_{p,l\to r\Leftarrow \ol{s_n\tto t_n}} t$ is a \emph{top}
reduction step if $p=\epsilon$, there is a ground substitution
$\sigma$ with $s = l\sigma$, $s_i\sigma \to^\ast_\cR t_i\sigma$ for
all $i=1,\ldots,n$, $t = r\sigma$, and all the steps in $s_i\sigma
\to^\ast_\cR t_i\sigma$ for $i=1,\ldots,n$ are also top reduction
steps. We denote top reductions with $\topr$ for standard rewriting,
and $\stackrel{\epsilon}{\rh}_\cR,\stackrel{\epsilon}{\lh}_\cR$ for
our reversible rewrite relations.

The following result basically states that $\inn$ and $\topr$ are
equivalent for \pcs:

\begin{theorem} \label{th:basictop} Let $\cR$ be a constructor DCTRS
  with constructor terms in the right-hand sides of the equations and
  $\cR'$ be a \pc\ obtained from $\cR$ by a sequence of
  transformations of flattening and simplification of constructor
  conditions. Given ground terms $s$ and $t$ such that $s$ is basic
  and $t$ is normalized, we have $s \inns_\cR\: t$ iff
  $s \tops_{\cR'} \: t$.
\end{theorem}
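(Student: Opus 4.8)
The plan is to split the statement into two essentially independent parts and then compose them. First I would exploit the fact that $\cR'$ is obtained from $\cR$ by a \emph{finite sequence} of flattening and constructor-condition-simplification steps, each of which preserves innermost reachability to normal forms: flattening steps by Theorem~\ref{th:flattening}, the application of an $\mathsf{mgu}$ of a unifiable constructor condition by Theorem~\ref{thm:remove_constructor-conditions}, and the deletion of an infeasible rule by Theorem~\ref{thm:remove_infeasible-rules}. Composing these equivalences along the transformation chain (each theorem also guarantees that the intermediate system is again a DCTRS, so the chain is well defined), and using that $t$ is normalized throughout, I obtain $s \inns_\cR t$ iff $s \inns_{\cR'} t$. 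It then remains to prove, for the \pc\ $\cR'$, the local equivalence $s \inns_{\cR'} t$ iff $s \tops_{\cR'} t$ whenever $s$ is basic and $t$ is normalized.

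For this second part I would argue by induction on the lexicographic product $(k,m)$ of the depth $k$ and the length $m$ of the derivation, mirroring the earlier proofs. The crucial observation is that, in a \pc, a basic term $s = f(\ol{c_m})$ has all of its proper subterms $\ol{c_m}\in\cT(\cC_{\cR'},\cV)$, hence irreducible; therefore any redex of $s$ sits at the root, and any innermost reduction of $s$ must contract at position $\toppos$. Matching the basic left-hand side $l$ of the applied rule against $s$ forces the restriction of the matching substitution $\sigma$ to $\var(l)$ to be a ground constructor substitution; and since the condition right-hand sides $\ol{t_n}$ are constructor terms, the bindings of the remaining (extra) variables are constructor terms as well, so $\sigma$ is normalized. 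Moreover, because the condition left-hand sides $\ol{s_n}$ are basic and $\sigma$ is a constructor substitution, each $s_i\sigma$ is again basic, while $r$ and $\ol{t_n}$ are constructor terms, so $r\sigma$ and each $t_i\sigma$ are constructor terms and thus normal forms.

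Using these facts, the forward direction ($\inns_{\cR'}\, \Rightarrow\, \tops_{\cR'}$) runs as follows. If the derivation is empty the claim is trivial. Otherwise the first step contracts the root redex, $s \inn_{\cR'} r\sigma$, with conditions $s_i\sigma \inns_{\cR'_{k_i}} t_i\sigma$ of strictly smaller depth $k_i<k$; by the induction hypothesis these subderivations are top derivations $s_i\sigma \tops_{\cR'} t_i\sigma$, so the step is in fact a top step $s \topr_{\cR'} r\sigma$. Since $r\sigma$ is a constructor term it is a normal form, and the remaining innermost derivation to the normalized $t$ must be empty, giving $t = r\sigma$ and hence $s \tops_{\cR'} t$ in a single top step. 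The backward direction ($\tops_{\cR'}\, \Rightarrow\, \inns_{\cR'}$) is symmetric: a top step on the basic term $s$ is already an innermost step, because its proper subterms are irreducible and $\sigma$ is normalized, and the condition subderivations $s_i\sigma \tops_{\cR'} t_i\sigma$ become innermost derivations by the induction hypothesis; again $r\sigma$ is a constructor normal form, so the derivation collapses to the single step reaching $t$.

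I expect the main obstacle to lie in the careful bookkeeping of the depth induction across nested conditions in the second part: verifying that every condition left-hand side $s_i\sigma$ stays basic (so that the \pc\ induction hypothesis applies to it), and that the whole derivation from a basic term collapses to exactly one top step precisely because each right-hand side is a constructor normal form. A secondary, more routine point is to confirm that the normal-form hypothesis of Theorems~\ref{th:flattening}, \ref{thm:remove_constructor-conditions} and~\ref{thm:remove_infeasible-rules} is met at each intermediate system in the chain; this is unproblematic since those transformations preserve the computed normal forms and $t$ is normalized.
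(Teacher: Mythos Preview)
Your proposal is correct and follows essentially the same approach as the paper: first chain the three preservation theorems to obtain $s \inns_\cR t$ iff $s \inns_{\cR'} t$, then argue that in a \pc\ innermost reduction from a basic term coincides with top reduction. The paper dispatches the second part in a single sentence (defined symbols can only occur at the root and matching substitutions are normalized, so every innermost step is at $\toppos$), whereas you spell it out via an explicit $(k,m)$-induction; your more detailed treatment is sound and arguably clearer, but the underlying idea is identical.
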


\begin{proof}
  First, it is straightforward to see that an innermost reduction in
  $\cR'$ can only reduce the topmost positions of terms since defined
  functions can only occur at the root of terms and the terms
  introduced by instantiation are, by definition, irreducible.
  Therefore, the claim is a consequence of
  Theorems~\ref{th:flattening},
  \ref{thm:remove_constructor-conditions} and
  \ref{thm:remove_infeasible-rules}, together with the above
  fact. \qed
\end{proof}
Therefore, when considering \pcs\ and top reductions, storing the
reduced positions in the trace terms becomes redundant since they are
always $\epsilon$.  Thus, in practice, one can consider simpler trace
terms without positions, $\beta(\sigma,\hs_1,\ldots,\hs_n)$, that
implicitly represent the trace term
$\beta(\epsilon,\sigma,\hs_1,\ldots,\hs_n)$.

\begin{example} \label{ex:basic} Consider the following TRS $\cR$
  defining addition and multiplication on natural numbers, and its
  associated \pc\ $\cR'$:
  \[
  \begin{array}{l@{}r@{~}c@{~}l@{}l@{~}r@{~}c@{~}l}
    \cR = \{ & \mathsf{add}(\mathsf{0},y) & \to & y, \\
    & \mathsf{add}(\mathsf{s}(x),y) & \to &
    \mathsf{s}(\mathsf{add}(x,y)), \\
    & \mathsf{mult}(\mathsf{0},y) & \to & \mathsf{0}, \\
    & \mathsf{mult}(\mathsf{s}(x),y) & \to &
    \mathsf{add}(\mathsf{mult}(x,y),y) & \} 
  \end{array}
  \]
  \[
  \begin{array}{l@{}r@{~}c@{~}l@{}l@{~}r@{~}c@{~}l}
    \cR' = \{ & \mathsf{add}(\mathsf{0},y) & \to & y, \\
     & \mathsf{add}(\mathsf{s}(x),y) & \to &
    \mathsf{s}(z) \Leftarrow \mathsf{add}(x,y)\tto z,\\
     & \mathsf{mult}(\mathsf{0},y) & \to & \mathsf{0},\\
    & \mathsf{mult}(\mathsf{s}(x),y) & \to &
    w \Leftarrow \mathsf{mult}(x,y) \tto z, \mathsf{add}(z,y)\tto w & \}\\
  \end{array}
  \]
  For instance, given the following reduction in $\cR$:
  \[
  \mathsf{mult}(\mathsf{s}(\mathsf{0}),\mathsf{s}(\mathsf{0}))
  \inn_\cR
  \mathsf{add}(\mathsf{mult}(\mathsf{0},\mathsf{s}(\mathsf{0})),\mathsf{s}(\mathsf{0}))
  \inn_\cR 
  \mathsf{add}(\mathsf{0},\mathsf{s}(\mathsf{0}))
  \inn_\cR 
  \mathsf{s}(\mathsf{0})
  \]
  we have the following counterpart in $\cR'$:
  \[
  \begin{array}{lr@{~}l}
    \mathsf{mult}(\mathsf{s}(\mathsf{0}),\mathsf{s}(\mathsf{0}))
    \topr_{\cR'} \mathsf{s}(\mathsf{0})
    & \mbox{with} &
    \mathsf{mult}(\mathsf{0},\mathsf{s}(\mathsf{0}))\topr_{\cR'}
    \mathsf{0} \\
    & \mbox{and} & \mathsf{add}(\mathsf{0},\mathsf{s}(\mathsf{0}))
    \topr_{\cR'} \mathsf{s}(\mathsf{0}) 
  \end{array}
  \]
\end{example}
Trivially, all results in Section~\ref{sec:rr} hold for \pcs\ and
top reductions starting from basic terms. The simpler trace terms
without positions will allow us to introduce appropriate
injectivization and inversion transformations in the next section.

\section{Reversibilization} \label{sec:transf}

In this section, we aim at \emph{compiling} the reversible extension
of rewriting into the system rules. Intuitively speaking, given a pure
constructor system $\cR$, we aim at producing new systems $\cR_f$ and
$\cR_b$ such that standard rewriting in $\cR_f$, i.e., $\to_{\cR_f}$,
coincides with the forward reversible extension $\rh_\cR$ in the
original system, and analogously $\to_{\cR_b}$ is equivalent to
$\lh_\cR$. Therefore, $\cR_f$ can be seen as an \emph{injectivization}
of $\cR$, and $\cR_b$ as the \emph{inversion} of $\cR_f$.

In principle, we could easily introduce a transformation for \pcs\
that mimicks the behavior of the reversible extension of rewriting.
For instance, given the \pc\ $\cR$ of
Example~\ref{ex:needforvbles}, we could produce the following
injectivized version $\cR_f$:\footnote{We will write just $\beta$
  instead of $\beta()$ when no argument is required.}
\[
\begin{array}{@{}r@{~}c@{~}l@{~}l}
  \tuple{\mathsf{f}(x,y,m),ws} & \to &
  \tuple{\mathsf{s}(w),\beta_1(m,x,w_1,w_2):ws}  \\
  && \mbox{}\hspace{2ex}\Leftarrow  \tuple{\mathsf{h}(x),\nil} \tto
  \tuple{x,w_1}, \tuple{\mathsf{g}(y,\mathsf{4}),\nil}\tto \tuple{w,w_2}\\
  \tuple{\mathsf{h}(\mathsf{0}),ws} &  \to &
  \tuple{\mathsf{0},\beta_2:ws} \\
  \tuple{\mathsf{h}(\mathsf{1}),ws} & \to & \tuple{\mathsf{1},\beta_3:ws} \\
  \tuple{\mathsf{g}(x,y),ws} & \to & \tuple{x,\beta_4(y):ws} \\
\end{array}
\]
For instance, the reversible step
$\tuple{\mathsf{f}(\mathsf{0},\mathsf{2},\mathsf{4}),\nil}
\stackrel{\epsilon}{\rh}_\cR
\tuple{\mathsf{s}(\mathsf{2}),[\beta_1(\sigma',\hs_1,\hs_2)]}$
with $\sigma' = \{m\mapsto\mathsf{4},x\mapsto\mathsf{0}\}$, $\hs_1 =
[\beta_2(\id)]$ and $\hs_2 =
[\beta_4(\{y\mapsto\textsf{4}\})]$, has the following
counterpart in $\cR_f$:
\[
\begin{array}{lll}
  \tuple{\mathsf{f}(\mathsf{0},\mathsf{2},\mathsf{4}),\nil} \topr_{\cR_f}
  \tuple{\mathsf{s}(\mathsf{2}),[\beta_1(\mathsf{4},\mathsf{0},[\beta_2],[\beta_4(\mathsf{4})])]} \\
  \mbox{}\hspace{10ex}\mbox{with} ~~\tuple{\mathsf{h}(\mathsf{0}),\nil}\topr_{\cR_f}
  \tuple{\mathsf{0},[\beta_2]}~~\mbox{and}~~\tuple{\mathsf{g}(\mathsf{2},\mathsf{4}),\nil} \topr_{\cR_f} \tuple{\mathsf{2},[\beta_4(\mathsf{4})]}\\
\end{array}
\]
The only subtle difference here is that a trace term like
\[
\beta_1(\{m\mapsto\mathsf{4},x\mapsto\mathsf{0}\},[\beta_2(\id)],[\beta_4(\{y\mapsto\textsf{4}\})])
\]
is now stored in the transformed system as
\[
\beta_1(\mathsf{4},\mathsf{0},[\beta_2],[\beta_4(\mathsf{4})])
\]
Furthermore, we could produce an inverse $\cR_b$ of the above system
as follows:
\[
\begin{array}{r@{~}c@{~}l@{~}l}
  \tuple{\mathsf{s}(w),\beta_1(m,x,w_1,w_2):ws}^{-1} & \to &
  \tuple{\mathsf{f}(x,y,m),ws}^{-1} \\
  && \mbox{}\hspace{2ex}\Leftarrow \tuple{w,w_2}^{-1} \tto \tuple{\mathsf{g}(y,\mathsf{4}),\nil}^{-1},\\
  && \mbox{}\hspace{5.5ex}\tuple{x,w_1}^{-1} \tto \tuple{\mathsf{h}(x),\nil}^{-1} \\
  \tuple{\mathsf{0},\beta_2:ws}^{-1} &  \to &
  \tuple{\mathsf{h}(\mathsf{0}),ws}^{-1} \\
  \tuple{\mathsf{1},\beta_3:ws}^{-1} & \to & \tuple{\mathsf{h}(\mathsf{1}),ws}^{-1} \\
  \tuple{x,\beta_4(y):ws}^{-1} & \to & \tuple{\mathsf{g}(x,y),ws}^{-1} \\
\end{array}
\]
mainly by switching the left- and right-hand sides of each rule and
condition. The correctness of these injectivization and inversion
transformations would be straightforward.

These transformations are only aimed at mimicking, step by step, the
reversible relations $\rh_{\cR}$ and $\lh_{\cR}$. Roughly speaking,
for each step $\tuple{s,\hs} \rh_{\cR} \tuple{t,\hs'}$ in a system
$\cR$, we have $\tuple{s,\hs} \to_{\cR_f} \tuple{t,\hs'}$, where
$\cR_f$ is the injectivized version of $\cR$, and for each step
$\tuple{s,\hs} \lh_{\cR} \tuple{t,\hs'}$ in $\cR$, we have
$\tuple{s,\hs} \to_{\cR_b} \tuple{t,\hs'}$, where $\cR_b$ is the
inverse of $\cR_f$. More details on this approach can be found in
\cite{NPV16}. Unfortunately, it might be much more useful to produce
injective and inverse versions of \emph{each function} defined in a
system $\cR$. Note that, in the above approach, the system $\cR_f$
only defines a single function $\tuple{\_,\_}$ and $\cR_b$ only
defines $\tuple{\_,\_}^{-1}$, i.e., we are computing systems that
define the relations $\rh_\cR$ and $\lh_\cR$ rather than the
injectivized and inverse versions of the functions in $\cR$.
In the following, we introduce more refined transformations that can
actually produce injective and inverse versions of the original
functions.

\subsection{Injectivization} \label{sec:injectivization}

In principle, given a function $\mathsf{f}$, one can consider that the
injectivization of a rule of the form\footnote{By abuse of notation,
  here we let $\ol{s_0},\ldots,\ol{s_n}$ denote sequences of terms of
  arbitrary length, i.e., $\ol{s_{0}} = s_{0,1},\ldots,s_{0,l_0}$,
  $\ol{s_1} = s_{1,1},\ldots,s_{1,l_1}$, etc.}
\[
\beta:\mathsf{f}(\ol{s_0}) \to r \Leftarrow
\mathsf{f_1}(\ol{s_{1}})\tto t_1,\ldots,\mathsf{f_n}(\ol{s_{n}})\tto
t_n
\]
produces the following rule
\[
\mathsf{f}^\mathtt{i}(\ol{s_0}) \to \tuple{r,\beta(\ol{y},\ol{w_n})}
\Leftarrow \mathsf{f}_1^\mathtt{i}(\ol{s_{1}})\tto
\tuple{t_1,w_1}\ldots,\mathsf{f}_n^\mathtt{i}(\ol{s_{n}})\tto
\tuple{t_n,w_n}
\]
where $\{\ol{y}\} = (\var(l)\backslash\var(r,\ol{s_n},\ol{t_n}))\cup
\bigcup_{i=1}^n \var(t_i)\backslash\var(r,\ol{s_\interval{i+1}{n}})$
and $\ol{w_n}$ are fresh variables.
The following example, though, illustrates that this is not
correct in general.

\begin{example}
  Consider the following \pc\ $\cR$:
  \[
  \begin{array}{lrcl}
    \beta_1: & \mathsf{f}(x,y) & \to & z \Leftarrow \mathsf{h}(y)\tto
    w,~\mathsf{first}(x,w)\tto z\\
    \beta_2: & \mathsf{h}(\mathsf{0}) & \to & \mathsf{0} \\
    \beta_3: & \mathsf{first}(x,y) & \to & x \\
  \end{array}
  \]
  together with the following top reduction: 
  \[
  \begin{array}{@{}ll@{}}
  \mathsf{f}(\mathsf{2},\mathsf{1}) \topr_\cR \mathsf{2} & \mbox{with}~\sigma=\{x\mapsto
  \mathsf{2},y\mapsto \mathsf{1},w\mapsto
  \mathsf{h}(\mathsf{1}),z\mapsto\mathsf{2}\}\\
  & \mbox{where}~\mathsf{h}(y)\sigma = \mathsf{h}(\mathsf{1})\tops_\cR 
  \:\mathsf{h}(\mathsf{1})=w\sigma\\
  & \mbox{and}~
  \mathsf{first}(x,w)\sigma=\mathsf{first}(\mathsf{2},\mathsf{h}(\mathsf{1}))
  \topr_\cR \mathsf{2}=z\sigma
  \end{array}
  \]
  Following the scheme above, we would produce the following \pc\:
  \[
  \begin{array}{r@{~}c@{~}l}
    \mathsf{f}^\mathtt{i}(x,y) & \to & \tuple{z,\beta_1(w_1,w_2)} \Leftarrow \mathsf{h}^\mathtt{i}(y)\tto
    \tuple{w,w_1},~\mathsf{first}^\mathtt{i}(x,w)\tto \tuple{z,w_2}\\
    \mathsf{h}^\mathtt{i}(\mathsf{0}) & \to & \tuple{\mathsf{0},\beta_2} \\
    \mathsf{first}^\mathtt{i}(x,y) & \to & \tuple{x,\beta_3(y)} \\
  \end{array}
  \]
  Unfortunately, the corresponding reduction for
  $\mathsf{f}^\mathtt{i}(\mathsf{2},\mathsf{1})$ above cannot be done in this
  system since $\textsf{h}^\mathtt{i}(\textsf{1})$ cannot be reduced to
  $\tuple{\textsf{h}^\mathtt{i}(\textsf{1}),\nil}$.
\end{example}
In order to overcome this drawback, one could \emph{complete} the
function definitions with rules that reduce each irreducible term $t$
to a tuple of the form $\tuple{t,\nil}$. Although we find it a
promising idea for future work, in this paper we propose a simpler
approach. In the following, we consider a refinement of innermost
reduction where only constructor substitutions are computed. Formally,
the constructor reduction relation, $\cinn$, is defined as follows:
given ground terms $s,t\in\cT(\cF)$, we have $s \cinn_\cR t$ iff there
exist a position $p$ in $s$ such that no proper subterms of $s|_p$ are
reducible, a rewrite rule $l \to r\Leftarrow \ol{s_n\tto t_n} \in
\cR$, and a ground \emph{constructor} substitution $\sigma$ such that
$s|_p = l\sigma$, $s_i\sigma \cinns_\cR \: t_i\sigma$ for all
$i=1,\ldots,n$, and $t = s[r\sigma]_p$.
Note that the results in the previous section also hold for $\cinn$.

In the following, given a basic term $t = \mathsf{f}(\ol{s})$, we
denote by $t^\mathtt{i}$ the term $\mathsf{f}^\mathtt{i}(\ol{s})$.  Now, we introduce
our injectivization transformation as follows:

\begin{definition}[injectivization]
  Let $\cR$ be a \pc. We produce a new CTRS $\mathbf{I}(\cR)$ by
  replacing each rule $ \beta: l \to r \Leftarrow \ol{s_n\tto t_n} $
  of $\cR$ by a new rule of the form
  \[
  l^\mathtt{i} \to \tuple{r,\beta(\ol{y},\ol{w_n})} \Leftarrow \ol{s_n^\mathtt{i} \tto \tuple{t_n,w_n}}
  \]
  in $\mathbf{I}(\cR)$, where $\{\ol{y}\} =
  (\var(l)\backslash\var(r,\ol{s_n},\ol{t_n}))\cup \bigcup_{i=1}^n
  \var(t_i)\backslash\var(r,\ol{s_\interval{i+1}{n}})$
  and $\ol{w_n}$ are fresh variables.  Here, we assume that the
  variables of $\ol{y}$ are in lexicographic order.
\end{definition}
Observe that now we do not need to keep a trace in each term, but only
a single trace term since all reductions finish in one step in a
\pc. The relation between the original trace terms and the
information stored in the injectivized system is formalized as
follows:

\begin{definition}
  Given a trace term $\hs=\beta(\{\ol{y_m\mapsto
    t_m}\},\hs_1,\ldots,\hs_n)$, we define $\widehat{\hs}$ recursively
  as follows: $\widehat{\hs} =
  \beta(\ol{t_m},\widehat{\hs_1},\ldots,\widehat{\hs_n})$, where we
  assume that the variables $\ol{y_m}$ are in lexicographic order.
\end{definition}
Moreover, in order to simplify the notation, we consider that a a
trace term $\hs$ and a singleton list of the form $[\hs]$ denote the
same object.
The correctness of the injectivization transformation is stated as
follows:

\begin{theorem} \label{th:injectivization2} Let $\cR$ be a 
  \pc\ and $\cR_f=\mathbf{I}(\cR)$ be its injectivization. Then
  $\cR_f$ is a \pc\ and, given a basic ground term $s$, we
  have $\tuple{s,\nil} \stackrel{\mathsf{c}}{\rh}_\cR \tuple{t,\hs}$
  iff $s^\mathtt{i} \cinn_{\cR_f} \tuple{t,\widehat{\hs}}$.
\end{theorem}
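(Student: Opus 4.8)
The plan is to split the statement into two parts: first that $\cR_f = \mathbf{I}(\cR)$ is again a \pc, and then the step-for-step correspondence, which I would prove by induction on the depth $k$ of the (constructor) top derivation. For the first part I would simply inspect an injectivized rule $l^\mathtt{i} \to \tuple{r,\beta(\ol{y},\ol{w_n})} \Leftarrow \ol{s_n^\mathtt{i} \tto \tuple{t_n,w_n}}$. Since $l$ and the $\ol{s_n}$ are basic and injectivization only renames the root defined symbol, $l^\mathtt{i}$ and the $s_i^\mathtt{i}$ stay basic; and since $r,\ol{t_n}$ are constructor terms while $\tuple{\cdot,\cdot}$, $\beta$ and the labels $\beta_j$ are treated as constructors, the new right-hand sides $\tuple{r,\beta(\ol{y},\ol{w_n})}$ and $\tuple{t_i,w_i}$ are constructor terms. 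The 3-CTRS and determinism conditions are inherited from $\cR$: $\var(r)\subseteq\var(l,\ol{s_n},\ol{t_n})$ and the fresh $\ol{w_n}$ occur in the condition right-hand sides, so the augmented right-hand side introduces no uncovered variables, while $\var(s_i^\mathtt{i})=\var(s_i)\subseteq\var(l,\ol{t_{i-1}})$ yields determinism. Hence $\cR_f$ is a \pc.

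For the equivalence I would exploit that in a \pc\ (Theorem~\ref{th:basictop}, carried over to $\cinn$) the reduction of a basic ground term to its constructor normal form is a single top step, the real work residing in the condition subderivations, which have strictly smaller depth and thus drive the induction. I would first record that injectivization commutes with instantiation on basic terms, $(s_i\sigma)^\mathtt{i}=s_i^\mathtt{i}\sigma$, as it only touches the root. For $(\Rightarrow)$, unfold $\tuple{s,\nil}\stackrel{\mathsf{c}}{\rh}_\cR\tuple{t,\hs}$: there are a rule $\beta:l\to r\Leftarrow\ol{s_n\tto t_n}$ and a ground constructor substitution $\sigma$ with $s=l\sigma$, $t=r\sigma$, subderivations $\tuple{s_i\sigma,\nil}\stackrel{\mathsf{c}}{\rh}^\ast_\cR\tuple{t_i\sigma,\hs_i}$, and $\hs=\beta(\sigma',\hs_1,\ldots,\hs_n)$ with $\sigma'=\sigma\!\res_{\{\ol{y}\}}$. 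By the induction hypothesis each subderivation gives $s_i^\mathtt{i}\sigma\cinn_{\cR_f}\tuple{t_i\sigma,\widehat{\hs_i}}$, so I fire the injectivized rule under the extended substitution $\sigma\cup\{\ol{w_n\mapsto\widehat{\hs_n}}\}$, which is ground and constructor because each $\widehat{\hs_i}$ is a constructor term. Checking left-hand side, conditions and right-hand side produces $s^\mathtt{i}\cinn_{\cR_f}\tuple{t,\beta(\ol{y}\sigma,\widehat{\hs_1},\ldots,\widehat{\hs_n})}$, and since $\ol{y}$ is kept in lexicographic order both in the injectivization and in the definition of $\widehat{\cdot}$, this trace is exactly $\widehat{\hs}$.

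The $(\Leftarrow)$ direction is symmetric: unfolding $s^\mathtt{i}\cinn_{\cR_f}\tuple{t,\widehat{\hs}}$ yields an injectivized rule and a ground constructor substitution $\tau$; I set $\sigma=\tau\!\res_{\var(l,\ol{s_n},\ol{t_n})}$ and read off the bindings $w_i\tau$ of the fresh variables. Applying the induction hypothesis to the condition subderivations (of smaller depth) produces trace terms $\hs_i$ with $w_i\tau=\widehat{\hs_i}$ and $\tuple{s_i\sigma,\nil}\stackrel{\mathsf{c}}{\rh}^\ast_\cR\tuple{t_i\sigma,\hs_i}$, from which I rebuild the reversible step $\tuple{s,\nil}\stackrel{\mathsf{c}}{\rh}_\cR\tuple{t,\beta(\sigma',\ol{\hs_n})}$. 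The main obstacle, where I would spend the most care, is the exact bookkeeping matching the two kinds of stored data: that the restriction $\sigma'=\sigma\!\res_{\{\ol{y}\}}$ recording the erased and condition variables corresponds precisely to the leading arguments $\ol{y}\sigma$ of the injectivized trace symbol, and that each $w_i\tau$ is forced to be of the form $\widehat{\hs_i}$ for a genuine trace term rather than an arbitrary constructor term. Both points hinge on the lexicographic ordering convention for $\ol{y}$ and on the recursive definition of $\widehat{\cdot}$, so I would isolate the commutation of $\widehat{\cdot}$ with a rule's recorded substitution as an explicit lemma before running the induction.
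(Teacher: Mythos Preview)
Your proposal is correct and follows essentially the same approach as the paper: induction on the depth $k$ of the step, unfolding the rule application on one side and reassembling it on the other, with the induction hypothesis applied to the strictly shallower condition subderivations. The paper dismisses the \pc\ part as ``trivial'' where you spell it out, and handles the bookkeeping between $\sigma'$ and $\ol{y}\sigma$ inline rather than via a separate lemma, but the skeleton is the same.
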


\begin{proof}
  The fact that $\cR_f$ is a \pc\ is trivial.  Regarding the
  second part, we proceed as follows:

  $(\Rightarrow)$ We proceed by induction on the depth $k$ of the step
  $\tuple{s,\nil} \stackrel{\mathsf{c}}{\rh}_{\cR_k} \tuple{t,\hs}$.
  Since the depth $k=0$ is trivial, we consider the inductive case
  $k>0$. Thus, there is a rule $\beta:l\to r \Leftarrow \ol{s_n\tto
    t_n}\in\cR$, and a substitution $\sigma$ such that $s=l\sigma$,
  $\tuple{s_i\sigma,\nil}
  {\stackrel{\mathsf{c}}{\rh}\!\!{}_{\cR_{k_i}}}
  \tuple{t_i\sigma,\hs_{i}}$, $i=1,\ldots,n$, $t=r\sigma$,
  $\sigma'=\sigma\!\!\res_{(\var(l)\backslash\var(r,\ol{s_n},\ol{t_n}))\cup
    \bigcup_{i=1}^n
    \var(t_i)\backslash\var(r,\ol{s_\interval{i+1}{n}})}$, and
  $\hs=\beta(\sigma',\hs_{1},\ldots,\hs_{n})$.  By definition of
  $\rh_{\cR_k}$, we have that $k_i<k$ for all $i=1,\ldots,n$ and,
  thus, by the induction hypothesis, we have $(s_i\sigma)^\mathtt{i}
  \cinn_{\cR_{f}} \: 
  \tuple{t_i\sigma,\widehat{\hs}_{i}}$
  for all $i=1,\ldots,n$. Consider now the equivalent rule in $\cR_f$:
  $l^\mathtt{i} \to \tuple{r,\beta(\ol{y},\ol{w_n})} \Leftarrow s_1^\mathtt{i} \tto
  \tuple{t_1,w_1},\ldots,s_n^\mathtt{i} \tto \tuple{t_n,w_n}$.  Therefore, we have
  $s^\mathtt{i} \cinn_{\cR_{f}}
  \tuple{t,\beta(\ol{y}\sigma,\widehat{\hs}_{1},\ldots,\widehat{\hs}_{n})}$
  where $\{\ol{y}\} = (\var(l)\backslash\var(r,\ol{s_n},\ol{t_n}))\cup
  \bigcup_{i=1}^n
  \var(t_i)\backslash\var(r,\ol{s_\interval{i+1}{n}})$ and, thus, we
  can conclude that $\widehat{\hs} =
  \beta(\ol{y}\sigma,\widehat{\hs}_{1},\ldots,\widehat{\hs}_{n})$.

  $(\Leftarrow)$ This direction is analogous. We proceed by induction
  on the depth $k$ of the step $s^\mathtt{i} \cinn_{\cR_{f_k}}\:
  \tuple{t,\widehat{\hs}}$.  Since the depth $k=0$ is trivial, we
  consider the inductive case $k>0$. Thus, there is a rule $l^\mathtt{i} \to
  \tuple{r,\beta(\ol{y},\ol{w_n})} \Leftarrow s_1^\mathtt{i} \tto
  \tuple{t_1,w_1},\ldots,s_n^\mathtt{i} \tto \tuple{t_n,w_n}$ in $\cR_f$ and a
  substitution $\theta$ such that $l^\mathtt{i}\theta = s^\mathtt{i}$, $s^\mathtt{i}_i\theta
  \cinn_{\cR_{f_{k_i}}} \tuple{t_i,w_i}\theta$, $i = 1,\ldots,n$, and
  $\tuple{r,\beta(\ol{y},\ol{w_n})}\theta = \tuple{t,\widehat{\hs}}$.
  Assume that $\sigma$ is the restriction of $\theta$ to the variables
  of the rule, excluding the fresh variables $\ol{w_n}$, and that
  $w_i\theta = \widehat{\hs}_{i}$ for all $i=1,\ldots,n$. Therefore,
  $\tuple{s_i,\nil}\theta = \tuple{s_i\sigma,\nil}$ and
  $\tuple{t_i,w_i}\theta = \tuple{t_i\sigma,\widehat{\hs}_{i}}$,
  $i=1,\ldots,n$. Then, by definition of $\cR_{f_{k_i}}$, we have that
  $k_i<k$ for all $i=1,\ldots,n$ and, thus, by the induction
  hypothesis, we have $\tuple{s_i\sigma,\nil}
  {\stackrel{\mathsf{c}}{\rh}\!\!{}_{\cR}}
  \tuple{t_i\sigma,\hs_{i}}$, $i=1,\ldots,n$. Consider now the
  equivalent rule in $\cR$: $\beta:l\to r \Leftarrow \ol{s_n\tto
    t_n}\in\cR$. Therefore, we have $\tuple{s,\nil}
  \stackrel{\mathsf{c}}{\rh}_\cR \tuple{t,\hs}$,
  $\sigma'=\sigma\!\!\res_{(\var(l)\backslash\var(r,\ol{s_n},\ol{t_n}))\cup
    \bigcup_{i=1}^n
    \var(t_i)\backslash\var(r,\ol{s_\interval{i+1}{n}})}$, and
  $\hs=\beta(\sigma',\hs_{1},\ldots,\hs_{n})$. Finally, since
  $\{\ol{y}\} = (\var(l)\backslash\var(r,\ol{s_n},\ol{t_n}))\cup
  \bigcup_{i=1}^n
  \var(t_i)\backslash\var(r,\ol{s_\interval{i+1}{n}})$, we can
  conclude that $\widehat{\hs} = \hs$.  \qed
\end{proof}

\subsection{Inversion}

Given an injectivized system, inversion basically amounts to switching
the left- and right-hand sides of the rule and of every equation in
the condition, as follows:

\begin{definition}[inversion] \label{def:inversion2} Let $\cR$ be a
  \pc\ and $\cR_f = \mathbf{I}(\cR)$ be its injectivization. The
  inverse system $\cR_b = \mathbf{I}^{-1}(\cR_f)$ is obtained from
  $\cR_f$ by replacing each rule\footnote{Here, we assume that
    $\ol{s_0}$, $\ol{s_1}$,\ldots, $\ol{s_n}$ denote arbitrary
    sequences of terms, i.e., $\ol{s_{0}} = s_{0,1},\ldots,s_{0,l_0}$,
    $\ol{s_1} = s_{1,1},\ldots,s_{1,l_1}$, etc. We use this notation
    for clarity.}
  \[
  \mathsf{f}^\mathtt{i}(\ol{s_0}) \to \tuple{r,\beta(\ol{y},\ol{w_n})} \Leftarrow
  \mathsf{f}_1^\mathtt{i}(\ol{s_{1}})\tto
  \tuple{t_1,w_1},\ldots,\mathsf{f}_n^\mathtt{i}(\ol{s_{n}})\tto \tuple{t_n,w_n}
  \]
  of $\cR_f$ by a new rule of the form
  \[
  \mathsf{f}^{-1}(r,\beta(\ol{y},\ol{w_n})) \to \tuple{\ol{s_0}} \Leftarrow
  \mathsf{f}_n^{-1}(t_n,w_n)\tto
  \tuple{\ol{s_{n}}},\ldots,\mathsf{f}_1^{-1}(t_1,w_1)\tto \tuple{\ol{s_{1}} }
  \]
  in $\mathbf{I}^{-1}(\cR_f)$, where the variables of $\ol{y}$ are in
  lexicographic order.
\end{definition}

\begin{example}
  Consider again the \pc\ of Example~\ref{ex:needforvbles}. Here,
  injectivization returns the following \pc\ $\mathbf{I}(\cR) =
  \cR_f$:
  \[
  \begin{array}{r@{~}c@{~}l}
    \mathsf{f}^\mathtt{i}(x,y,m) & \to &
      \tuple{\mathsf{s}(w),\beta_1(m,x,w_1,w_2)}  \\
      &&\mbox{}\hspace{2ex}\Leftarrow \mathsf{h}^\mathtt{i}(x) \tto
      \tuple{x,w_1}, \mathsf{g}^\mathtt{i}(y,\mathsf{4})\tto \tuple{w,w_2}\\
      \mathsf{h}^\mathtt{i}(\mathsf{0}) &  \to &
      \tuple{\mathsf{0},\beta_2} \\
      \mathsf{h}^\mathtt{i}(\mathsf{1}) & \to & \tuple{\mathsf{1},\beta_3} \\
      \mathsf{g}^\mathtt{i}(x,y) & \to & \tuple{x,\beta_4(y)} \\
  \end{array}
  \]
  Then, inversion with $\mathbf{I}^{-1}$ produces the following \pc\
  $\mathbf{I}^{-1}(\mathbf{I}(\cR)) = \cR_b$:
  \[
  \begin{array}{r@{~}c@{~}l}
    \mathsf{f}^{-1}(\mathsf{s}(w),\beta_1(m,x,w_1,w_2)) & \to &
    \tuple{x,y,m} \\
      &&\mbox{}\hspace{2ex}\Leftarrow \mathsf{g}^{-1}(w,w_2)\tto \tuple{y,\mathsf{4}}, \mathsf{h}^{-1}(x,w_1) \tto \tuple{x}\\
    \mathsf{h}^{-1}(\mathsf{0},\beta_2) &  \to &
    \tuple{\mathsf{0}} \\
    \mathsf{h}^{-1}(\mathsf{1},\beta_3) & \to & \tuple{\mathsf{1}} \\
    \mathsf{g}^{-1}(x,\beta_4(y)) & \to & \tuple{x,y } \\
  \end{array}
  \]
\end{example}
Finally, the correctness of the inversion transformation is stated as
follows:

\begin{theorem} \label{th:inversion2} Let $\cR$ be a \pc,
  $\cR_f=\mathbf{I}(\cR)$ its injectivization, and $\cR_b=
  \mathbf{I}^{-1}(\cR_f)$ the inversion of $\cR_f$. Then, $\cR_b$ is a
  basic \pc\ and, given a basic ground term $\mathsf{f}(\ol{s})$
  and a constructor ground term $t$ with $\tuple{t,\hs}$ a safe pair,
  we have $\tuple{t,\hs} \stackrel{\mathsf{c}}{\lh}_\cR
  \tuple{\mathsf{f}(\ol{s}),\nil}$ iff
  $\mathsf{f}^{-1}(t,\widehat{\hs}) \cinn_{\cR_b}\: \tuple{\ol{s}}$.
\end{theorem}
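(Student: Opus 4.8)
The plan is to reduce the statement to two results already established and then discharge a single genuinely new ingredient --- the correctness of inversion --- by induction. I would first dispose of the syntactic part. Inversion only swaps the left- and right-hand sides of the main rule and of each condition equation and reverses the order of the conditions. Since $\cR$ is a \pc, in every original rule $r$ and $\ol{t_n}$ are constructor terms while $l$ and $\ol{s_n}$ are basic; after injectivization the fresh symbols $\beta$ and $\ol{w_n}$ behave as constructors. Hence in $\cR_b$ every inverted left-hand side $\mathsf{f}^{-1}(r,\beta(\ol{y},\ol{w_n}))$ and every condition left-hand side $\mathsf{f}_i^{-1}(t_i,w_i)$ is basic, and the right-hand sides $\tuple{\ol{s_0}}$ and $\tuple{\ol{s_i}}$ are constructor terms, so $\cR_b$ is a basic \pc. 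That $\cR_b$ is moreover a legitimate DCTRS, evaluable deterministically in the reversed condition order, is exactly the point of storing $\ol{y}$ in the trace term: these bindings are chosen so that the reverse rule meets the 3-CTRS and determinism conditions, precisely as analysed in the proof of Theorem~\ref{th:deterministic-cond}.

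For the equivalence itself I would chain the results of the previous subsections. Using the reversibility theorem (Theorem~\ref{th:conditional}, which also holds for the constructor top-reduction relations) and the safety of $\tuple{t,\hs}$,
\[
\tuple{t,\hs} \stackrel{\mathsf{c}}{\lh}_\cR \tuple{\mathsf{f}(\ol{s}),\nil} \iff \tuple{\mathsf{f}(\ol{s}),\nil} \stackrel{\mathsf{c}}{\rh}_\cR \tuple{t,\hs},
\]
and then injectivization correctness (Theorem~\ref{th:injectivization2}), applied to the basic ground term $\mathsf{f}(\ol{s})$, rewrites the right-hand side as $\mathsf{f}^\mathtt{i}(\ol{s}) \cinn_{\cR_f} \tuple{t,\widehat{\hs}}$. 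Thus the theorem reduces to the single claim that inversion computes the inverse relation of $\cR_f$:
\[
\mathsf{f}^\mathtt{i}(\ol{s}) \cinn_{\cR_f} \tuple{t,\widehat{\hs}} \iff \mathsf{f}^{-1}(t,\widehat{\hs}) \cinn_{\cR_b} \tuple{\ol{s}}.
\]

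I would prove this last equivalence by induction on the depth $k$ of the reduction, mirroring the structure of the proof of Theorem~\ref{th:injectivization2}. In the forward direction, a top reduction $\mathsf{f}^\mathtt{i}(\ol{s}) \cinn_{\cR_f} \tuple{t,\widehat{\hs}}$ uses an injectivized rule with head $\mathsf{f}^\mathtt{i}(\ol{s_0}) \to \tuple{r,\beta(\ol{y},\ol{w_n})}$ and conditions $\mathsf{f}_i^\mathtt{i}(\ol{s_i})\tto\tuple{t_i,w_i}$, together with a ground constructor substitution $\theta$ such that $\mathsf{f}_i^\mathtt{i}(\ol{s_i})\theta \cinn_{\cR_f} \tuple{t_i,w_i}\theta$ at strictly smaller depth, $\ol{s_0}\theta=\ol{s}$, $r\theta=t$, and $\widehat{\hs}=\beta(\ol{y}\theta,\ol{w_n}\theta)$. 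The induction hypothesis gives $\mathsf{f}_i^{-1}(t_i,w_i)\theta \cinn_{\cR_b} \tuple{\ol{s_i}}\theta$ for each $i$, and these are exactly the conditions of the inverted rule taken in reverse order, so $\mathsf{f}^{-1}(t,\widehat{\hs}) \cinn_{\cR_b} \tuple{\ol{s}}$ follows. The backward direction is symmetric.

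The main obstacle, and the step I would treat most carefully, is matching the inverted rule's left-hand side so that all required bindings are recovered before each reversed condition is evaluated. Matching $\mathsf{f}^{-1}(t,\widehat{\hs})$ against $\mathsf{f}^{-1}(r,\beta(\ol{y},\ol{w_n}))$ must recover $\var(r)$ together with the stored $\ol{y}$ and the sub-trace bindings $\ol{w_n}$, and one must check that this suffices to instantiate and solve the conditions $\mathsf{f}_n^{-1},\ldots,\mathsf{f}_1^{-1}$ deterministically in that order. This is precisely where the determinism analysis of Theorem~\ref{th:deterministic-cond} is reused: the set $\ol{y} = (\var(l)\backslash\var(r,\ol{s_n},\ol{t_n}))\cup \bigcup_{i=1}^n \var(t_i)\backslash\var(r,\ol{s_\interval{i+1}{n}})$ is exactly what is missing for the reverse rule to meet the 3-CTRS and determinism requirements, so once $\ol{y}$ is read off from the trace term the reversed conditions become a legitimate deterministic DCTRS computation and the two reductions correspond step by step.
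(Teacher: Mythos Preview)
Your proof is correct but takes a different, more modular route than the paper. The paper proves both directions of the equivalence directly by induction on the depth of the step $\tuple{t,\hs} \stackrel{\mathsf{c}}{\lh}_{\cR_k} \tuple{\mathsf{f}(\ol{s}),\nil}$ (respectively of $\mathsf{f}^{-1}(t,\widehat{\hs}) \cinn_{\cR_{b_k}} \tuple{\ol{s}}$): it unpacks the backward step in $\cR$ with the rule $\beta$ and the substitutions $\theta$ and $\sigma'$, applies the induction hypothesis to the subderivations for the conditions, and reassembles them as a reduction with the corresponding inverted rule in $\cR_b$ (and symmetrically for the converse). You instead factor through Theorems~\ref{th:conditional} and~\ref{th:injectivization2}, reducing everything to the purely syntactic claim that $\cR_b$ computes the inverse of $\cR_f$, namely $\mathsf{f}^\mathtt{i}(\ol{s}) \cinn_{\cR_f} \tuple{t,\widehat{\hs}} \iff \mathsf{f}^{-1}(t,\widehat{\hs}) \cinn_{\cR_b} \tuple{\ol{s}}$, and prove that by the same kind of depth induction. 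Your decomposition isolates the genuinely new content of this theorem---inversion alone, decoupled from the reversibility of $\lh$ and from injectivization---and reuses prior results rather than redoing their bookkeeping; the paper's direct argument is more self-contained and does not need to invoke the (informally asserted) fact that Theorem~\ref{th:conditional} carries over to the constructor relations, at the price of repeating some of the unpacking already done in the proof of Theorem~\ref{th:injectivization2}.
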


\begin{proof}
  The fact that $\cR_f$ is a \pc\ is trivial.  Regarding the
  second part, we proceed as follows.

  $(\Rightarrow)$ We proceed by induction on the depth $k$ of the step
  $\tuple{t,\hs} \stackrel{\mathsf{c}}{\lh}_{\cR_k}
  \tuple{\mathsf{f}(\ol{s}),\nil}$.  Since the depth $k=0$ is trivial,
  we consider the inductive case $k>0$. Let
  $\hs=\beta(\sigma',\ol{\hs_n})$.  Thus, we have that
  $\tuple{t,\beta(\sigma',\ol{\hs_n})}$ is a safe pair, there is a
  rule $\beta: \mathsf{f}(\ol{s_0}) \to r \Leftarrow
  \mathsf{f}_1(\ol{s_{1}})\tto t_1,\ldots,\mathsf{f}_n(\ol{s_{n}})\tto
  t_n$ and substitution $\theta$ with
  $\dom(\theta)=(\var(r,\ol{s_1},\ldots,\ol{s_n})\backslash\dom(\sigma'))$
  such that $t = r\theta$, $\tuple{t_i\theta\sigma',\hs_i}
  {\cinn_{\cR_{k_i}}} \tuple{\mathsf{f}(\ol{s_i})\theta\sigma',\nil}$
  for all $i=1,\ldots,n$, and $\mathsf{f}(\ol{s}) =
  \mathsf{f}(\ol{s_0})\theta\sigma'$.  
  Note that $\ol{s_0},\ldots,\ol{s_n}$ denote sequences of terms of
  arbitrary length, i.e., $\ol{s_{0}} = s_{0,1},\ldots,s_{0,l_0}$,
  $\ol{s_1} = s_{1,1},\ldots,s_{1,l_1}$, etc.
  Since $\tuple{t,\hs}$ is a safe pair, we have that $\dom(\sigma')=
  (\var(\ol{s_0})\backslash\var(r,\ol{s_1},\ldots,\ol{s_n},\ol{t_n}))\cup
  \bigcup_{i=1}^n
  \var(t_i)\backslash\var(r,\ol{s_{i+1}},\ldots,\ol{s_{n}})$.
  By definition of $\lh_{\cR_k}$, we have that $k_i<k$ for all
  $i=1,\ldots,n$ and, by the induction hypothesis, we have
  $\mathsf{f}^{-1}(t_i\sigma,\widehat{\hs_i}) {\cinn_{\cR_{b}}}
  \tuple{\ol{s_i}\sigma} $ for all $i=1,\ldots,n$. Let us now consider
  the equivalent rule in $\cR_b$:
  \[
  \mathsf{f}^{-1}(r,\beta(\ol{y},\ol{w_n}))) \to \tuple{\ol{s_0}}
  \Leftarrow \mathsf{f}_n^{-1}(t_n,w_n) \tto \tuple{\ol{s_{n}}},
  \ldots, \mathsf{f}_1^{-1}(t_1,w_1) \tto \tuple{\ol{s_{1}}}
  \]
  Hence, we have
  $\mathsf{f}^{-1}(t,\beta(\ol{y}\sigma,\widehat{\hs}_{1},\ldots,\widehat{\hs}_{1}))
  \to_{\cR_{b}} \tuple{\ol{s_0}\sigma} = \tuple{\ol{s}}$, where
  \[
  \{\ol{y}\} =
  (\var(\ol{s_0})\backslash\var(r,\ol{s_1},\ldots,\ol{s_n},\ol{t_n}))\cup
  \bigcup_{i=1}^n \var(t_i)\backslash\var(r,\ol{s_{i+1}},\ldots,\ol{s_{n}})
  \]
  and, thus, we can conclude that $\widehat{\hs} =
  \beta(\ol{y}\sigma,\widehat{\hs}_{1},\ldots,\widehat{\hs}_{n})$.

  $(\Leftarrow)$ This direction is analogous. We proceed by induction
  on the depth $k$ of the step $\mathsf{f}^{-1}(t,\widehat{\hs})
  \cinn_{\cR_{b_k}} \tuple{\ol{s}}$.  Since the depth $k=0$ is
  trivial, we consider the inductive case $k>0$. Thus, there is a rule
  $\mathsf{f}^{-1}(r,\beta(\ol{y},\ol{w_n}))) \to \tuple{\ol{s_0}}
  \Leftarrow \mathsf{f}_n^{-1}(t_n,w_n) \tto
  \tuple{\ol{s_{n}}},\ldots,\mathsf{f}_1^{-1}(t_1,w_1) \tto
  \tuple{\ol{s_{1}}}$ in $\cR_b$ and a substitution $\theta$ such that
  $\mathsf{f}^{-1}(r,\beta(\ol{y},\ol{w_n}))\theta =
  \mathsf{f}^{-1}(t,\widehat{\hs})$, $\mathsf{f}_i^{-1}(t_i,w_i)\theta
  \cinn_{\cR_{b_{k_i}}} \tuple{\ol{s_i}}\theta$, $i = n,\ldots,1$, and
  $\mathsf{f}^{-1}(r,ws)\theta = \tuple{\ol{s}}$.  Assume that
  $\sigma$ is the restriction of $\theta$ to the variables of the
  rule, excluding the fresh variables $\ol{w_n}$, and that $w_i\theta
  = \widehat{\hs}_{i}$ for all $i=1,\ldots,n$. Therefore,
  $\mathsf{f}^{-1}(r,\beta(\ol{y},\ol{w_n}))\theta =
  \mathsf{f}^{-1}(r\sigma,
  \beta(\ol{y}\sigma,\widehat{\hs}_{1},\ldots,\widehat{\hs}_{n})$,
  $\mathsf{f}_i^{-1}(t_i,w_i)\theta =
  \mathsf{f}_i^{-1}(t_i\sigma,\widehat{\hs}_{i})$ and
  $\tuple{\ol{s_i}}\theta = \tuple{\ol{s_i}\sigma}$,
  $i=1,\ldots,n$. Then, by definition of $\cR_{b_{k_i}}$, we have that
  $k_i<k$ for all $i=1,\ldots,n$ and, thus, by the induction
  hypothesis, we have $\tuple{t_i\sigma,\hs_{i}}
  {\stackrel{\mathsf{c}}{\lh}\!\!{}_{\cR}}
  \tuple{\mathsf{f}_i(\ol{s_i}\sigma),\nil}$, $i=1,\ldots,n$.
  Consider now the equivalent rule in $\cR$: $\beta:
  \mathsf{f}(\ol{s_0}) \to r \Leftarrow \mathsf{f}_1(\ol{s_{1}})\tto
  t_1,\ldots,\mathsf{f}_n(\ol{s_{n}})\tto t_n$ in $\cR$. Therefore, we
  have $\tuple{t,\hs} \stackrel{\mathsf{c}}{\lh}_\cR
  \tuple{\mathsf{f}(\ol{s}),\nil}$,
  \[
  \sigma'=\sigma\!\!\res_{(\var(\ol{s_0})\backslash\var(r,\ol{s_1},\ldots,\ol{s_n},\ol{t_n}))\cup
    \bigcup_{i=1}^n
    \var(t_i)\backslash\var(r,\ol{s_{i+1}},\ldots,\ol{s_{n}})}
  \]
  and $\hs=\beta(\sigma',\hs_{1},\ldots,\hs_{n})$.  Finally, since
  $\{\ol{y}\} =
  (\var(\ol{s_0})\backslash\var(r,\ol{s_1},\ldots,\ol{s_n},\ol{t_n}))\cup
  \bigcup_{i=1}^n
  \var(t_i)\backslash\var(r,\ol{s_{i+1}},\ldots,\ol{s_n})$, we can
  conclude that $\widehat{\hs} = \hs$.  \qed
\end{proof}

\subsection{Improving the transformation for injective functions} \label{sec:improved}

When a function is injective, one can expect the injectivization
transformation to be unnecessary. This is not generally true, since
some additional syntactic conditions might also be
required. Furthermore, depending on the considered setting, it can be
necessary to have an injective \emph{system}, rather than an injective
function. Consider, e.g., the following simple TRS:
\[
\cR = \{~\mathsf{f_1} \to \mathsf{f_2}, \mathsf{f_2} \to
\mathsf{0},~\mathsf{g_1} \to \mathsf{g_2},~\mathsf{g_2} \to
\mathsf{0}~\}
\]
Here, all functions are clearly injective. However, given a reduction
like $\mathsf{f_1} \to_\cR \mathsf{f_2} \to_\cR \mathsf{0}$, we do not
know which rule should be applied to $\mathsf{0}$ in order to go
backwards until the initial term (actually, both the second and the
fourth rules are applicable in the reverse direction).

Luckily, in our context, the injectivity of a function suffices since
reductions in \pcs\ are performed in a single step. Therefore, given a
reduction of the form $\mathsf{f}^\mathtt{i}(\ol{s_n}) \to_\cR t$, a
backward computation will have the form $\mathsf{f}^{-1}(t) \to_\cR
\tuple{\ol{s_n}}$, so that we know that only the inverse rules of
$\mathsf{f}$ are applicable.

Now, we present an improvement of the injectivization transformation
presented in Section~\ref{sec:injectivization} which has some
similarities with that in \cite{MHNHT07}. Here, we consider that the
initial system is a TRS $\cR$ since, to the best of our
knowledge, there is no reachability analysis defined for DCTRSs. In
the following, given a term $s$, we let
\[
\range(s) = \{ t \mid s\sigma \to_\cR^\ast
t,~\sigma:\cV\mapsto\cT(\cC),~\mbox{and}~ t\in\cT(\cC)\}
\]
i.e., $\range(s)$ returns a set with the constructor normal forms of
all possible ground constructor instances of $s$. Although computing
this set is generally undecidable, there are some overapproximations
based on the use of tree automata (see, e.g., \cite{Gen98} and the
most recent approach for innermost rewriting \cite{GS15}). Let us
consider that $\range^\alpha(s)$ is such an approximation, with
$\range^\alpha(s)\supseteq\range(s)$ for all terms $s$. Here, we are
interested in determining when the right-hand sides, $r_1$ and $r_2$,
of two rules do not overlap, i.e., $\range(r_1)\cap\range(r_2) =
\emptyset$.  For this purpose, we will check whether
$\range^\alpha(r_1)\cap\range^\alpha(r_2) = \emptyset$.  Since finite
tree automata are closed under intersection and the emptiness of a
finite tree automata is decidable, 
checking the emptiness of $\range^\alpha(r_1)\cap\range^\alpha(r_2)$
is decidable and can be used to \emph{safely} identify non-overlapping
right-hand sides, i.e., if 
$\range^\alpha(r_1)\cap\range^\alpha(r_2) = \emptyset$, then $r_1$ and
$r_2$ are definitely non-overlapping; otherwise, they may be
overlapping or non-overlapping.

Now, we summarize our method to simplify some trace terms. Given a
constructor TRS $\cR$ and a rule $\beta: l\to r\in\cR$, we check the
following conditions:
\begin{enumerate}
\item the right-hand side $r$ of the rule does not overlap with the
  right-hand side of any other rule defining the same function;
\item the rule is non-erasing, i.e., $\var(l) = \var(r)$;
\item the right-hand side $r$ contains a single occurrence of a
  defined function symbol, say $\mathsf{f} \in\cD$.
\end{enumerate}
If these conditions hold, then the rule has the form
$l \to r[\mathsf{f}(\ol{s})]_p$ with $l$ and $\mathsf{f}(\ol{s})$
basic terms,\footnote{Note that $l$ is a basic term since we initially
  consider a constructor TRS and, thus, all left-hand sides are basic
  terms by definition.} and $r[x]_p$ and $\ol{s}$ constructor terms,
where $x$ is a fresh variable. In this case, we can safely produce the
following injective version:\footnote{Since $l\to r$ is non-erasing,
  the \pc\ rule $l \to r[x]_p \Leftarrow \mathsf{f}(\ol{s})\tto x$ is
  trivially non-erasing too (according to \cite{NSS12lmcs}, i.e.,
  $(\var(l)\backslash\var(r[x]_p,\mathsf{f}(\ol{s}),x))\cup
  \var(x)\backslash\var(r[x]_p) = \emptyset$) and, thus, no binding
  should be stored during the injectivization process.}
\[
l^\mathtt{i} \to \tuple{r[x]_p,w} \Leftarrow
\mathsf{f}^\mathtt{i}(\ol{s})\tto\tuple{x,w}
\]
instead of 
\[
l^\mathtt{i} \to \tuple{r[x]_p,\beta(w)} \Leftarrow
\mathsf{f}^\mathtt{i}(\ol{s})\tto\tuple{x,w}
\]
Let us illustrate this improved transformation with a couple of
examples.

\begin{example}
  Consider the following TRS:
  \[
  \cR = \{~\mathsf{f}(\mathsf{s}(x))\to\mathsf{g}(x),
  ~\mathsf{f}(\mathsf{c}(x))\to\mathsf{h}(x),
  ~\mathsf{g}(x)\to\mathsf{s}(x), ~\mathsf{h}(x)\to\mathsf{c}(x) \}
  \]
  Here, it can easily be shown that
  $\range^\alpha(\mathsf{g}(x))\cap\range^\alpha(\mathsf{h}(x)) =
  \emptyset$, the two rules defining $\mathsf{f}$ are non-erasing, and
  both contain a single occurrence of a defined function symbol in the
  righ-hand sides. Therefore, our improved injectivization applies and
  we get the following \pc\ $\cR_f$:
  \[
  \begin{array}{r@{~}c@{~}l@{~~~~~~~~~}r@{~}c@{~}l}
    \mathsf{f}^\mathtt{i}(\mathsf{s}(x)) & \to &  \tuple{y,w} \Leftarrow
    \mathsf{g}^\mathtt{i}(x)\tto \tuple{y,w}  
    &
    \mathsf{g}^\mathtt{i}(x) & \to & \tuple{\mathsf{s}(x),\beta_3}\\
    \mathsf{f}^\mathtt{i}(\mathsf{c}(x)) & \to & \tuple{y,w} \Leftarrow \mathsf{h}^\mathtt{i}(x)\tto \tuple{y,w}
    &
    \mathsf{h}^\mathtt{i}(x) & \to & \tuple{\mathsf{c}(x),\beta_4}
  \end{array}
  \]
  In contrast, the original injectivization transformation would
  return the following system:
  \[
  \begin{array}{r@{~}c@{~}l@{~~~~~~~~~}r@{~}c@{~}l}
  \mathsf{f}^\mathtt{i}(\mathsf{s}(x)) & \to &  \tuple{y,\beta_1(w)} \Leftarrow
  \mathsf{g}^\mathtt{i}(x)\tto \tuple{y,w}  
  &
  \mathsf{g}^\mathtt{i}(x) & \to & \tuple{\mathsf{s}(x),\beta_3}\\
  \mathsf{f}^\mathtt{i}(\mathsf{c}(x)) & \to & \tuple{y,\beta_2(w)} \Leftarrow \mathsf{h}^\mathtt{i}(x)\tto \tuple{y,w}
  &
  \mathsf{h}^\mathtt{i}(x) & \to & \tuple{\mathsf{c}(x),\beta_4}
  \end{array}
  \]
  Finally, the inverse system $\cR_b$ obtained from $\cR_f$ using the
  original transformation has the following form:
  \[
  \begin{array}{r@{~}c@{~}l@{~~~~~~~~~}r@{~}c@{~}l}
    \mathsf{f}^{-1}(y,w) & \to &  \tuple{\mathsf{s}(x)} \Leftarrow
    \mathsf{g}^{-1}(y,w)\tto \tuple{x}  
    &
    \mathsf{g}^{-1}(\mathsf{s}(x),\beta_3) & \to & \tuple{x}\\
    \mathsf{f}^{-1}(y,w) & \to & \tuple{\mathsf{c}(x)} \Leftarrow \mathsf{h}^{-1}(y,w)\tto \tuple{x}
    &
    \mathsf{h}^{-1}(\mathsf{c}(x),\beta_4) & \to & \tuple{x}
  \end{array}
  \]
  For instance, given the forward reduction
  $\mathsf{f}^\mathtt{i}(\mathsf{s}(\mathsf{0})) \to_{\cR_f}
  \tuple{\mathsf{s}(\mathsf{0}),\beta_3}$, we can build the
  corresponding backward reduction:
  $\mathsf{f}^{-1}(\mathsf{s}(\mathsf{0}),\beta_3) \to_{\cR_b}
  \tuple{\mathsf{s}(\mathsf{0})}$.
  
  Note, however, that the left-hand sides of $\mathsf{f}^{-1}$ overlap
  and we should reduce the conditions in order to determine which rule
  to apply. Therefore, in some cases, there is a trade-off between the
  size of the trace terms and the complexity of the reduction steps.
\end{example}
The example above, though, only produces a rather limited improvement
since the considered functions are not recursive. Our next example
shows a much significant improvement. Here, we consider the function
$\mathsf{zip}$ (also used in \cite{MHNHT07} to illustrate the benefits
of an injectivity analysis).

\begin{example}
  Consider the following TRS $\cR$ defining the function
  $\mathsf{zip}$:
  \[
  \begin{array}{r@{~}c@{~}l}
    \mathsf{zip}(\nil,ys) & \to & \nil \\
    \mathsf{zip}(xs,\nil) & \to & \nil \\
    \mathsf{zip}(x:xs,y:ys) & \to & \mathsf{pair}(x,y) :
    \mathsf{zip}(xs,ys) \\
  \end{array}
  \]
  Here, since the third rule is non-erasing, its right-hand side
  contains a single occurrence of a defined function, $\mathsf{zip}$,
  and it does not overlap with any other right-hand side, our improved
  injectivization applies and we get the following \pc\ $\cR_f$:
  \[
  \begin{array}{r@{~}c@{~}l}
    \mathsf{zip}^\mathtt{i}(\nil,ys) & \to & \tuple{\nil,\beta_1(ys)} \\
    \mathsf{zip}^\mathtt{i}(xs,\nil) & \to & \tuple{\nil,\beta_2(xs)} \\
    \mathsf{zip}^\mathtt{i}(x:xs,y:ys) & \to & \tuple{\mathsf{pair}(x,y) :
    zs,w} \Leftarrow \mathsf{zip}^\mathtt{i}(xs,ys)\tto \tuple{zs,w}\\
  \end{array}
  \]
  In contrast, the original injectivization transformation would
  return the following system $\cR'_f$:
  \[
  \begin{array}{r@{~}c@{~}l}
    \mathsf{zip}^\mathtt{i}(\nil,ys) & \to & \tuple{\nil,\beta_1(ys)} \\
    \mathsf{zip}^\mathtt{i}(xs,\nil) & \to & \tuple{\nil,\beta_2(xs)} \\
    \mathsf{zip}^\mathtt{i}(x:xs,y:ys) & \to & \tuple{\mathsf{pair}(x,y) :
    zs,\beta_3(w)} \Leftarrow \mathsf{zip}^\mathtt{i}(xs,ys)\tto \tuple{zs,w}\\
  \end{array}
  \]
  It might seem a small difference, but if we call
  $\mathsf{zip}^\mathtt{i}$ with two lists of $n$ elements, the
  system $\cR'_f$ would build a trace term of the form
  $\beta_3(\ldots \beta_3(\beta_1(\ldots))\ldots)$ with $n$ nested
  constructors $\beta_3$, while $\cR_f$ would just build the trace
  term $\beta_1(\ldots)$. For large values of $n$, this is a significant
  improvement in memory usage.
\end{example}

\section{Bidirectional Program Transformation} \label{sec:applications}

We illustrate a practical application of our reversibilization
technique in the context of \emph{bidirectional program
  transformation} (see \cite{CFHLST09} for a survey). In particular,
we consider the so-called \emph{view-update} problem. Here, we have a
data structure (e.g., a database) called the \emph{source}, which is
transformed to another data structure, called the
\emph{view}. Typically, we have a \emph{view function},
$\mathsf{view}\!\!: \mathit{Source}\to \mathit{View}$ that takes the
source and returns the corresponding view, together with an
\emph{update} function,
$\mathsf{upd}\!\!: \mathit{View}\times \mathit{Source}\to
\mathit{Source}$ that propagates the changes in a modified view to the
original source. Two basic properties that these functions should
satisfy in order to be well-behaved are the following \cite{FGMPS07}:
\[
\begin{array}{r@{}cl}
  \forall s\in \mathit{Source},\forall v\in \mathit{View}& : & \mathsf{view}(\mathsf{upd}(v,s)) =
  v\\
  \forall s\in \mathit{Source} & : & \mathsf{upd}(\mathsf{view}(s),s) = s
\end{array}
\]
\emph{Bidirectionalization} (first proposed in the database community
\cite{BS81}) basically consists in, given a view function,
``bidirectionalize'' it in order to derive an appropriate update
function. For this purpose, first, a \emph{view complement} function
is usually defined, say $\mathsf{view}^c$, so that the tupled function
\[
\mathsf{view} \vartriangle \mathsf{view}^c\!\! : \mathit{Source} \to
\mathit{View} \times \mathit{Comp}
\]
becomes injective. Therefore, the update function can be defined as
follows:
\[
\mathsf{upd}(v,s) =
(\mathsf{view}\vartriangle\mathsf{view}^c)^{-1}(v,\mathsf{view}^c(s))
\]
This approach has been applied to bidirectionalize view functions in a
functional language in \cite{MHNHT07}. 

In the following, we apply our injectivization and inversion
transformations in order to produce a bidirectionalization
transformation that may be useful in the context of the view-update
problem (with some limitations).
Let us assume that we have a view function, $\mathsf{view}$, that
takes a source and returns the corresponding view, and which is
defined by means of a \pc. Following our approach, given the original
program $\cR$, we produce an injectivized version $\cR_f$ and the
corresponding inverse $\cR_b$. Therefore, in principle, one can use
$\cR_f\cup\cR_b$, which will include the functions
$\mathsf{view}^\mathtt{i}$ and $\mathsf{view}^{-1}\!$, to define an
update function as follows:
\[
\mathsf{upd}(v,s) \to s' 
\Leftarrow \mathsf{view}^\mathtt{i}(s) \tto \tuple{v',\hs},
\mathsf{view}^{-1}\! (v,\hs)\tto\tuple{s'}
\]
where $s$ is the original source, $v$ is the updated view, and $s'$,
the returned value, is the corresponding updated source.  Note that,
in our context, the function $\mathsf{view}^\mathtt{i}$ is somehow
equivalent to $\mathsf{view} \vartriangle \mathsf{view}^c$ above.

Let us now illustrate the bidirectionalization process with an
example. Consider a particular data structure, a list of
\emph{records} of the form $\mathsf{r}(t,v)$ where $t$ is the type of
the record (e.g., $\mathsf{book}$, $\mathsf{dvd}$, $\mathsf{pen}$,
etc.) and $v$ is its price tag.
The following system defines a view function that takes a type and a
list of records, and returns a list with the price tags of the records
of the given type:\footnote{For simplicity, we restrict the record
  types to only $\mathsf{book}$ and $\mathsf{dvd}$.}
\[
\begin{array}{rcl}
  \mathsf{view}(t,\mathsf{nil}) & \to & \mathsf{nil} \\
  \mathsf{view}(t,\mathsf{r}(t',v):rs) & \to &
                                                              \mathsf{val}(\mathsf{r}(t',v)):\mathsf{view}(t,rs) \Leftarrow
                                                              \mathsf{eq}(t,t') \tto \mathsf{true}\\
  \mathsf{view}(t,\mathsf{r}(t',v):rs) & \to &
                                                              \mathsf{view}(t,rs) \Leftarrow
                                                              \mathsf{eq}(t,t') \tto \mathsf{false}\\
  \mathsf{eq}(\mathsf{book},\mathsf{book}) & \to &
                                                   \mathsf{true} \hspace{10ex} 
                                                   \mathsf{eq}(\mathsf{dvd},\mathsf{dvd}) ~ \to~
                                                   \mathsf{true} \\ 
  \mathsf{eq}(\mathsf{book},\mathsf{dvd}) & \to &
                                                  \mathsf{false} \hspace{9.6ex}
                                                  \mathsf{eq}(\mathsf{dvd},\mathsf{book}) ~ \to ~
                                                  \mathsf{false} \\ 
  \mathsf{val}(\mathsf{r}(t,v)) & \to & v\\ 
\end{array}
\]
However, this system is not a \pc. Here, we use a flattening
transformation to produce the following (labeled) \pc\ $\cR$ which is
equivalent for constructor derivations:
\[
\begin{array}{lr@{~}c@{~}l}
  \beta_1: & \mathsf{view}(t,\mathsf{nil}) & \to & \mathsf{nil} \\
  \beta_2: & \mathsf{view}(t,\mathsf{r}(t',v):rs) & \to &
  p:r \\
  &&& \mbox{}\hspace{-10ex}\Leftarrow
  \mathsf{eq}(t,t') \tto \mathsf{true},
  \mathsf{val}(\mathsf{r}(t',v))\tto p,
  \mathsf{view}(t,rs)\tto r\\
  \beta_3: & \mathsf{view}(t,\mathsf{r}(t',v):rs) & \to &
  r \Leftarrow
  \mathsf{eq}(t,t') \tto \mathsf{false},\mathsf{view}(t,rs)\tto r\\[1ex]
  \beta_4: & \mathsf{eq}(\mathsf{book},\mathsf{book}) & \to &
  \mathsf{true} \hspace{10ex}
  \beta_5: ~ \mathsf{eq}(\mathsf{dvd},\mathsf{dvd}) ~ \to ~
  \mathsf{true} \\ 
  \beta_6: & \mathsf{eq}(\mathsf{book},\mathsf{dvd}) & \to &
  \mathsf{false} \hspace{9.5ex}
  \beta_7: ~ \mathsf{eq}(\mathsf{dvd},\mathsf{book}) ~ \to ~
  \mathsf{false} \\[1ex]
  \beta_8: & \mathsf{val}(\mathsf{r}(t,v)) & \to & v\\ 
\end{array}
\]
Now, we can apply our injectivization transformation which returns the
following \pc\ $\cR_f=\mathbf{I}(\cR)$:
\[
\begin{array}{@{}r@{~}c@{~}l@{}}
  \mathsf{view}^\mathtt{i}(t,\mathsf{nil}) & \to & \tuple{\mathsf{nil},\beta_1(t)} \\
  \mathsf{view}^\mathtt{i} (t,\mathsf{r}(t',v):rs) & \to &
  \tuple{p:r,\beta_2(w_1,w_2,w_3)} \\
  &&\mbox{}\hspace{-20ex}\Leftarrow \mathsf{eq}^\mathtt{i}(t,t') \tto \tuple{\mathsf{true},w_1},
  \mathsf{val}^\mathtt{i}(\mathsf{r}(t',v))\tto \tuple{p,w_2},
  \mathsf{view}^\mathtt{i} (t,rs)\tto \tuple{r,w_3}\\
  \mathsf{view}^\mathtt{i} (t,\mathsf{r}(t',v):rs) & \to &
  \tuple{r,\beta_3(v,w_1,w_2)} \\
  &&\mbox{}\hspace{-20ex}\Leftarrow
  \mathsf{eq}^\mathtt{i}(t,t') \tto \tuple{\mathsf{false},w_1},\mathsf{view}^\mathtt{i} (t,rs)\tto \tuple{r,w_2}\\[1ex]
  \mathsf{eq}^\mathtt{i}(\mathsf{book},\mathsf{book}) & \to &
  \tuple{\mathsf{true},\beta_4} \hspace{11ex} 
  \mathsf{eq}^\mathtt{i}(\mathsf{dvd},\mathsf{dvd}) \> \to \>
  \tuple{\mathsf{true},\beta_5} \\ 
  \mathsf{eq}^\mathtt{i}(\mathsf{book},\mathsf{dvd}) & \to &
  \tuple{\mathsf{false},\beta_6} \hspace{9.5ex} 
  \mathsf{eq}^\mathtt{i}(\mathsf{dvd},\mathsf{book}) \> \to \>
  \tuple{\mathsf{false},\beta_7} \\[1ex] 
  \mathsf{val}^\mathtt{i}(\mathsf{r}(t,v)) & \to & \tuple{v,\beta_8(t)}\\ 
\end{array}
\]
Finally, inversion returns the following \pc\
$\cR_b=\mathbf{I}(\cR_f)$:
\[
\begin{array}{@{}r@{~}c@{~}l@{}}
  \mathsf{view}^{-1}\!(\mathsf{nil},\beta_1(t)) & \to & \tuple{t,\mathsf{nil}} \\
  \mathsf{view}^{-1}\! (p:r,\beta_2(w_1,w_2,w_3)) & \to &
  \tuple{t,\mathsf{r}(t',v):rs} \\
  &&\mbox{}\hspace{-30ex}\Leftarrow \mathsf{eq}^{-1}\! (\mathsf{true},w_1) \tto \tuple{t,t'},
  \mathsf{val}^{-1}\! (p,w_2)\tto \tuple{\mathsf{r}(t',v) },\mathsf{view}^{-1}\! (r,w_3)\tto \tuple{t,rs}\\
  \mathsf{view}^{-1}\! (r,\beta_3(v,w_1,w_2)) & \to &
  \tuple{t,\mathsf{r}(t',v):rs } \\
  &&\mbox{}\hspace{-10ex}\Leftarrow
  \mathsf{eq}^{-1}\! (\mathsf{false},w_1) \tto \tuple{t,t'},\mathsf{view}^{-1}\! (r,w_2)\tto \tuple{t,rs }\\[1ex]
  \mathsf{eq}^{-1}\! (\mathsf{true},\beta_4) & \to &
  \tuple{\mathsf{book},\mathsf{book}} \hspace{2.5ex} 
  \mathsf{eq}^{-1}\! (\mathsf{true},\beta_5) \> \to \>
  \tuple{\mathsf{dvd},\mathsf{dvd}} \\ 
  \mathsf{eq}^{-1}(\mathsf{false},\beta_6) & \to &
  \tuple{\mathsf{book},\mathsf{dvd}} \hspace{3.5ex} 
  \mathsf{eq}^{-1}\! (\mathsf{false},\beta_7) \> \to \>
  \tuple{\mathsf{dvd},\mathsf{book}} \\[1ex] 
  \mathsf{val}^{-1}\! (v,\beta_8(t)) & \to & \tuple{\mathsf{r}(t,v) }\\ 
\end{array}
\]
For instance, the term
$\mathsf{view}(\mathsf{book},[\mathsf{r}(\mathsf{book},\mathsf{12}),
\mathsf{r}(\mathsf{dvd},\mathsf{24})])$, reduces to $[\mathsf{12}]$ in
the original system $\cR$. Given a modified view, e.g.,
$[\mathsf{15}]$, we can compute the modified source using function
$\mathsf{upd}$ above:
\[
\mathsf{upd}([\mathsf{r}(\mathsf{book},\mathsf{12}),
\mathsf{r}(\mathsf{dvd},\mathsf{24})],~[\mathsf{15}])
\]
Here, we have the following subcomputations:\footnote{Note that, in
  this case, the function $\mathsf{view}$ requires not only the source
  but also the additional parameter $\mathsf{book}$.}
\[
\begin{array}{l}
\mathsf{view}^\mathtt{i}(\mathsf{book},[\mathsf{r}(\mathsf{book},\mathsf{12}),
\mathsf{r}(\mathsf{dvd},\mathsf{24})])\\
\hspace{20ex}\to_{\cR_f}
\tuple{[\mathsf{12}],\beta_2(\beta_4,\beta_8(\mathsf{book}),
  \beta_3(\mathsf{24},\beta_6,\beta_1(\mathsf{book})))}\\
\mathsf{view}^{-1}\! ([\mathsf{15}],\beta_2(\beta_4,\beta_8(\mathsf{book}),
  \beta_3(\mathsf{24},\beta_6,\beta_1(\mathsf{book})))) \\
\hspace{20ex}\to_{\cR_b}
\tuple{\mathsf{book},[\mathsf{r}(\mathsf{book},\mathsf{15}),
\mathsf{r}(\mathsf{dvd},\mathsf{24})]}
\end{array}
\]
Thus 
$\mathsf{upd}$ returns the updated source
$[\mathsf{r}(\mathsf{book},\mathsf{15}),
\mathsf{r}(\mathsf{dvd},\mathsf{24})]$, as
expected.
We note that the considered example cannot be transformed using the
technique in \cite{MHNHT07}, the closer to our approach, since the
right-hand sides of some rules contain functions which are not
\emph{treeless}.\footnote{A call is \emph{treeless} if it has the form
  $\mathsf{f}(x_1,\ldots,x_n)$ and $x_1,\ldots,x_n$ are different
  variables.} Nevertheless, one could consider a transformation from
\pc\ to functional programs with treeless functions so that the
technique in \cite{MHNHT07} becomes applicable.

Our approach can solve a view-update problem as long as the view
function can be encoded in a \pc. When this is the case, the results
from Section~\ref{sec:transf} guarantee that function $\mathsf{upd}$
is well defined. Formally analyzing the class of view functions that
can be represented with a \pc\ is an interesting topic for further
research.

\section{Related Work} \label{sec:relwork}

There is no widely accepted notion of reversible computing. In this
work, we have considered one of its most popular definitions,
according to which a computation principle is reversible if there is a
method to \emph{undo} a (forward) computation. Moreover, we expect to
get back to an \emph{exact} past state of the computation. This is
often referred to as \emph{full reversibility}.

As we have mentioned in the introduction, some of the most promising
applications of reversibility include cellular automata \cite{Mor12},
bidirectional program transformation \cite{MHNHT07}, already discussed
in Section~\ref{sec:applications}, reversible debugging \cite{GLM14},
where the ability to go both forward and backward when seeking the
cause of an error can be very useful for the programmer, parallel
discrete event simulation \cite{SJBOQ15}, where reversibility is used
to undo the effects of speculative computations made on a wrong
assumption, quantum computing \cite{Yam14}, where all computations
should be reversible, and so forth.
The interested reader can find detailed surveys in the \emph{state of
  the art} reports of the different working groups of COST Action
IC1405 on Reversible Computation \cite{COST}. 

Intuitively speaking, there are two broad approaches to reversibility
from a programming language perspective: 
\begin{description}
\item[\it Reversible programming languages.] In this case, all
  constructs of the programming language are reversible. One of the
  most popular languages within the first approach is the reversible
  (imperative) language Janus \cite{LD86}. The language was recently
  rediscovered \cite{YAG08b,YAG16,YG07} and has since been formalized
  and further developed.
\item[\it Irreversible programming languages and Landauer's
  embedding.] Alternatively, one can consider an irreversible
  programming language, and enhance the states with some additional
  information (typically, the \emph{history} of the computation so
  far) so that computations become reversible. This is called
  \emph{Landauer's embedding}.
\end{description}
In this work, we consider reversibility in the context of term
rewriting. To the best of our knowledge, we have presented the first
approach to reversibility in term rewriting. A closest approach was
introduced by Abramsky in the context of pattern matching automata
\cite{Abr05}, though his developments could easily be applied to
rewrite systems as well. In Abramsky's approach,
\emph{biorthogonality} was required to ensure reversibility, which
would be a very significant restriction for term rewriting
systems. Basically, biorthogonality requires that, for every pair of
(different) rewrite rules $l\to r$ and $l' \to r'$, $l$ and $l'$ do
not \emph{overlap} (roughly, they do not unify) and $r$ and $r'$ do
not overlap too. Trivially, the functions of a biorthogonal system are
injective and, thus, computations are reversible without the need of a
Landauer embedding. Therefore, Abramsky's work is aimed at defining a
reversible language, in contrast to our approach that is based on
defining a Landauer embedding for standard term rewriting and a
general class of rewrite systems.

Defining a Landauer embedding in order to make a computation mechanism
reversible has been applied in different contexts and computational models, e.g., a
probabilistic guarded command language \cite{Zul01}, a low level
virtual machine \cite{SLZ10}, the call-by-name lambda calculus
\cite{Hue96,Klu99}, cellular automata \cite{Tof77,Mor95}, combinatory
logic \cite{PHW06}, a flowchart language \cite{YAG16}, etc.

In the context of declarative languages, we find the work by Mu
\emph{et al.}~\cite{MHT04}, where a relational reversible language is
presented (in the context of bidirectional programming). A similar
approach was then introduced by Matsuda \emph{et
  al.}~\cite{MHNHT07,MHNHT09} in the context of functional programs
and bidirectional transformation. The functional programs considered
in \cite{MHNHT07} can be seen as linear and \emph{right-treeless}%
\footnote{There are no nested defined symbols in the right-hand sides,
  and, moreover, any term rooted by a defined function in the
  right-hand sides can only take different variables as its proper
  subterms.}  constructor TRSs. The class of functional programs is
more general in \cite{MHNHT09}, which would correspond to
left-linear, right-treeless TRSs.  The reversibilization technique of
\cite{MHNHT07,MHNHT09} includes both an injectivization stage (by
introducing a \emph{view complement} function) and an inversion
stage. These methods are closely related to the transformations of
injectivization and inversion that we have presented in
Section~\ref{sec:transf}, although we developed them from a rather
different starting point. Moreover, their methods for injectivization
and inversion consider a more restricted class of systems than those
considered in this paper. On the other hand, they apply a number of
analyses to improve the result, which explains the smaller traces in
their approach.  All in all, we consider that our approach gives
better insights to understand the need for some of the requirements of
the program transformations and the class of considered programs. For
instance, most of our requirements come from the need to remove
programs positions from the traces, as shown in Section~\ref{sec:irr}.

Finally, \cite{TA15} considers the reversible language RFUN. Similarly
to Janus, computations in RFUN are reversible without the need of a
Landauer embedding. The paper also presents a transformation from a
simple (irreversible) functional language, FUN, to RFUN, in order to
highlight how irreversibilities are handled in RFUN. The
transformation has some similarities with both the approach of
\cite{MHNHT07} and our improved transformation in
Section~\ref{sec:improved}; on the other hand, though, \cite{TA15}
also applies the Bennett \emph{trick} \cite{Ben73} in order to avoid
some unnecessary information.

\section{Discussion and Future Work} \label{sec:conclusion}

In this paper, we have introduced a reversible extension of term
rewriting. In order to keep our approach as general as possible, we
have initially considered DCTRSs as input systems, and proved the
soundness and reversibility of our extension of rewriting. Then, in
order to introduce a reversibilization transformation for these
systems, we have also presented a transformation from DCTRSs to pure
constructor systems (\pcs) which is correct for constructor
reduction. A further improvement is presented for injective functions,
which may have a significant impact in memory usage in some
cases. Finally, we have successfully applied our approach in the
context of bidirectional program transformation.

We have developed a prototype implementation of the reversibilization
transformations introduced in Section~\ref{sec:transf}.  The tool can
read an input TRS file (format \texttt{.trs} \cite{termcom}) and then
it applies in a sequential way the following transformations:
flattening, simplification of constructor conditions, injectivization,
and inversion. The tool prints out the CTRSs obtained at each
transformation step. It is publicly available through a web interface
from \texttt{http://kaz.dsic.upv.es/rev-rewriting.html}, where we have included
a number of examples to easily test the tool.

As for future work, we plan to investigate new methods to further
reduce the size of the traces. In particular, we find it interesting
to define a reachability analysis for DCTRSs. A reachability analysis
for CTRSs without extra-variables (1-CTRSs) can be found in
\cite{FG03}, but the extension to deal with extra-variables in DCTRSs
(since a DCTRS is a particular case of 3-CTRS) seems
challenging. Furthermore, as mentioned in the paper, a completion
procedure to add \emph{default} cases to some functions (as suggested
in Section~\ref{sec:injectivization}) may help to broaden the
applicability of the technique and avoid the restriction to
constructor reduction. Finally, our injectivization and inversion
transformations are correct w.r.t.\ innermost reduction. Extending our
results to a lazy strategy is also an interesting topic for further
research.

\subsection*{Acknowledgements}

We thank the anonymous reviewers for their useful comments and suggestions to improve
this paper.


\end{document}